\documentclass[10pt]{elsarticle}
\usepackage{amssymb,amsmath}
\usepackage{array,multirow,graphicx}
\usepackage{enumitem}
\usepackage{bbm}
\usepackage{dirtytalk}
\usepackage{mathtools}
\usepackage[margin=1in,papersize={8.5in,11in}]{geometry}
\usepackage[utf8]{inputenc}
\usepackage[english]{babel}
\usepackage{amsthm}
\usepackage{float}
\usepackage{amsfonts,mathrsfs}
\usepackage{bm}
\usepackage{mathrsfs}
\usepackage{makecell}
\usepackage{tikz}
\usetikzlibrary{matrix}
\newcommand{\STAB}[1]{\begin{tabular}{@{}c@{}}#1\end{tabular}}
\newtheorem{assumption}{Assumption}

\usepackage{epstopdf}
\epstopdfsetup{update} 
\usepackage{subcaption}
\usepackage{tablefootnote}
\graphicspath{ {images/} }
\usepackage{comment}

\newtheorem{theorem}{Theorem}

\newtheorem{prop}{Proposition}
\theoremstyle{definition}


\def\Z{\mathbb{Z}}
\def\R{\mathbb{R}}
\def\H{\mathcal{H}}

\def\T{\mathbb{T}}

\def\Tr{\overrightarrow{\mathcal{T}}}
\def\W{\mathcal{W}}
\def\A{\mathcal{A}}
\def\B{\mathcal{B}}
\def\D{\mathcal{D}} 
\def\S{\mathcal{S}} 
\def\L{\mathcal{L}} 
\def\J{\mathcal{J}} 
\def\M{\mathcal{M}}
\def\N{\mathcal{N}}

\def\la{\langle} 
\def\ra{\rangle_{\rho}}

\def\P{\mathcal{P}}
 
\def\X{\mathcal{X}}

\def\Q{\mathcal{Q}}
\def\E{\mathcal{E}}

\def\G{\mathcal{G}}
\def\K{\mathcal{K}} 
\def\I{\mathcal{I}}

\def\Pi{\mathbf{P}}

\journal{arXiv}

\begin{document}
\begin{frontmatter}
\title{Nonequilibrium statistical mechanics for stationary turbulent dispersion}
\author[ucm]{Yuanran Zhu\corref{correspondingAuthor}}
\cortext[correspondingAuthor]{Corresponding author}
\ead{yzhu56@ucmerced.edu}
\address[ucm]{Department of Applied Mathematics, University of California, Merced\\ Merced (CA) 95343}
 
\begin{abstract}
We propose a unified framework to study the turbulent transport problem from the perspective of nonequilibrium statistical mechanics. By combining Krarichnan's turbulence thermalization assumption and Ruelle's recent work on nonequilibrium statistical mechanics settings for fluids, we show that the equation for viscous fluid can be viewed as the non-canonical Hamiltonian system perturbed by different thermostats. This allows an analogy between the viscous fluid and the  nonequilibrium heat conduction model where the Fourier modes can be regarded as the ``particles''. With this framework, we reformulate the dispersion of Lagrangian particles in turbulence as a nonequilibrium transport problem. We also derive the first and the second generalized fluctuation-dissipation relations for the Lagrangian particle using respectively the path-integral technique and the Mori-Zwanzig equation. The obtained theoretical results can be used predict the dispersion of the Lagrangian particle in a general nonequilibrium. 
\end{abstract}
\end{frontmatter}

\section{Introduction}
As Feynman remarks: \say{Turbulence is the most important unsolved problem of classical physics.} The importance stems from the fact that turbulent phenomena exists universally in our daily life wherever there is fluid with low viscosity. Mathematically, macroscopic-scale turbulence phenomena can be well-described by the Navier-Stokes (NS) equation with high Reynolds number. With the development of computational mathematics and physics, modern computing machinery enables us to numerically solve the Naiver-Stokes equation and simulate a turbulent field with very high accuracy. Although it is still a challenging task to simulate turbulence in circumstances such as the flow in a tube with complex geometry, it is fair to say that comparing to problems such as what turbulence is and how to describe it, our theoretical understanding of its physical mechanism is still quite lacking.  

Having observed the complexity of the turbulent dynamics, since 1930s many pioneers of turbulent study such as G.I. Taylor and his followers started to introduce statistical tools such as the stochastic processes and correlation functions to analyze the dynamical behaviour of turbulence. Up to now, this is still a very active research direction in fluid mechanics which can be termed as the {\em statistical fluid mechanics} \cite{monin2013statistical,pope2001turbulent}. Roughly speaking, the statistical fluid mechanics views turbulence as a random field where the randomness comes from the chaotic nature of the Navier-Stokes equation and the universally existed perturbations in the initial conditions, boundary conditions and body forces.  
The methodology we are going to use, although shares many similarities with the statistical method, takes a slightly different path and may be called as the {\em statistical mechanics/physics} method for the study of turbulence. This approach also has a long history and was developed mainly by Kraichnan \cite{kraichnan1975statistical,kraichnan1970diffusion,kraichnan1977eulerian,kraichnan1980two} and Zakharov \cite{zakharov2012kolmogorov}. Different from the first viewpoint, the statistical mechanics approach starts with the thermalization assumption for the inviscid fluid. As a Hamiltonian system with high-dimensionality, the thermalization assumption states that the inviscid fluid will undergo a dynamical process and finally reach a thermal equilibrium which be characterized by a certain statistical ensemble. Combining with the ergodicity hypothesis, the ensemble average for any physical quantity $f$ can represent the value we obtained through experimental measurement. 

The above groundbreaking work sets the theoretical basis for the study of turbulence using statistical mechanics. The early success of this framework is Kraichnan's prediction of the inverse energy cascade for turbulence, which was proved later via numerical simulations \cite{kraichnan1980two}. Further explorations in this direction have revealed many unique features of the statistical mechanics for fluid systems. First, the evolution equation for the inviscid fluid normally is a non-canonical Hamiltonian system, i.e. there is no generalized coordinate $\{p,q\}$ and the phase space dimension can be odd. In fact, relevant mathematical studies \cite{arnold1999topological,morrison1998hamiltonian} have shown that fluid systems have the Poisson structure, which is more general than the symplectic structure of the canonical Hamiltonian systems commonly studied in condensed matters. The Poission structure is normally degenerate and leads to {\em additional} symmetries and invariants for fluid systems. The existence of these additional invariants, known as the Casimir invariants, changes the statistical mechanical setting for fluids in a fundamental way. Particularly, most inviscid fluid systems have Casimir invariants \cite{morrison1998hamiltonian} which have to be included in the postulated statistical equilibrium measure, otherwise unphysical results will emerge \cite{kraichnan1980two}. Another development is study of the viscous fluid via nonequilibrium statistical mechanics. Due to the existence of the dissipative force, the viscous fluid is a typical {\em nonequilibrium} system. The work of Gallavotti et al. \cite{gallavotti1996chaotic,gallavotti1995dynamical} and Ruelle \cite{ruelle1999smooth,ruelle1978measure} discussed systematically the possibility to define a proper probability measure, now known as the Sinai-Ruelle-Bowen (SRB) measure, for such a nonequilibrium system. The connection between the fluctuation theorem with the entropy production rate was also discussed by Gallavotti et al. \cite{gallavotti1996chaotic,gallavotti1995dynamical}. In particular, they started to notice the similarities between the viscous fluid equations and some nonequilibrium models used in the statistical mechanics for molecular systems. The recent papers of Ruelle \cite{ruelle2012hydrodynamic,ruelle2014non} further show that the turbulent flow can be regarded as a heat flow problem along a chain of mechanical systems describing units of fluid of smaller and smaller spatial extent. These ideas are the direct inspiration of our work.

In this paper, we follow the line of Kraichnan-Gallavotti-Ruelle and further discuss the nonequilibrium statistical mechanics for fluids. Different from the interest of Gallavotti and Ruelle, which is the deterministic NS equation, our main concentration would be the {\em stochastic Navier-Stokes} (SNS) equation, which is the typical fluid equation driven by Gaussian white noise. The SNS equation has its own usage in the modeling of turbulence. Its mathematical difference with respect to the regular NS equation with the deterministic dissipative force $\frac{1}{Re}\Delta u$ is important for our research. Comparing to the later, the analysis for the SNS equation is relatively simpler because one can bypass the complicated discussions of the chaotic hypothesis and the Axiom A, Anosov systems which are required for the construction of the SRB measure \cite{ruelle1999smooth,gallavotti1996chaotic,young2002srb}, and use the stochastic analysis techniques to rigorously {\em prove} the existence, uniqueness and smoothness of the equilibrium and nonequilibrium steady state (NESS) measure \cite{weinan2001gibbsian,weinan2001ergodicity}. Moreover, the ergodicity of the system can be obtained using the Lyapunov function methods \cite{weinan2001ergodicity,bedrossian2019almost,mattingly2002ergodicity,hairer2006ergodicity,hairer2008spectral} for the stochastic fluid system. All these benefits enable us to get several {\em mathematically rigorous and numerically computable} results for the turbulent flow. Our main findings are as follows: (I) In the Fourier mode space, the viscous fluid equation can be regarded as a heat conduction model where the Fourier modes $u_k$ can be viewed as ``particles'' and the averaged kinetic energy $\langle u_k^2\rangle$ can be interpreted as the ``temperature''. (II) The Lagrangian particle dispersion in turbulence can be reformulated as a problem of nonequilibrium transport, which is similar to the self-diffusion of particles in a nonequilibrium molecular system. (III) The above analogies allow a heuristic derivation of the generalized fluctuation-dissipation relations (FDRs) for observables in the fluid system, in particular, for the Lagrangian particle. To this end, we use the result which are recently developed for the nonequilibrium molecular systems  \cite{baiesi2009fluctuations,baiesi2009nonequilibrium,baiesi2010nonequilibrium,caprini2021generalized,marconi2008fluctuation,zhu2021hypoellipticity3,agarwal1972fluctuation,majda2002statistical,majda2005information,zhu2021hypoellipticity3}. The heuristic derivation is later justified with rigorous mathematical proofs. In the end, we obtain a nonequilibrium version of Kubo's fomula in the sense of path integral, which can be used to predict the dispersion of the Lagrangian particle. Moreover, a generalized second FDR is established using the Mori-Zwanzig formalism, which leads to an effective model for the Lagrangian particle in terms of the generalized Langevin equation (GLE).  

The whole paper is organized as follows. Section 2 provides a short preview of the aforementioned theoretical results (I)-(III). In Section \ref{sec:Noneq_turbulence}, we first review Kraichnan's basic setting for the equilibrium statistical mechanics of turbulence and its generalization for arbitrary inviscid fluids. Secondly, we introduce our interpretation of the viscous fluid as the nonequilibrium heat conduction model in the Fourier mode space. This idea is heavily motivated by Ruelle \cite{ruelle2012hydrodynamic,ruelle2014non} and we will explain the different emphasis of our framework. In Section \ref{sec:Turb_disper_setting}, we reformulate the turbulent dispersion problem in the framework of nonequilibrium statistical mechanics and discuss its major difference with the transport in molecular systems. In Section \ref{sec:FDT}, we use the path-integral technique and the Mori-Zwanzig equation to derive respectively the first and the second generalized FDR to the turbulent transport problem. The main findings of this paper are summarized in Section \ref{sec:summary}.

\section{Brief summary of the main result}
\label{sec:sum_brief}
In this section, we briefly summarize the theoretical results obtained in this paper. Consider a stochastic Navier-Stokes (SNS) equation for the incompressible viscous fluid in the Fourier mode space:
\begin{align}\label{intro_SNS}
        du_k=B(u,u)dt-\nu|k|^2u_kdt+\sigma_kd\W_k(t),
\end{align}
where $u_k$ is the $k$-th Fourier mode of the Eulerian velocity field $u=u(x,t)$, $B(u,u)$ is a bilinear term which comes from the advection $u\cdot\nabla u$, $\nu$ is the dynamical viscosity and $\sigma_kd\W_k(t)$ is the added stochastic force. Note that the pressure term $-\nabla p$ is eliminated from the equation by using the incompressible condition $\nabla\cdot u=0$. Our aim is to use the nonequilibrium statistical mechanics to study the transport phenomena associated with the SNS equation \eqref{intro_SNS}, such as the Lagrangian particle dispersion.   

\paragraph{\textbf{Heat conduction interpretation of turbulence}} In Ruelle's recent papers \cite{ruelle2012hydrodynamic,ruelle2014non}, the deterministic Navier-Stokes (NS) equation was interpreted as the heat conduction model where the heat flows along a chain of mechanical systems describing units of fluid of smaller and smaller spatial grids. The physical picture is the same for SNS equation \eqref{intro_SNS} in the Fourier mode space. As a comparison, a commonly used heat conduction model used in statistical physics is given by the following stochastic differential equation (SDE):
\begin{align}\label{intro_compare}
\begin{dcases}
\begin{aligned}
dq_i&=p_idt\\
dp_i&=-\partial_{q_i}H dt-\gamma_ip_idt+\sqrt{2k_BT_i\gamma_i}d\W_i(t),
\end{aligned}
\end{dcases}
\end{align}
where a mechanical Hamiltonian system $H$ is in contact with Langevin thermostats with temperatures $T_i$, which are modeled by stochastic force $\sqrt{2k_BT_i\gamma_i}d\W_i(t)$. If $T_i$ does not equal to each other, the system will be in the nonequilibrium and there exists heat current flowing from the high-temperature thermostats to the low-temperature thermostats. By observing intuitively the similarity between the equation of motions \eqref{intro_SNS}-\eqref{intro_compare}, we note that Ruelle's idea can be re-formulated in the frequency domain, where the Fourier modes $u_k$ can be understood as the ``particles'' and the stochastic forces which drive the system from the equilibrium can be regarded as the thermostats. In the dynamical process of the fluid, the kinetic energies are transferred between Fourier modes $u_k$ hence defines the ``heat'' flows.  Similar analogies can be made between the heat conduction model with the isokinetic thermostats and the deterministic NS equation without forcing, and the heat conduction model with the Nos\'e-Hoover thermostats and the NS equation driven by deterministic forces. 

\paragraph{\textbf{Turbulent dispersion as a transport problem}}The above observations hint us to deal with the transport problems of turbulence using the standard tools of nonequilibrium statistical mechanics. For the heat conduction model \eqref{intro_compare}, we normally choose the averaged heat flux $J(p(t),q(t))$ as the observable to quantify the intensity of the heat transport. Specifically, $J(p(t),q(t))$ can be written as a function of the microscopic coordinate $\{p(t),q(t)\}$ as 
\begin{align*}
    J(p(t),q(t))=\frac{1}{N}\sum_jp_jV'(q_{j+1}-q_j),
\end{align*}
where $V(q_{j+1}-q_j)$ is the interactive potential energy between neighbourhood particles. The scenario is slightly different for the turbulent transport. Since Eqn \eqref{intro_SNS} is a model of the velocity field from the {\em Eulerian} description, it is difficult to write the {\em Lagrangian} observables explicitly using the phase space coordinate $\{u_k\}$. To study the dispersion of the Lagrangian particle in the turbulence, we are lead to study an {\em extended} SNS system:
\begin{equation}\label{intro_SNS_Extend}
\begin{dcases}
\begin{aligned}
        du_k&=B(u,u)dt-\nu|k|^2u_kdt+\sigma_kd\W_k(t)\\
        dX_i&=u_i(X(t),t)+\kappa_i d\D_i(t),
\end{aligned}
\end{dcases}
\end{equation}
where $X_i(t)$ is the position of a characteristic Lagrangian particle in $i$-th axis, $u_i(X(t),t)$ is the Eulerian velocity at space-time location $(X(t),t)$ and $\kappa_i$ is the molecular diffusivity of the Lagrangian particle.

\paragraph{\textbf{Generalized FDRs}}Different from the heat conduction model where the {\em equilibrium} time auto-correlation function of the heat flux can be used to calculate the conductivity in the {\em near-equilibrium} regime \cite{lepri2003thermal}, real world turbulence is almost always {\em far from the equilibrium} because of the existence of the dissipative force, which leads to a general failure of the Green-Kubo formula. Hence, inevitably one is lead to develop/use the {\em nonequilibrium} version FDRs to study the turbulent transport. To this end, we noticed the recently developed path-integral-form first FDRs for nonequilibrium systems \cite{baiesi2009fluctuations,baiesi2009nonequilibrium,baiesi2010nonequilibrium,maes2020response}. To sum up the theoretical result obtained therein, we found that the path-integral-first FDR actually yields a nonlinear response formula for observable $O(t)$ in the nonequilibrium
\footnote{We note the interesting similarity between Eqn \eqref{intro_entropy_prod_FDR} and Crooks fluctuation theorem \cite{crooks1999entropy} of the form:
\begin{align}\label{intro_crooks_FT}
\frac{P[x(t)]}{P[\tilde{x}(t)]}=e^{\sigma[x(t)] t},
\end{align}
where $P[x(t)]$ and $P[\tilde{x}(t)]$ are respectively the probability of the time-forward trajectory and its reversal. $\sigma[x(t)]$ is the entropy production rate. In fact, from the derivation of \eqref{intro_entropy_prod_FDR} we are going to present in Section \ref{sec:1st_FDR_path}, a functional Radon-Nikodym derivative (Formula \eqref{Radon-Nikodym_derivative}) similar to the form of \eqref{intro_crooks_FT} is the key relation which leads to \eqref{intro_entropy_prod_FDR}. From this point of view, the relationship between Crooks fluctuation theorem \eqref{intro_crooks_FT} and the FDR \eqref{intro_entropy_prod_FDR} can be interpreted as a manifestation of Onsager's regression hypothesis in the nonequilibrium. 
}:
\begin{align}\label{intro_entropy_prod_FDR}
    \langle O(t)\rangle_{\rho_{\delta}}=\left\langle  e^{\frac{1}{k_B}\int_0^t\delta\E(s)ds}O(t)\right\rangle_{\rho},
\end{align}
where $k_B$ is the Boltzmann constant, $\delta\E(s)$ is the change (brought by the perturbation) of the entropy production rate for a system transient from nonequilibrium state $\rho$ to another nonequilibrium state $\rho_{\delta}$. Linear response version of \eqref{intro_entropy_prod_FDR} were derived for the heat conduction model \eqref{intro_compare} in \cite{maes2020response}. Naturally, the aforementioned analogy between the heat conduction model and the SNS equation \eqref{intro_SNS},\eqref{intro_SNS_Extend} allows a heuristically derivation of a path-integral-form first FDR for the fluid observable $O(t)$. For instance, if we take $O(t)=X(t)$ as the position of the Lagrangian particle and consider the fluid viscosity perturbation $\nu'=\nu+\delta\mu$ for a 2D SNS equation, i.e. 
\begin{align}\label{intro_FDR_perturbation}
    \begin{dcases}
    d\alpha_l&=F_{l}(\{\alpha_l\},\{\beta_l\})dt-\nu|l|^2\alpha_ldt+\sigma_ld\B_l\\
    d\beta_l&=G_{l}(\{\alpha_l\},\{\beta_l\})dt-\nu|l|^2\beta_ldt+\gamma_ld\W_l
    \end{dcases}
    \quad
\xRightarrow[]{+\delta\mu\Delta \omega}
\quad
    \begin{dcases}
    d\alpha_l&=F_{l}(\{\alpha_l\},\{\beta_l\})dt-\nu'|l|^2\alpha_ldt+\sigma_ld\B_l\\
    d\beta_l&=G_{l}(\{\alpha_l\},\{\beta_l\})dt-\nu'|l|^2\beta_ldt+\gamma_ld\W_l,
    \end{dcases}
\end{align}
where Eqns in \eqref{intro_FDR_perturbation} are an equivalent expression of the turbulent flow using the vorticity field:
\begin{align*}
\omega(x,t)=\sum_{k}\alpha_k\cos(k\cdot x)+\beta_k\sin(k\cdot x),
\end{align*}
and $F_{l},G_l$ are bilinear functions of $\alpha_k$ and $\beta_k$. Then by using the extended dynamics which is similar to \eqref{intro_SNS_Extend}, we obtain the following nonlinear response formula\footnote{For turbulence, $\delta\E(s)$ has no physically meaningful unit hence we do have not normalization constant $1/k_B$ here.}:
\begin{align}\label{intro_entropy_prod_FDR_fluid}
    \langle X_i(t)\rangle_{\rho_{\delta}}=\left\langle  \N(t)X_i(t)\right\rangle_{\rho}=\left\langle  e^{\int_0^t\delta\E(s)ds}X_i(t)\right\rangle_{\rho},
\end{align}
where $\langle\cdot\rangle_{\rho_{\delta}}$ and $\langle\cdot\rangle_{\rho}$ are, respectively, the path-space ensemble average with respect to the steady state distribution of the perturbed and non-perturbed SNS equation in \eqref{intro_FDR_perturbation}. In the linear response regime, we get a nonequilibrium extension of Kubo's formula:
\begin{align}\label{intro_heat_linear_response}
\langle X_i(t)\rangle_{\rho_{\delta}}-\langle X_i(t)\rangle_{\rho}&=\int_0^t \langle \mathcal{R}(s)X_i(t)\rangle_{\rho}ds.
\end{align}
The response function $\mathcal{R}(s)$ in \eqref{intro_heat_linear_response} is explicitly given by:
\begin{align}
\mathcal{R}(s)=\sum_{l}\bigg(\frac{\delta\mu\nu|l|^4}{\sigma_l^2}-\frac{\delta\mu|l|^2}{2\langle\alpha_l^2\rangle_{\rho}}\bigg)\alpha_l^2(s)
+
\left(\frac{\delta\mu\nu|l|^4}{\gamma_l^2}-\frac{\delta\mu|l|^2}{2\langle\beta_l^2\rangle_{\rho}}\right)\beta_l^2(s)
-\frac{\delta\mu|l|^2}{\sigma_l}\alpha_l(s)B_l(s)
-\frac{\delta\mu|l|^2}{\gamma_l}\beta_l(s)W_l(s),
\end{align}
where $d\B_l(t)=B_l(t)dt$ and $d\W_l(t)=W_l(t)dt$. Response formulas \eqref{intro_entropy_prod_FDR_fluid}-\eqref{intro_heat_linear_response} are proved to hold for the equilibrium SNS equation and the nonequilibrium SNS equation driven by the {\em non-degenerate} white noise. The key technique we use is the exact probability density functional for SDEs \eqref{intro_FDR_perturbation}  and a functional Radon-Nikodym derivative, which are obtained by Stratonovich in \cite{moss1989noise} via path integrals. The first FDR is a nonequilibrium response result which can be applied to study turbulent dispersion via numerical simulations of the unperturbed dynamics. Namely, by averaging the sample paths evolving from the NESS state of the unperturbed SNS equation, one can evaluate the response $\langle X_i(t)\rangle_{\rho_{\delta}}$ as:
\begin{align}\label{intro_ensemble_average}
\langle X_i(t)\rangle_{\rho_{\delta}}\approx\frac{1}{K}\sum_{k=1}^K X_{i,k}(t)+\frac{1}{K}\sum_{k=1}^K\int_0^t \mathcal{R}_{,k}(s)X_{i,k}(t)ds,
\end{align}
where $\mathcal{R}_{,k}(s),X_{i,k}(t)$ are the $k$-th sample of the response function $\mathcal{R}(s)$ and the observable $X_i(t)$ respectively. In addition, we can derive a tentative linear response formula for the {\em deterministic} turbulence using a singular perturbation argument. A comparison of the current framework with the {\em renormalized perturbation theory} mainly developed by Kraichnan is presented in Section \ref{sec:compare}. We particularly emphasize the vital differences in terms of the research targets  and the adopted methodology.

The path-integral-form first FDR can be characterized as a result which uses the information of a NESS to obtain the information of the perturbed NESS (see Figure \ref{fig:perturbation}). From the formal expression \eqref{intro_heat_linear_response}, we see that it gives the moment information such as the mean, variance (choosing the observable to be $X_i^2(t)$) of the Lagrangian particle $X(t)$. In applications, it is often constructive to directly approximate/simulate the sample paths of $X(t)$. This is known as the reduced-order modeling problem for turbulent dispersion \cite{pope2001turbulent}. The Mori-Zwanzig(MZ) formalism provides a systematical way to build such a model from the underlying (S)NS equation. Using the MZ framework here, we are lead to an {\em exact} reduced-order model for $X(t)$ in terms of the the generalized Langevin equation (GLE):
\begin{align}\label{intro_gle}
    \frac{d}{dt}X (t)=\Omega X(t)+\int_0^tM(t-s)X(s)ds +f(t),
\end{align}
where the memory kernel $K(t)$ can be obtained, e.g. through the second-order moment (time autocorrelation function) of $X(t)$ which can be obtained via Kubo's formula \eqref{intro_heat_linear_response}. However, the above equation is still unclosed since the dynamics of $X(t)$ depends also on the fluctuation force $f(t)$. This is resolved by the generalized second FDR proved in Section \ref{sec:2nd_FDR} which relates $M(t)$ and $f(t)$ as:
\begin{align}\label{intro-general_2nd_FDT}
M_{ij}(t)=\sum_{k=1}^2 G^{-1}_{jk}\left(-\langle f_{k}(0), f_i(t)\rangle_{\rho}+\langle \kappa_{k}^2\partial_{X_k}(
\ln\rho), f_i(t)\rangle_{\rho}\right),\qquad 1\leq i,j\leq 2,
\end{align}
where the meaning of each term will be explained later in Section \ref{sec:2nd_FDR}. As a comparison, the GLE \eqref{intro_gle} can be viewed as a reduced-order model based on the {\em first principle} since it is {\em derived} from the underlying (S)NS equation, while the well-established PDF method by Pope \cite{pope2001turbulent} is a phenomenological modeling of $X(t)$ using the random field assumption for the turbulent flow.

\section{Statistical mechanics for turbulence}
\label{sec:Noneq_turbulence}
\subsection{Equilibrium statistical mechanics for inviscid fluids} 
\label{sec:Equi_turbulence}
The classical statistical mechanics considers a $2N$-dimensional Hamiltonian system with canonical coordinate $\{p, q\}$. Such system satisfies the Hamiltonian equation of motion:
\begin{align}\label{eqn:can_Hamiltonian}
    \frac{dr}{dt}=[r,H]=J\nabla_{r}H,
\end{align}
where $r=\{p,q\}$, $H=H(p,q)$ is the Hamiltonian of the system and $[\cdot,\cdot]$ is the Poisson bracket defined as
\begin{align*}
[f,g]=\sum_{i,j=1}^N\frac{\partial f}{\partial{r_i}}J_{ij}\frac{\partial g}{\partial{r_j}},
\qquad \text{where $J$}=\left[\begin{matrix}
  0 & I_N\\ 
  -I_N & 0
\end{matrix}\right] \quad \text{is symplectic}.
\end{align*}
The Hamiltonian equation \eqref{eqn:can_Hamiltonian} implies the detailed Liouville theorem for the canonical system:
\begin{align}\label{eqn:Liouville_thm_can}
    \sum_{i=1}^N(\partial \dot{q}_i/\partial q_i+\partial \dot{p}_i/\partial q_i)=0.
\end{align}
From which, one can show that the phase space volume $V=\int dpdq$ is invariant. The classical statistical mechanics are concerned with the conserved quantities $\{C_1(p,q),\cdots C_n(p,q)\}$ of the Hamiltonian system since the probability measure of the form: 
\begin{align}\label{Ham:invariant_mea}
    \mu=\frac{1}{Z}F(C_1,\cdots,C_n)dpdq
\end{align}
is a invariant measure of the system, where $F$ is a smooth function and $Z$ is the normalization constant known as the partition function. With the ergodicity assumption, we can calculate the averaged value of observables by evaluating the ensemble average with respect to the invariant measure \eqref{Ham:invariant_mea}. There are three ensembles of particular importance: the microcanonical ensemble, the canonical ensemble and the grand canonical ensemble. Here we only review the first two because they are most relevant with the following context. For isolated Hamiltonian system, all conserved quantities $\{C_1(p,q),\cdots C_n(p,q)\}$ stays in a thin shell of the phase space, therefore the microcanonical ensemble:
\begin{align}\label{invar_mirco}
    \mu_m=\frac{1}{Z_{m}}\prod_{i=1}^n\delta(C_i(p,q)-C_i)dpdq
\end{align}
is suitable for the study of statistics of observables. For closed systems interacting with a large thermal bath with a fixed temperature $T$, we use the canonical ensemble of the form:
\begin{align}\label{invar_cano}
    \mu_c=\frac{1}{Z_{c}}\exp\left\{-\beta\sum_{i=1}^nC_i(p,q)\right\}dpdq,
\end{align}
where $\beta\propto1/T$ and $T$ is known as the equilibrium temperature. In the most common scenarios, only the conserved energy $C_1=H(p,q)$ is included in the probability measure, which leads to the microcanonical ensemble $\delta(H(p,q)-E)/Z_m$ and canonical ensemble $e^{-\beta H(p,q)}/Z_c$. 

Many inviscid fluid systems such as the KdV equation, inviscid Burgers' and Euler equation are Hamiltonian systems. However, they are fundamentally different from Hamiltonian system \eqref{eqn:can_Hamiltonian} in two aspects. First, these systems are infinite-dimensional. Secondly, fluid Hamiltonian systems normally do not have canonical structure. To study the Hamiltonian properties of inviscid fluids, we are lead to introduce a infinite-dimensional, non-canonical Poisson bracket $\{\cdot,\cdot\}$ defined as \cite{morrison1998hamiltonian,bouchet2010invariant}:
\begin{align*}
    \{F,G\}=\int\sum_{i,j}\frac{\delta F}{\delta\phi_i}\J_{ij}\frac{\delta G}{\delta\phi_j}d\mu,
\end{align*}
where $F=F(\phi),G=G(\phi)$ are functional of a field $\phi=\phi(\mu,t)$. Here $\mu=(\mu_1,\cdots,\mu_n)$ is a Eulerian observation variable and $\J$ is a cosympletic operator. As a specific example, consider the Euler equation for the incompressible, inviscid fluid:
\begin{align}\label{eqn:Euler}
\begin{dcases}
    \partial_t u+u\cdot\nabla u+\nabla p=0\\
    \nabla\cdot u=0.
\end{dcases}
\end{align}
In \eqref{eqn:Euler}, $u=u(x,t)$ is the Eulerian velocity at time $t$ and position $x$, $p=p(x,t)$ denotes the pressure. For two-dimensional Euler equation, it is more convenient to work with the equivalent dynamics for the vorticity $\omega=\nabla\land u=\partial_{x_2}u_1-\partial_{x_1}u_2$. By imposing the periodic boundary condition in a torus $\T^2=[-\pi,\pi]\times[-\pi,\pi]$, \eqref{eqn:Euler} can be reformulated as 
\begin{align}\label{eqn:Euler_w}
\partial_t\omega+(u\cdot\nabla) \omega=0.
\end{align}
With these settings, Euler equation \eqref{eqn:Euler_w} can be written as a Hamiltonian equation of motion:
\begin{align*}
    \partial_t\omega=\{\omega,H\},\qquad \text{where $H$}= \frac{1}{2}\int_{\T^2}u^2dx,
\end{align*}
and the non-canonical Poisson bracket has explicit expression:
\begin{align*}
    \{F,G\}=\int_{\T^2}\omega\left[\frac{\delta F}{\delta\omega},\frac{\delta G}{\delta\omega}\right]dx, \qquad \text{with} \ \J=-[\omega,\cdot],
\end{align*}
where 
\begin{align*}
   [f,g]=\frac{\partial f}{\partial x_1}\frac{\partial g}{\partial x_2}-\frac{\partial f}{\partial x_2}\frac{\partial g}{\partial x_1}.
\end{align*}
For non-canonical Hamiltonian systems, the phase volume conservation result does not hold in general \cite{morrison1998hamiltonian,kraichnan1980two}. Mathematically speaking, the non-canonical Hamiltonian system has a degenerate Poisson structure instead of a symplectic structure \cite{morrison1998hamiltonian,arnold1999topological}. The degeneracy of the Poisson bracket implies that any phase space functional which satisfies $\{C,H\}$ is a conserved quantity. For inviscid fluid systems, this leads {\em additional} invariant functionals $\{C_i\}$ which are called as the Casimir invariants. For 2D Euler equation \eqref{eqn:Euler}, we have infinite many Casimir invariants which are given by
\begin{align*}
    C[\omega]=\int_{\T^2} F(\omega)dx,
\end{align*}
where $F(\omega)$ is an arbitrary smooth function of $\omega$. Common choice of $F$ for the statistical mechanics study is $F=\omega ^2$ which corresponds to the enstrophy. For other inviscid fluid systems such as the 1D compressible fluid equation:
\begin{equation}
\begin{aligned}
\begin{dcases}
    \partial_tu&=-u\partial_xu-\frac{1}{\rho}\partial_xp\\
    \partial_t\rho&=-\partial_x(\rho u)
\end{dcases}
\end{aligned}
\end{equation}
in $\T^2$. We can only find two Casimir invariant $C_1[u]=\int _{\T^2}udx$ and $C_2[u]=\int_{\T^2} \rho dx$. With the Hamiltonian and the additional Casimir invariants, we can define analogously the microcanonical and the canonical measure for inviscid fluids. As an example, for 2D Euler equation, we have:
\begin{align}
    \mu_{m}&=\frac{1}{Z_{m}}\delta(H[\omega]-H)\prod_{p=1}^{\infty}\delta(G_{2p}[\omega]-G_{2p})D[\omega] \qquad\qquad (\text{Microcanonical measure})\label{inv:Euler_micro}\\
     \mu_{c}&=\frac{1}{Z_{c}}\exp\left\{-\beta_0 H[\omega]+\sum_{k=1}^{\infty}\beta_{p}G_{2p}[\omega]\right\}D[\omega] \qquad\qquad (\text{Canonical measure})\label{inv:Euler_cano},
\end{align}
where $D[\omega]$ is a functional differential, $G_{2p}[\omega]=\int_{\T^2}\omega^{2p}dx$ are the Casimir invariants. $\{\beta_0,\beta_p\}\propto\{\frac{1}{T_{H}},\frac{1}{T_{G_{2p}}}\}$ and $\{\frac{1}{T_{H}},\frac{1}{T_{G_{2p}}}\}$ can be understood as the "generalized" temperature. Here we only consider countable many $G_{2p}[\omega]$. This is because the exact statistical equilibrium for the Euler equation is normally assumed to be a even function $F$. Using Taylor expansion, it can be equivalently represented using countable many $G_{2p}[\omega]$. In fact, with only one Casimir invariant $G_2[\omega]$, i.e. the enstrophy, the resulting energy-enstrophy measure defined as \eqref{inv:Euler_micro}-\eqref{inv:Euler_cano} would yield satisfactory approximations of the exact equilibrium state  \cite{kraichnan1980two}.  

The canonical equilibrium measure defined as \eqref{inv:Euler_cano} is mathematically pathological, i.e. it is impossible to find finite value of $\{\beta,\alpha_p\}$ such that the system's mean energy is finite. Similar phenomenon was observed in the field theory \cite{parisi1988statistical,Justin} and we can use a technique called regularization to remove the infinity. The regularization we will apply simply refer to the ultraviolet truncation of the velocity (or vorticity) field in the Fourier mode space. As a result, we can get the equilibrium measure for the finite system \footnote{The truncated energy $H=H[\omega_i]$ and the Casimir invariants $G_{2p}=G_{2p}[\omega_i]$ must be conserved for the truncated system.}:
\begin{align}
    \mu_{m,N}&=\frac{1}{Z_{m,N}}\delta(H[\omega_i]-H)\prod_{p=1}^{K}\delta(G_{2p}[\omega_i]-G_{2p})\prod_{i=1}^Nd\omega_i \qquad\quad (\text{Regularized microcanonical measure})\label{inv:Euler_micro_T}\\
     \mu_{c,N}&=\frac{1}{Z_{c,N}}\exp\left\{-\beta_N H[\omega_i]+\sum_{p=1}^{K}\alpha_{p,N}G_{2p}[\omega_i]\right\}\prod_{i=1}^Nd\omega_i \qquad\quad (\text{Regularized canonical measure})\label{inv:Euler_cano_T},
\end{align}
where $\omega_i$ is the $i$-th Fourier mode of $\omega$ and the parameter $\{\beta,\alpha_p\}$ is re-denoted as $\{\beta_{N},\alpha_{p,N}\}$. Similar treatments are applicable to the KdV equation, inviscid Burgers' equation and three-dimensional ideal fluid equation\cite{morrison1998hamiltonian}, and the equilibrium statistical mechanics setting follows immediately. At the current stage, there is no need to study the grand canonical ensemble if the {\em same} finite range of Fourier mode is under investigation. This is because the total number of Fourier modes, like the total number of ``partices'', will not change for commonly used numerical solvers of the equation.

With the well-defined statistical equilibrium measures, we can now invoke the turbulence thermalization assumption. This hypothesis is implicitly stated in Krachinan's classical paper \cite{kraichnan1980two} for 2D Euler equation. We can generalize it to arbitrary invisid fluids and the result can be stated as: 
\begin{assumption}\label{assumption:therm} (Turbulence thermalization) The inviscid flow such as the 2D Euler equation and the KdV equation undergoes a thermalization process and reaches the classical statistical ensemble characterised by the invariants of the flow.
\end{assumption}
This assumption is a mixture of the ensemble assumption which states that the classical statistical mechanics ensembles, i.e. the microcanonical and canonical of the form \eqref{inv:Euler_micro}-\eqref{inv:Euler_cano} or \eqref{inv:Euler_micro_T}-\eqref{inv:Euler_cano_T}, describe the statistical properties of fluid observables, and an ergodicity assumption predicting that the flow will approach such equilibriums as $t\rightarrow+\infty$. Generally speaking, it is very hard to prove or disprove this assumption in a mathematically rigorous way. Since the assumption itself covers a large range of fluid phenomena, i.e, any inviscid fluids, we would suspect it is not hard to find numerical or theoretical evidence which is against the statement. Hence Assumption \ref{assumption:therm} should not be understood as a rigorously stated mathematical conjecture, but rather a guiding principle such as the ergodicity and thermalization assumption for canonical Hamiltonian systems. A more pragmatic point of view is: any inviscid fluid equation is a idealized model for real fluid phenomenon. Dissipation (additional terms such as $\nu\Delta \omega$) exists for most realistic fluids. These dissipative forces will drive the turbulence away from the equilibrium. Hence the meaning of Assumption \ref{assumption:therm}, as well as the validity of the putative equilibrium measure \eqref{inv:Euler_micro_T}-\eqref{inv:Euler_cano_T}, should be judged by their usage in predicting the behavior of realistic fluid system. With that being said, the discussion of the fluid Hamiltonian system and the identification of the explicit form of the ensemble distribution already makes it clear in the framework of statistical mechanics what means exactly when we say the real-world turbulence is a {\em nonequilibrium} phenomenon. 

\paragraph{\textbf{Remark}} For 2D Euler equation, we note that there are contradicting evidences with some supporting and some opposing Assumption \ref{assumption:therm}. On the one hand, Kraichnan used it to predict the existence of inverse energy cascade in turbulence which was numerically verified afterwards \cite{kraichnan1980two}. Some early numerical evidence for the thermalization of 2D turbulence according to the postulated equilibrium measure was summarized in \cite{kraichnan1980two}. On the other hand, Bouchet et al \cite{bouchet2010invariant} noticed the existence of Young measure which indicates a large potion of vorticity fields are dynamically invariant. Moreover, some numerical simulations \cite{chavanis1998classification} seem to observe the localization of the vorticity field which is a clear indication of non-ergodicity. However, the recent numerical studies \cite{krstulovic2009cascades,cichowlas2005effective} with  high-resolution direct numerical simulations (DNS) for the truncated 3D Euler equation provide strong evidence of thermalization according to the postulated canonical distribution \eqref{inv:Euler_cano_T}. 

\vspace{0.2cm}
Up to this point, we got a basic statistical mechanical framework for the inviscid fluid. In Table 1, we use the 2D Euler equation as an example and compare its statistical mechanical settings with commonly used one in condensed matters. With this framework, we can define or calculate many static equilibrium quantities. For example, the turbulence Boltzmann entropy for 2D Euler equation can be calculated as 
\begin{align*}
    S=\kappa_B\ln(Z_m) \qquad\text{or}\qquad S_N=\kappa_B\ln(Z_{m,N}),
\end{align*}
where $Z_m$ and $Z_{m,N}$ are the microcanonical partition functions. From the above discussion, one can see that the statistical mechanics for fluids have many features different from the classical statistical mechanics for condensed matters. Essentially because of the degeneracy of the Poisson structure, the settings for equilibrium measure  is more complex (Compare \eqref{invar_mirco}-\eqref{invar_cano} with the canonical system ensemble $\delta(H(p,q)-E)/Z_m$ and $e^{-\beta H(p,q)}/Z_c$). The ergodicity problem for inviscid fluids, if not harder, is as difficult as the one for condensed matters. There are many interesting open problems in the equilibrium statistical mechanics setting for fluids. For instance, how to select finite number of Casimir invariants if the fluid system has infinite many such that the resulting equilibrium measure fits the numerical results? Another interesting question regards the ensemble equivalence:  

\vspace{0.2cm}
\noindent\textbf{Problem 1.} {\em (Ensemble equivalence) For inviscid fluid systems, if the turbulence thermalization assumption holds, is there an equivalence between the canonical ensemble and the microcanonical ensemble in terms of describing the equilibrium statistical properties of the fluid?  If not, which one is closer to the reality?} 
\vspace{0.2cm}

For condensed matters in the thermodynamic limit, the answer to Problem 1 is normally yes. However, Bouchet et al.\cite{bouchet2010invariant} argued that for turbulent systems which have long-range interactions, this might not be the case. It is still unclear for general inviscid fluid systems such as the KdV equation, the inviscid Burgers' equation and three-dimensional ideal fluid, whether we have the ensemble equivalence. If not, how well the canonical ensemble approximates the microcanonical ensemble? 

\begin{table}[t]
\caption{Statistical mechanics for turbulence and the canonical Hamiltonian systems.}
\centering
\begin{tabular}{c|c|c|c|c|c}
\hline
\multicolumn{2}{c|}{} &\multicolumn{2}{c|}{Turbulence}&\multicolumn{2}{c}{Canonical Hamiltonian system}\\ 
\hline
\multirow{7}{*}{\STAB{\rotatebox[origin=c]{90}{Equilibrium}}}
&Equation of motion &\multicolumn{2}{c|}{Eqn  \eqref{eqn:Euler}}&\multicolumn{2}{c}{Eqn \eqref{eqn:can_Hamiltonian}}\\ 
&Dynamics structure&\multicolumn{2}{c|}{Poisson structure}&\multicolumn{2}{c}{Symplectic structure}\\ 
&Conserved quantity&\multicolumn{2}{c|}{$H[\omega]$, Casimir invariants $C[\omega]$}&\multicolumn{2}{c}{$H(p,q)$}\\ 
&Typical equilibrium measures &\multicolumn{2}{c|}{\eqref{inv:Euler_micro}-\eqref{inv:Euler_cano}}&\multicolumn{2}{c}{$\delta(H(p,q)-H)/Z_m$, $e^{-\beta H(p,q)}/Z_c$}\\
&Regularization &\multicolumn{2}{c|}{Required}&\multicolumn{2}{c}{Not required}\\ 
&Ensemble equivalence &\multicolumn{2}{c|}{Not sure}&\multicolumn{2}{c}{Normally yes if $N\rightarrow\infty$}\\ 
&Dynamical ergodicity &\multicolumn{2}{c|}{Not sure}&\multicolumn{2}{c}{Likely if not integrable}\\ 
\hline
\multirow{12}{*}{\STAB{\rotatebox[origin=c]{90}{Nonequilibrium}}}
& External term for NESS1 &\multicolumn{2}{c|}{$\nu\Delta \omega$}&\multicolumn{2}{c}{Gaussian thermostat}\\ 
& NESS1 measure &\multicolumn{2}{c|}{SRB measure(?)}&\multicolumn{2}{c}{SRB measure(?)}\\ 
& Dynamical ergodicity &\multicolumn{2}{c|}{Numerically observed}&\multicolumn{2}{c}{Numerically observed}\\ 
&External term for NESS2&\multicolumn{2}{c|}{$\nu\Delta \omega+f(t,\xi)$}&\multicolumn{2}{c}{Langevin thermostat}\\
& NESS2 measure &\multicolumn{2}{c|}{$\rho \prod_{i=1}^Nd\omega_i$, where $\K^*\rho=0$}&\multicolumn{2}{c}{$\rho \prod_{i=1}^Ndp_idq_i$, where $\K^*\rho=0$}\\ 
& Dynamical ergodicity &\multicolumn{2}{c|}{Proved conditionally in \cite{hairer2006ergodicity,weinan2001ergodicity,bedrossian2019almost}}&\multicolumn{2}{c}{Proved conditionally in \cite{rey2002exponential,cuneo2018non}}\\ 
&External term for NESS3&\multicolumn{2}{c|}{$\nu\Delta \omega+g(t)$}&\multicolumn{2}{c}{Nos\'e-Hoover thermostat}\\
& NESS3 measure &\multicolumn{2}{c|}{Unknown}&\multicolumn{2}{c}{Unknown}\\
& Dynamical ergodicity &\multicolumn{2}{c|}{Numerically observed}&\multicolumn{2}{c}{Numerically observed}\\
&Transport observable&\multicolumn{2}{c|}{$\Tr e^{\int_0^t\L(s,\xi)ds}X(0)$}&\multicolumn{2}{c}{$J(p(t),q(t))$}\\
&Near equilibrium&\multicolumn{2}{c|}{$\nu, f(t,\xi), g(t)\ll 1$}&\multicolumn{2}{c}{$\Delta T\ll 1$}\\ 
&Far-from equilibrium&\multicolumn{2}{c|}{$\nu, f(t,\xi), g(t)\geq o(1)$}&\multicolumn{2}{c}{$\Delta T\geq o(1)$}\\ 
\hline
\end{tabular}
\label{Tab:table}
\end{table}

\subsection{Nonequilibrium statistical mechanics for viscous fluids}\label{sec:Noneqn_viscous}
Different from canonical Hamiltonian systems, almost all realistic physics for turbulence happen in the nonequilibrium regime. This is due to the fact we just mentioned: dissipation and external forces exist in almost all realistic fluid systems, with only several exceptions such as the superfluid. Nonequilibrium phenomena are generally much more complex than the equilibrium ones simply because, as remarked by Ruelle \cite{ruelle2003there}: \say{Nonequilibrium covers a variety of different physical phenomena, such as
hysteresis and decay of metastable states, and one does not really expect a unified description for all those.} Of course for the nonequilibrium statistical mechanics of turbulence, there is no reason to suspect the situation would be different. Hence we would only focus on the study of nonequilibrium steady
states (NESSs) and we are mainly concerned with the transport phenomena. For turbulence, this already contains many similar cases with different mathematical interpretations, which we will detail below.

We begin our discussion with canonical Hamiltonian systems. The common way to numerically simulate a nonequilibrium system is by adding external forces to the Hamiltonian equation of motion \eqref{eqn:can_Hamiltonian}. These forces can be deterministic or stochastic and are called as {\em thermostats} \cite{morriss2013statistical}. They are many possible available thermostats and we only review three of them. In the following models, we assume the equal mass $m=1$ for all particles. 
\paragraph{\textbf{Gaussian thermostat}} The Gaussian thermostat for a canonical Hamiltonian systems has equation of motion:
\begin{equation}\label{eqn:G_thermo}
\begin{aligned}
\begin{dcases}
\frac{dq_i}{dt}&=p_i\\
\frac{dp_i}{dt}&=-\frac{\partial H}{\partial q_i}-\alpha_i p_i,
\end{dcases}
\end{aligned}
\end{equation}
where $-\alpha_i p_i$ is the external force added to the system and $\alpha=\alpha(p,q)$ is a time-independent function of space variable. A common choice of $\alpha$ is $\alpha=\partial_{q}H\cdot p/p\cdot p$ which yields the {\em Gaussian isokinetic thermostat}. The corresponding Gaussian isokinetic dynamics {\em does not} preserve equilibrium distribution such as the canonical distribution and only approximate it in the thermodynamical limit $N\rightarrow\infty$ \cite{morriss2013statistical}. A mathematically rigorous justification of the steady state generated by the Gaussian isokinetic dynamics has to use the Sinai-Ruelle-Bowen (SRB) measure \cite{ruelle1999smooth}, which also works for chaotic dissipative systems with attractors \cite{young2002srb}. By choosing $\alpha$ to be non-negative constants, this system is dissipative and we expect the NESS can still be described by the SRB measure. For such a case, Eqn \eqref{eqn:G_thermo} can be viewed as a "bare", deterministic heat conduction model when comparing with the following Langevin thermostat heat conduction model. 
\paragraph{\textbf{Langevin thermostat}} The Langevin thermostat adds stochastic forces to the Hamiltonian systems. A typical Langevin thermostat to simulate equilibrium systems is given by SDE:
\begin{equation}\label{eqn:SDE_langevin}
\begin{aligned}
\begin{dcases}
dq_i&=p_idt\\
dp_i&=-\partial_{q_i} Hdt-\gamma p_idt+\sigma d\W_i(t),
\end{dcases}
\end{aligned}
\end{equation}
where $\W_i(t)$ are independent Wiener processes. In particular, if the constant $\sigma$ satisfies the fluctuation-dissipation theorem $\sigma^2=2\gamma\kappa_BT_{eq}$, then the canonical distribution is the invariant distribution of the SDE \eqref{eqn:SDE_langevin}. Similar models can also be used to simulate nonequilibrium systems by connecting the boundary particles with Langevin thermostats of {\em different} temperatures. This leads to a stochastic heat conduction model \cite{lepri2003thermal,cuneo2018non,zhu2021hypoellipticity3} of the form:
\begin{equation}\label{SDE:n_d_heat}
\begin{dcases}
\begin{aligned}
dq_i&=p_idt\\
dp_i&=-\partial_{q_i}Hdt-\gamma_ip_idt+\sqrt{2k_BT_i\gamma_i}d\W_i(t)
\end{aligned}
\end{dcases}
\qquad\qquad 
i\in\G,
\end{equation}
where $\G$ is the graph of interacting particles (see a 2D example in Figure \ref{fig:Heat_transport}). In \eqref{SDE:n_d_heat}, $\gamma_i>0$ if $i\in\B$, where $\B$ is the boundary of graph $\G$ and $\gamma_i=0$ if $i\in\G\setminus\B$. $T_i$ is the temperature of the $i$-th thermostat. The NESS for the heat conduction model with Langevin thermostats corresponds to the invariant measure of SDE \eqref{SDE:n_d_heat}.
\paragraph{\textbf{Nos\'e-Hoover thermostat}} The Nos\'e-Hoover thermostat is a deterministic thermostat that preserves the canonical distribution. The dynamics is given by the Hamiltonian equation of motion with the extended-Hamiltonian:
\begin{equation}\label{Ham_NH}
\begin{aligned}
H=\sum_{i=1}^N\frac{p_i^2}{2s}+\phi(q)
+\frac{p_s^2}{2Q}+(g+1)\kappa_BT\ln s,
\end{aligned}
\end{equation}
where $s$ is an additional degree of freedom which has effect on each $p_i$. By connecting the boundary particles with Nos\'e-Hoover thermostats of {\em different} temperatures, we can get a deterministic heat conduction model \cite{lepri2003thermal,zhu2021hypoellipticity3}:
\begin{align}\label{Nose-Hover_heat}
\begin{dcases}
    \frac{dq_i}{dt}&=p_i\\ 
    \frac{dp_i}{dt}&=-\partial_{q_i}\H(p,q)-
    \begin{dcases}
    \gamma_L p_i,\qquad \text{if $i\in \B_{T_L}$}\\
    \gamma_R p_i,\qquad \text{if $i\in \B_{T_R}$},
    \end{dcases} 
\end{dcases}    
\end{align}
where $\B_{L,R}$ are the set of boundary oscillators which interact with thermostat at the temperature $T_{L,R}$. The cardinality of $\B_{L,R}$ are denoted as $|\B_{L,R}|$. The dynamics of the auxiliary variables $\gamma_{L,R}$ are given by:
\begin{align}\label{Nose-Hover_bath}
    \frac{d\gamma_{L,R}}{dt}=\frac{1}{\theta_{L,R}}\left(\frac{1}{k_BT_{L,R}|\B_{T_{L,R}}|}\sum_{n\in \B_{T_{L,R}}}p_n^2-1\right),
\end{align}
where constant $\theta_{L,R}$ is the thermostat response time. If the system is imposed with the nonequilibrium condition $T_L\neq T_R$, there is no well-defined mathematical measure found for the NESS corresponding to the Nos\'e-Hoover heat conduction model. 

\begin{figure}[t]
\centerline{
\includegraphics[height=6.5cm]{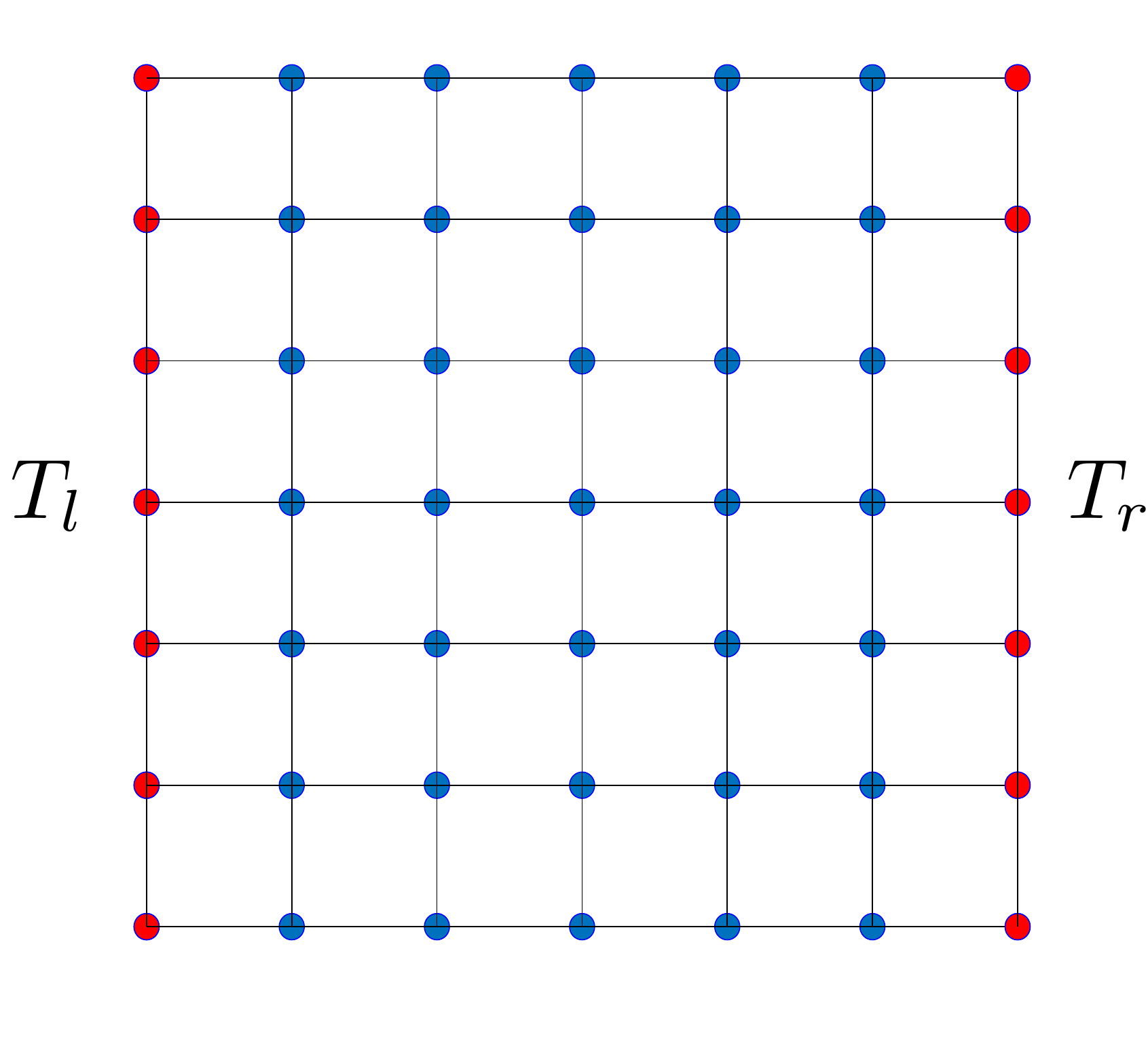}
\includegraphics[height=6.5cm]{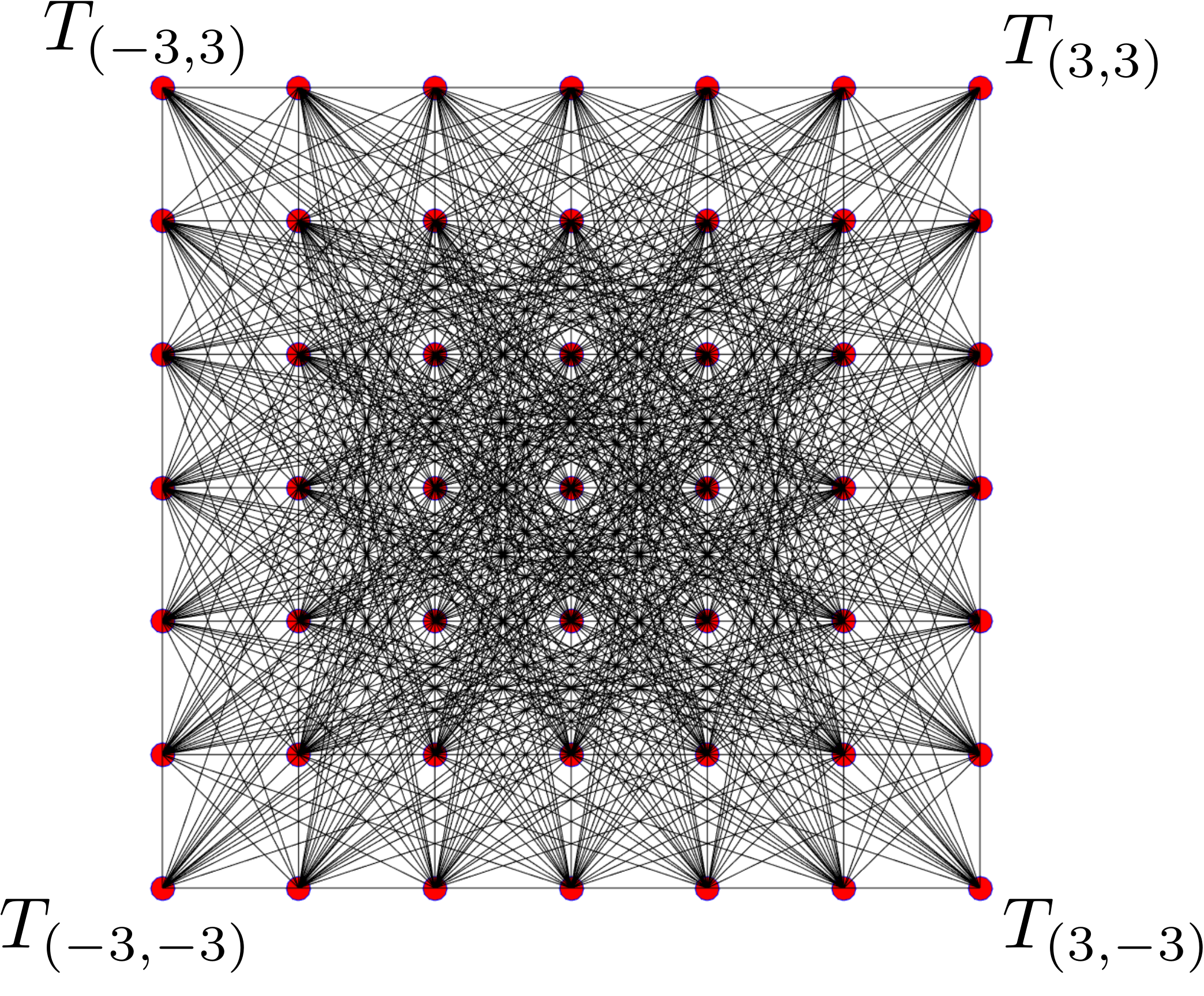}
}
\caption{Comparison of a 2D heat conduction model (Left) and the nonequilibrium statistical mechanics setting for the 2D viscous fluid (Right). For the heat conduction model, the left and the right boundary particles (marked in red) are connected with thermostats of temperature $T_l$ and $T_r$. Particles with interactions are linked with each other with black lines. For frequently used statistical mechanics model, the interaction is limited to the neighborhood of each particle hence the dynamics can be represented as a square lattice. For the viscous fluid system, the Fourier modes can be understood as the ``particle" which has long-range interactions (see expression \eqref{eqn:alpha_beta0}). Since the dissipative force $-\nu\Delta\omega$ acts on all modes, all particles in the lattice are marked in red. The effective ``temperature" are only displayed for the corner Fourier modes.}
\label{fig:Heat_transport} 
\end{figure}
\vspace{0.2cm}
These three heat conduction models make a good analogy of three commonly used settings for the simulation of viscous fluids. In the Fourier mode space, consider the fluid equation for a incompressible, viscous fluid with body force $F(x,t)$: 
\begin{align}\label{eqn:NS_Fourier}
\partial_tu_k=B(u,u)-\nu|k|^2u_k+F_k,
\end{align} 
where $F_k(t)$ is the $k$-th Fourier mode of $F(x,t)$. Then $\{u_k\}$ can be understood as the ``particle'' (see schematic illustration in Figure \ref{fig:Heat_transport}). Moreover, the NESSs created by different forces $F(x,t)$ and their similarities with the heat conduction model are be summarized as follows:
\paragraph{\textbf{NESS1}} If $\nu>0$ and $F(x,t)=0$, then the viscous term $\nu\Delta\omega$ added linear dissipation force $-\nu|k|^2\omega_k$ in Fourier mode $\omega_k$, similar to the Gaussian thermostat heat conduction model. We expect the NESS of the fluid equation under this setting, named as NESS1, can be characterised by the SRB measure, although an exact construction of such measure for turbulence seems to be extremely difficult \cite{ruelle2012hydrodynamic,gallavotti1995dynamical}.  
\paragraph{\textbf{NESS2}} If $\nu>0$ and $F(x,t)=f(t,\xi)$ is a white noise imposed in selected Fourier modes (see details in Section \ref{sec:Turb_disper_setting}), then this fluid model can be understood as a ``Langevin thermostat heat conduction model". The NESS of the fluid equation under this setting, named as NESS2, has distribution $\rho$ which is the solution of the stationary Fokker-Planck equation $\K^*\rho=0$.
\paragraph{\textbf{NESS3}} If $\nu>0$ and $F(x,t)=g(x,t)$, where $g(x,t)$ is a deterministic force added in the Fourier modes, then this fluid model can be understood as a ``Nos\'e-Hoover thermostat heat conduction model". The NESS of the fluid equation under this setting, named as NESS3, has not been understood clearly.

\vspace{0.2cm}
The correspondence between the heat conduction model and the fluid mechanics from the perspective of nonequilibrium statistical mechanics was noticed by Ruelle \cite{ruelle2012hydrodynamic,ruelle2014non}. The formulation we presented is a re-interpretation of Ruelle's original idea \cite{ruelle2012hydrodynamic} in the Fourier mode space. However, different from \cite{ruelle2012hydrodynamic,ruelle2014non}, in which Ruelle focused more on the {\em static} properties of the NESS, the main theme of this paper is about the {\em dynamics} of observables in the NESS since we will are mainly concerned with the transport phenomena. Here we note more similarities between the heat conduction model and the fluid equations. The following facts indicate that this analogy is more than a pure phenomenological observation. In particular, we emphasize Point 3 which implies that by connecting the fluid system with the heat conduction model, we can obtain physically meaning results to describe the turbulent dispersion.
\begin{enumerate}
    
    \item The local temperature at each lattice grid of the heat conduction model has a microscopic definition as the averaged kinetic energy of the particle \cite{lepri2003thermal}. Corresponding, the local ``temperature" for each Fourier mode can be understood as the averaged kinetic energy $\langle|u_k|^2\rangle$, and the energy spectrum $E(k)$ of turbulence corresponds to the temperature profile $T(x)$ \cite{lepri2003thermal} of the heat conduction model.
    
    \item For fully developed turbulent flow with small viscosity $\nu$, as the wavenumber $k$ increases, the energy spectrum $E(k)$ can be roughly divided into the {\em source range}, the {\em inertial range} and the {\em viscous dissipative range}. The dominant energy process at low wavenumbers is the transfer of kinetic energy between Fourier modes and the dissipation is negligible. Hence an effective graph representation for highly turbulent flows would be as the physical picture of the real world heat conduction, i.e. the left plot in Figure \ref{fig:Heat_transport}, where the dissipative forces only act at the boundary or near-boundary particles. 
        
    \item In Section \ref{sec:FDT}, we show that the generalized fluctuation-dissipation relations for the heat conduction model \cite{zhu2021hypoellipticity3} can be directly applied to the fluid model. In particular, we derive heuristically a path-integral-form linear response formula for the fluid system by considering it as a heat conduction model. This formula is rigorously proved for equilibrium and nonequilibrium systems. 
\end{enumerate}

We also have many open problems for the NESSs of turbulence. In particular, the following one is interesting:

\vspace{0.2cm}
\noindent\textbf{Problem 2.} {\em (NESS equivalence) For the equation of viscous fluids with different external force terms, is there an equivalence between the resulting NESSs in terms of describing the nonequilibrium statistical properties of the fluid? If not, which one is closer to the reality?}
\vspace{0.2cm}

\noindent As we have seen from Table 1, the NESSs corresponding to different external forces has distinctive mathematical structures. Hence a general answer to the first question is no. Our interest is more of the numerical sense. For instance, if two NESSs generated by different external terms, say the viscous dissipation+white noise (NESS2) and viscous dissipation+Kolmogorov force (NESS3), shares the same energy spectrum, is there any dynamical similarities between turbulent transport phenomena such as the particle dispersion and the eddy dissipation? This type of questions are important because if such similarities exist, then one would expect the result obtained for stochastic fluid systems (NESS2) can be extended to address the more difficult deterministic turbulence case (NESS1 and NESS3).

\section{Turbulent dispersion as a problem of nonequilibrium transport}\label{sec:Turb_disper_setting}
Consider a two-dimensional SNS equation for the incompressible, viscous fluid with stochastic forcing $F=F(x,t)$: 
\begin{align}\label{eqn:NS}
\begin{dcases}
\partial_tu+(u\cdot\nabla) u=-\nabla p+\nu\Delta u+F,\\
\nabla\cdot u=0.
\end{dcases}
\end{align}
Similar to the Euler equation \eqref{eqn:Euler} considered in Section \ref{sec:Equi_turbulence}, we consider Eqn \eqref{eqn:NS} with periodic boundary conditions in a torus $\T^2=[-\pi,\pi]\times[-\pi,\pi]$. Working with the equivalent dynamics for the vorticity field $\omega=\nabla\land u=\partial_{x_2}u_1-\partial_{x_1}u_2$, \eqref{eqn:NS} can be reformulated as 
\begin{align}\label{eqn:NS_w}
\partial_t\omega+(u\cdot\nabla) \omega=\nu\Delta \omega+f,
\end{align}
where the pressure term is eliminated via integration and $f=\nabla \land F$. The external force term $f(x,t)$ is a Gaussian noise white in time and correlated in space which will be specified later. In this paper, we are mainly concerned with the transport of Lagrangian particles by this SNS equation. To this end, consider a characteristic Lagrangian particles $X(t)=X(t; x_0)$ with initial condition $X(0; x_0)=x_0$, the particle trajectory can be described by the equation of motion:
\begin{align}\label{eqn:Lag_EoM}
    dX_i=u_i(X(t),t)dt+\kappa_i d\D_i(t), \qquad i=1,2,
\end{align}
where $u(X(t),t)$ is the Eulerian velocity of the fluid at location $X(t)$, $\kappa_i$ is the molecular diffusivity and $\D_i(t)$ is a standard Wiener process. If $\kappa_i$ is chosen to be 0, then the Lagrangian particle is purely passive, i.e. moves along the flow of the fluid. The description of the turbulent dispersion via equation \eqref{eqn:Lag_EoM} takes a Lagrangian point of the view of the transport. Corresponding, the Eulerian description considers the turbulent transport of a space-time 
particle concentration function $\phi=\phi(x,t)$ where the evolution of $\phi$ is given by an advection-diffusion equation (assuming $\kappa_1=\kappa_2=\kappa$):
\begin{align}\label{eqn:Euler_EoM}
\begin{dcases}
 \partial_t\phi=u\cdot \nabla \phi+\kappa^2\Delta\phi\\
 \phi(0)=\varphi(x),
\end{dcases}
\end{align}
where $u=u(x,t)$ is the Eulerian velocity of the fluid. Mathematically speaking, these two descriptions are equivalent. In fact, \eqref{eqn:Euler_EoM} is the Kolmogorov backward equation corresponding to SDE \eqref{eqn:Lag_EoM} and $\phi(x,t)=\mathbb{E}_{D_i}[\varphi(X_t)|X_0=x]$. Numerically, one can solve the combined system  \eqref{eqn:NS}-\eqref{eqn:Euler_EoM} for $\phi(x,t)$, or equivalently, solve equations \eqref{eqn:NS}-\eqref{eqn:Lag_EoM} with {\em different} initial conditions $X(0,x_0)$ and construct $\phi(x,t)$ {\em a posterior} by evaluating the Feynman-Kac formula $\phi(x,t)=\mathbb{E}_{D_i}[\varphi(X_t)|X_0=x]$. For this work, we will use the Lagrangian description \eqref{eqn:Lag_EoM} to study turbulent dispersion.
\paragraph{\textbf{Remark 1}} With only the SDE \eqref{eqn:Lag_EoM}, one cannot write its corresponding Kolmogorov backward equation as \eqref{eqn:Euler_EoM} since Eqn \eqref{eqn:Lag_EoM} alone is not a time-homogeneous system \cite{pavliotis2014stochastic}. But when combined with the SNS equation \eqref{eqn:NS}, the extended system \eqref{eqn:Lag_EoM}-\eqref{eqn:NS} is time-homogeneous (see explicit expression below), therefore we get the correspondence between the Lagrangian description \eqref{eqn:Lag_EoM} and the Eulerian description \eqref{eqn:Euler_EoM}. A good analogy which helps to understand this is the Langevin dynamics \eqref{eqn:SDE_langevin} for particle systems, the SDE for $p_i$ is not time-homogeneous since it depends on $q_i(t)$. But the combined system \eqref{eqn:SDE_langevin} is time-homogeneous for both $p_i$ and $q_i$.
\paragraph{\textbf{Remark 2}} Another model for a small spherical aerosol particle in turbulence was introduced by Maxey in \cite{maxey1987gravitational}. The equation of motion for such a particle of radius $a$ and mass $m_p$ is given by
\begin{align}\label{eqn:Lag_Lagevin}
\begin{dcases}
    &dX_i=V_idt\\
    &m_pdV_i=6\pi a\nu(u_i(X(t),t)-V_i(t))dt+m_pg_idt+\xi_idD_i(t),
\end{dcases}
\end{align}
where $X(t),V(t)$ are the position and velocity of the particle, $u(x,t)$ is the Eulerian fluid velocity, $g$ is the gravitational acceleration and $\nu$ is the fluid viscosity. In order to model the thermal fluctuation of a Brownian particle, here we added a white noise term $\xi_idD_i(t)$ as in \eqref{eqn:Lag_EoM}, where $\xi$ is the fluctuation intensity. This equation represents a balance between the particle gravity, the Stokes-law drag force which exists due to velocity difference between the particle and the surrounding fluid and the molecular diffusion. In fact, \eqref{eqn:Lag_Lagevin} is a Langevin dynamics \eqref{eqn:SDE_langevin} for the aerosol particle with acceleration $\gamma(u_i(X(t),t)+m_pg_i/(6\pi a\nu))$, where $\gamma$ and $\xi_i$ satisfy $\gamma=6\pi a\nu/m_p$ and $\xi_i=\sigma m_p$. Under this setting, one can interpret Eqn \eqref{eqn:Lag_EoM} as the ovderdamped Langevin dynamics for \eqref{eqn:Lag_Lagevin} in the massless limit $m_p\rightarrow 0$.
We emphasize that all theoretical results obtained in this paper can be readily extended to the Lagrangian particle model \eqref{eqn:Lag_Lagevin}. 

\vspace{0.3cm}
Now we consider the stochastic force $f$ of the following form:
\begin{align}
    f=\sum_{k\in K}\sigma_k\cos(k\cdot x)d\B_k(t)+\gamma_k\sin(k\cdot x)d\W_k(t),
\end{align}
where $\B_k(t)$ and $\W_k(t)$ are independent standard Wiener processes which are also independent with $\D(t)$. $\sigma_k,\gamma_k$ are positive diffusion constants. The finite set $K\subset \Z^2\setminus\{0,0\}$ gives the forced mode which excludes the mean flow force with $k=(0,0)$. Due to the periodic BCs, we have the Fourier mode representation for the vorticity field $\omega(x,t)=\sum_k\alpha_{k}(t)\cos(k\cdot x)+\beta_k(t)\sin(k\cdot x)$, where $k\in \Z^+\times \Z^+$. Then Eqn \eqref{eqn:NS_w} can be rewritten as:
\begin{equation}\label{eqn:alpha_beta0}
\begin{aligned}
d\alpha_l&=\left\{-\nu|l|^2\alpha_{l}+\sum_{j+k=l}\frac{j^{\perp}\cdot k}{|j|^2}(-\alpha_j\alpha_k+\beta_j\beta_k)+\sum_{j-k=l}\frac{j^{\perp}\cdot k}{|j|^2}(\alpha_j\alpha_k+\beta_j\beta_k)\right\}dt+\sigma_{l}d\B_l\\
d\beta_l&=\left\{-\nu|l|^2\beta_{l}+\sum_{j+k=l}\frac{j^{\perp}\cdot k}{|j|^2}(\alpha_j\beta_k+\beta_j\alpha_k)+\sum_{j-k=l}\frac{j^{\perp}\cdot k}{|j|^2}(-\alpha_j\beta_k+\beta_j\alpha_k)\right\}dt+\gamma_{l}d\W_l,
\end{aligned}
\end{equation}
where $j^{\perp}=(j_1,j_2)^{\perp}=(-j_2,j_1)$ and $j,l\in \Z^+\times \Z^+$. Using the stream function $\psi=\psi(x,t)$ relations: $u=(\partial_{2}\psi(x,t),-\partial_{1}\psi(x,t))$ and $\omega(x,t)=-\Delta \psi(x,t)$, we can get the velocity field representation 
\begin{equation}\label{eqn:u1_u2}
\begin{aligned}
u_1(x,t)&=\sum_k\frac{\beta_kk_2}{|k|^2}\cos(k\cdot x)-\frac{\alpha_kk_2}{|k|^2}\sin(k\cdot x)\\
u_2(x,t)&=\sum_k\frac{\alpha_kk_1}{|k|^2}\sin(k\cdot x)-\frac{\beta_kk_1}{|k|^2}\cos(k\cdot x).
\end{aligned}
\end{equation}
In this paper, we consider a finite-dimensional Fourier mode approximation of exact solution $\omega(x,t)\approx\omega^N(x,t)$. By introducing a finite rank projection operator $P_N$ and applying it to \eqref{eqn:NS_w}, we can get the projected equation:
\begin{align}\label{eqn:NS_w_Truncated}
    \partial_t\omega^N+P_N(u^N\cdot\nabla)\omega^{N}=\nu\Delta\omega^N+f^N,
\end{align}
where $P_N$ eliminates all the Fourier modes bigger than the threshold $N$, i.e. for any index $|k|_{\infty}>N$. Similarly we can project the velocity field \eqref{eqn:u1_u2} into finite Fourier modes and get approximation $u(x,t)\approx u^N(x,t)$. In the Fourier modes space, the dynamics for the truncated NS equation \eqref{eqn:NS_w_Truncated} and the Lagrangian particle equation \eqref{eqn:Lag_EoM} can therefore be represented as 
\begin{equation}\label{eqn:alpha_beta}
\begin{aligned}
d\alpha_l&=\left\{-\nu|l|^2\alpha_{l}+\sum_{j+k=l}^N\frac{j^{\perp}\cdot k}{|j|^2}(-\alpha_j\alpha_k+\beta_j\beta_k)+\sum_{j-k=l}^N\frac{j^{\perp}\cdot k}{|j|^2}(\alpha_j\alpha_k+\beta_j\beta_k)\right\}dt+\sigma_{l}d\B_l\\
d\beta_l&=\left\{-\nu|l|^2\beta_{l}+\sum_{j+k=l}^N\frac{j^{\perp}\cdot k}{|j|^2}(\alpha_j\beta_k+\beta_j\alpha_k)+\sum_{j-k=l}^N\frac{j^{\perp}\cdot k}{|j|^2}(-\alpha_j\beta_k+\beta_j\alpha_k)\right\}dt+\gamma_{l}d\W_l,\\
dX_1&=\left\{\sum_k^N\frac{\beta_kk_2}{|k|^2}\cos(k\cdot X)-\frac{\alpha_kk_2}{|k|^2}\sin(k\cdot X)\right\}dt+\kappa_1d\D_1(t)\\
dX_2&=\left\{\sum_k^N\frac{\alpha_kk_1}{|k|^2}\sin(k\cdot X)-\frac{\beta_kk_1}{|k|^2}\cos(k\cdot X)
\right\}dt+\kappa_2d\D_2(t).
\end{aligned}
\end{equation}
We introduce the following simplified notation: 
\begin{equation}\label{eqn:alpha_beta_notation}
\begin{aligned}
F_l(\{\alpha_l\},\{\beta_l\})&=\sum_{j+k=l}^N\frac{j^{\perp}\cdot k}{|j|^2}(-\alpha_j\alpha_k+\beta_j\beta_k)+\sum_{j-k=l}^N\frac{j^{\perp}\cdot k}{|j|^2}(\alpha_j\alpha_k+\beta_j\beta_k)\\
G_l(\{\alpha_l\},\{\beta_l\})&=\sum_{j+k=l}^N\frac{j^{\perp}\cdot k}{|j|^2}(\alpha_j\beta_k+\beta_j\alpha_k)+\sum_{j-k=l}^N\frac{j^{\perp}\cdot k}{|j|^2}(-\alpha_j\beta_k+\beta_j\alpha_k),\\
\partial_{X_2}\psi(X,t)&=u_1(X(t),t)=\sum_k^N\frac{\beta_kk_2}{|k|^2}\cos(k\cdot X)-\frac{\alpha_kk_2}{|k|^2}\sin(k\cdot X)\\
-\partial_{X_1}\psi(X,t)&=u_2(X(t),t)=\sum_k^N\frac{\alpha_kk_1}{|k|^2}\sin(k\cdot X)-\frac{\beta_kk_1}{|k|^2}\cos(k\cdot X),
\end{aligned}
\end{equation}
where $\psi(x,t)$ is the stream function. Then Eqn \eqref{eqn:alpha_beta} can be rewritten as 
\begin{align}\label{eqn:alpha_beta_simplified}
    \begin{dcases}
    d\alpha_l&=F_{l}(\{\alpha_l\},\{\beta_l\})dt-\nu|l|^2\alpha_ldt+\sigma_ld\B_l\\
    d\beta_l&=G_{l}(\{\alpha_l\},\{\beta_l\})dt-\nu|l|^2\beta_ldt+\gamma_ld\W_l\\
    dX_1&=\partial_{X_2}\psi(X,t)dt+\kappa_1d\D_1(t)\\
    dX_2&=-\partial_{X_1}\psi(X,t)dt+\kappa_2d\D_2(t)
    \end{dcases}.
\end{align}
The above equation defines a diffusion process in $\R^{(N+1)^2+1}$. This allows us to define a composition operator $\M(t,0)$ that pushes forward
in time the average of the observable $O(t)=O(\{\alpha_l(t)\},\{\beta_l(t)\},X_1(t),X_2(t))$ over the noise, i.e., 
\begin{align}
\mathbb{E}_{\B_l,\W_l,\D_1,\D_2}[O(t)|\{\alpha_l(0)\},\{\beta_l(0)\},X_1(0),X_2(0)]= 
\M(t,0) O(0)=e^{t\K}O(\{\alpha_l(0)\},\{\beta_l(0)\},X_1(0),X_2(0)),
\label{MarkovSemi}
\end{align}
where $\M(t,0)=e^{t\K}$ is a Markovian semigroup, and its infinitesimal generator is known as the Kolmogorov backward operator $\K$, which can be written as 
\begin{align}\label{op:K_NS_extend}
    \K=\X_0+\frac{1}{2}\sum_{j}\X_j^2,
\end{align}
where 
\begin{equation}\label{eqn:X_0}
\begin{aligned}
    \X_0=\sum_{l}^N&\{-\nu|l|^2\alpha_{l}+F_l(\{\alpha_l\},\{\beta_l\})\}\frac{\partial}{\partial \alpha_l}+\sum_{l}^N\{-\nu|l|^2\beta_{l}+G_l(\{\alpha_l\},\{\beta_l\})\}\frac{\partial}{\partial \beta_l}\\
    &+\partial_{X_2}\psi(X(t),t)\frac{\partial}{\partial X_1}-\partial_{X_1}\psi(X(t),t)\frac{\partial}{\partial X_2}
\end{aligned}
\end{equation}
and 
\begin{equation}\label{eqn:X_k*X_k}
\begin{aligned}
\frac{1}{2}\sum_{j}\X_j^2
=\frac{1}{2}\sum_{k\in K}^N\sigma_k^2\left(\frac{\partial}{\partial{\alpha_k}}\right)^2
+\frac{1}{2}\sum_{k\in K}^N\gamma_k^2\left(\frac{\partial}{\partial{\beta_k}}\right)^2
+\frac{1}{2}\kappa_1^2\left(\frac{\partial}{\partial{X_1}}\right)^2
+\frac{1}{2}\kappa_2^2\left(\frac{\partial}{\partial{X_2}}\right)^2
\end{aligned}.
\end{equation}
Equation \eqref{eqn:alpha_beta} is the model we will use to study turbulence dispersion. In canonical nonequilibrium systems, any transport process is characterized by the corresponding flux which is a time-independent function of phase variables. Takes the heat conduction model as an example, the averaged heat flux $J(p(t),q(t))$ defined as \cite{lepri2003thermal,zhu2021hypoellipticity3}:
\begin{align}\label{def:local_global_flux}
J(p(t),q(t))=\frac{1}{N}\sum_{j}p_iV'(q_{j+1}-q_j),
\end{align}
where $V(q_{i+1}-q_i)$ is the interactive potential between neighbourhood particles, characterises the intensity of the heat transport. The time correlation function of $J$ can be used to calculate heat conductivity via Green-Kubo formula in the near equilibrium regime. The observable that quantifies the Lagrangian particle dispersion in turbulence, i.e. $X(t)$ {\em cannot} be directly written as a time-independent phase variable. Instead, it is given implicitly via evolution operators given in Table 1. For the stochastic turbulence model considered in this section, we also have the explicit expression using the stochastic, second-kind Volterra integral equation:
\begin{equation}
\begin{dcases}
X_1(t)&=\Tr e^{\int_0^t\L(s,\xi)ds}X_1(0)=+\int_0^t\partial_{X_2}\psi(X(s),s)ds+\kappa_1\D_1(t)\\
X_2(t)&=\Tr e^{\int_0^t\L(s,\xi)ds}X_2(0)=-\int_0^t\partial_{X_1}\psi(X(s),s)ds+\kappa_2\D_2(t),
\end{dcases}
\end{equation}
where $\Tr$ is the time-ordering operator and $\L(s,\xi)$ is the time-dependent {\em stochastic} generator \cite{zhu2020hypoellipticity} corresponding to the Brownian flow \eqref{eqn:alpha_beta}. The merit to consider the extended stochastic system \eqref{eqn:alpha_beta} is that $X=X(t)$ becomes a time-independent observable. Hence for fully developed turbulence where the statistical properties of the system can be characterized by the NESS probability density $\rho_S$, the information about the stationary turbulence dispersion is encoded in the dynamics of the reduced quantity $X(t)$ in $\rho_S$. Under the current setting, each trajectory $X(t)=X(t; x_0)$ of the Lagrangian particle can be viewed as a sample solution of the SDE \eqref{eqn:alpha_beta} evolving from random initial state $x(0)=[\{\alpha_l(0)\},\{\beta_l(0)\},X_1(0),X_2(0)]\sim\rho_S$.

We end this section with some discussion on the mathematical properties of the stochastic system \eqref{eqn:alpha_beta0} and \eqref{eqn:alpha_beta_simplified}. As a Markov process, the smoothness of the transition density $p_t(\{\alpha_l\},\{\beta_l\})$ for the 2D SNS equation \eqref{eqn:alpha_beta0} were proved by E and Mattingly \cite{weinan2001ergodicity}. Moreover, the stochastic flow induced by \eqref{eqn:alpha_beta0} is shown to be geometrically ergodic and approaches to the unique invariant measure $\rho_S\prod_{l}d\alpha_ld\beta_l$ as $t\rightarrow+\infty$. This means the limit $\lim_{t\rightarrow+\infty}p_t(\{\alpha_l\},\{\beta_l\})=\rho_S$ converges exponentially. The geometric ergodicity can also be written adjointly as an estimate for the state space observable $O(t)=O(\{\alpha_l(t)\},\{\beta_l(t)\})$:
\begin{align*}
    \|e^{t\K}O(0)-\langle O(0)\rangle_{\rho_S}\|\leq Ce^{-rt}
\end{align*}
with constants $C=C(O(0))>0,r>0$, where $\K$ is the corresponding Kolmogorov backward operator and $\langle\cdot\rangle_{\rho_S}$ is the ensemble average with respect to $\rho_S$. For the extended stochastic dynamics \eqref{eqn:alpha_beta_simplified}, 
immediately we can get the smoothness of the transition density $p_t(\{\alpha_l\},\{\beta_l\},X_1,X_2)$ for this diffusion process:
\begin{prop}\label{lemma1}
Let $K_1=\{(0,1),(1,1)\}$ and $K_2=\{(1,0),(1,1)\}$. If $K_1\subset K$ or $K_2\subset K$ and $\kappa_1,\kappa_2>0$, then the Markov process \eqref{eqn:alpha_beta} has a smooth transition probability density $p_t(\{\alpha_l\},\{\beta_l\},X_1,X_2)$ for $t>0$. 
\end{prop}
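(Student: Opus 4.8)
The plan is to recognize the statement as a hypoellipticity assertion: one writes the extended diffusion \eqref{eqn:alpha_beta_simplified} in the form $\K=\X_0+\tfrac12\sum_j\X_j^2$, verifies the parabolic (H\"ormander) bracket condition, and then invokes H\"ormander's theorem to get smoothness of the transition kernel. Because the diffusion coefficients $\sigma_k,\gamma_k,\kappa_1,\kappa_2$ are constants, the It\^o and Stratonovich forms agree and the second-order fields are simply $\sigma_k\,\partial_{\alpha_k}$ and $\gamma_k\,\partial_{\beta_k}$ for $k\in K$, together with $\kappa_1\,\partial_{X_1}$ and $\kappa_2\,\partial_{X_2}$; the first-order field $\X_0$ is the drift written out in \eqref{eqn:X_0}, which is polynomial in $\{\alpha_l\},\{\beta_l\}$ and smooth (trigonometric) in $X$. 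Global existence of the flow is not an issue, since the $\alpha,\beta$ block does not blow up by the energy estimate of E and Mattingly \cite{weinan2001ergodicity} and the $X$-drift is then finite. All fields are $C^\infty$, so what remains is to show that the Lie algebra generated by the second-order fields and their iterated brackets against $\X_0$ spans the tangent space at every point of $\R^{(N+1)^2+1}$.

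Two ingredients enter. First, the driving noise immediately produces the directions $\partial_{X_1}$ and $\partial_{X_2}$ (using $\kappa_1,\kappa_2>0$) and the directions $\partial_{\alpha_k},\partial_{\beta_k}$ for every forced mode $k\in K$. Second, I would invoke the bracket computation of E and Mattingly \cite{weinan2001ergodicity}: under the hypothesis $K_1\subset K$ or $K_2\subset K$, iterated Lie brackets of $\{\partial_{\alpha_k},\partial_{\beta_k}\}_{k\in K}$ with the \emph{truncated SNS drift} generate all of $\{\partial_{\alpha_l},\partial_{\beta_l}\}$ up to the truncation threshold. The only genuinely new point is to transfer that computation to the extended system, whose drift $\X_0$ differs from the SNS drift by the transport piece $\partial_{X_2}\psi\,\partial_{X_1}-\partial_{X_1}\psi\,\partial_{X_2}$.

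The transfer rests on a structural observation: every vector field produced by the E--Mattingly brackets has components only in the $\{\alpha_l\},\{\beta_l\}$ block and coefficients depending only on $\{\alpha_l\},\{\beta_l\}$, since this class is closed under brackets and contains both $\{\partial_{\alpha_k},\partial_{\beta_k}\}$ and the SNS drift (whose $\alpha_l,\beta_l$ components are $\alpha\beta$-functions and which carries no $X$ component). Bracketing such a field $V$ with the extra transport piece of $\X_0$ yields only terms valued in $\mathrm{span}\{\partial_{X_1},\partial_{X_2}\}$: differentiating $V$'s $(\alpha,\beta)$-dependent coefficients along $\partial_{X_1},\partial_{X_2}$ gives zero, while $V$ acting on the coefficients $\partial_{X_2}\psi$ and $-\partial_{X_1}\psi$ returns scalar multiples of $\partial_{X_1},\partial_{X_2}$. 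Hence, modulo the two-dimensional subspace $\mathrm{span}\{\partial_{X_1},\partial_{X_2}\}$ — which is already in the generated algebra by the first ingredient — the bracket calculation for the extended $\X_0$ coincides with the one for the SNS drift. It follows that the generated Lie algebra spans $\{\partial_{\alpha_l},\partial_{\beta_l}\}_l\cup\{\partial_{X_1},\partial_{X_2}\}$, i.e. the whole tangent space, at every point. H\"ormander's theorem then shows $\partial_t-\K$ is hypoelliptic, so the transition density $p_t(\{\alpha_l\},\{\beta_l\},X_1,X_2)$ is $C^\infty$ for $t>0$, which is the claim.

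The main obstacle is precisely making the \say{modulo $\mathrm{span}\{\partial_{X_1},\partial_{X_2}\}$} reduction rigorous at every stage of the iterated brackets — that is, checking that no bracket ever leaks a spurious $\partial_{\alpha_l}$ or $\partial_{\beta_l}$ component out of the $X$-coupling. This is settled once the closure property above is verified, after which the proof reduces to citing the span result of \cite{weinan2001ergodicity} and no further explicit bracket computation is needed.
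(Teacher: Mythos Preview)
Your proposal is correct and follows essentially the same route as the paper: verify the parabolic H\"ormander condition by invoking the E--Mattingly bracket computation \cite{weinan2001ergodicity} for the $(\alpha,\beta)$ directions and noting that $\kappa_1,\kappa_2>0$ supplies $\partial_{X_1},\partial_{X_2}$ directly. Your treatment is in fact more careful than the paper's one-line ``direct result'' remark, since you explicitly check that the added transport piece $\partial_{X_2}\psi\,\partial_{X_1}-\partial_{X_1}\psi\,\partial_{X_2}$ in $\X_0$ only perturbs the E--Mattingly brackets by terms already lying in $\mathrm{span}\{\partial_{X_1},\partial_{X_2}\}$.
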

\begin{proof}
This is a corollary of the Lemma 1.2 in \cite{weinan2001ergodicity}, in which, the same result for the truncated SNS equation \eqref{eqn:NS_w_Truncated} was obtained by showing that the Lie algebra generated by the vector field $\{\X_0,\X_1,\cdots,\X_{2N}\}$ has full rank in every point of $\R^{(N+1)^2-1}$. For the extended system \eqref{eqn:alpha_beta_simplified}, the full rank properties of the Lie algebra generated by $\{\X_0,\X_1,\cdots,\X_{2N},\partial_{X_1},\partial_{X_2}\}$ in $\R^{(N+1)^2+1}$ is a direct result of the assumption $\kappa_1,\kappa_2>0$. 
\end{proof}
Following the argument used in \cite{weinan2001ergodicity}, it is not hard to obtain the existence and uniqueness of the invariant measure for SDE \eqref{eqn:alpha_beta}. The ergodicity is harder to prove for this extended system because the standard Lyapunov function method introduced in \cite{weinan2001ergodicity} cannot be directly applied. This problem was recently solved by Bedrossian et al. \cite{bedrossian2019almost} with a slightly different stochastic analysis framework.
\section{Generalized fluctuation-dissipation relations}\label{sec:FDT}
In nonequilibrium statistical mechanics, a standard technique to deal with the transport problem is the fluctuation-dissipation relation (FDR) \cite{kubo1966fluctuation}. Generally speaking, the classical first FDR gives the response result of an equilibrium system observable to the external perturbations, and the second FDR connects the memory kernel of the generalized Langevin equation (GLE) for such a equilibrium system observable with the time autocorrelation of the fluctuation force. The classical FDRs were first proved for canonical Hamiltonian systems \cite{kubo1966fluctuation} and have become the cornerstone of the nonequilibrium statistical mechanics which enable us to study various transport phenomena quantitatively. Over the years, the generalization of FDRs to open systems in the NESS has been studied by many researchers \cite{agarwal1972fluctuation,baiesi2009fluctuations,baiesi2009nonequilibrium,marconi2008fluctuation,majda2005information,gritsun2008climate,zhu2021hypoellipticity3}. Due to the limitation of our knowledge, here we only mentioned works which directly inspired the current paper and refer to review papers such as \cite{maes2020response,marconi2008fluctuation} and the reference therein for detailed explorations.   

As we mentioned before, viscous fluids are typical nonequilibrium systems. Hence we are lead to study the {\em generalized} FDRs which are valid in the nonequilibrium regime. Specifically, we will derive the operator-form first FDR and the path-integral-form first FDR for a general observable $O(x(t))$ in the nonequilibrium, and then use the resulting response formula for the Lagrangian particle $O(x(t))=X(t)$ to study the turbulent dispersion. In addition, we use the Mori-Zwanzig (MZ) formalism to prove a generalized second FDR for the generalized Langevin equation (GLE) of the Lagrangian particle. This leads to an effective reduced-order model which imitates the dynamics of $X(t)$. We emphasize {\em a priori} that the presented results are based on the 2D SNS equation, but the methodology can be readily generalized to other fluid systems. 

\subsection{The first FDR for the Lagrangian particle}
\label{sec:1st_FDR}
\subsubsection{Operator-form first FDR}
\label{sec:1st_FDR_operator}
To derive the generalized first FDR for the 2D SNS equation, we begin with a review of the general theory for SDEs. To this end, we consider the following $d$-dimensional, time-homogeneous stochastic dynamical system driven by Gaussian white noise:
\begin{align}\label{eqn:general_SDE}
dx(t)=F(x(t))dt+\sigma(x(t))d\W(t),\qquad x(0)\sim \rho,
\end{align}
where $d\W(t)$ is the standard Wiener process with $d\W_i(t)$ independent with each other and $\rho$ is the distribution of the initial condition. As far as we are concerned, the generalized FDR for stochastic systems such as \eqref{eqn:general_SDE} was first derived by Agarwal \cite{agarwal1972fluctuation} and also discovered by many different researchers \cite{falcioni1990correlation,baiesi2009nonequilibrium,dal2019linear,majda2005information}. We will use the following version of the first FDR  \cite{zhu2021hypoellipticity3} to derive the linear response formula for fluid systems.
\begin{theorem}\label{thm1:1st-FDT}(Generalized-1st-FDR) Consider a general perturbation of the stochastic system \eqref{eqn:general_SDE}:
\begin{align}\label{eqn:sde_p}
dx(t)=F(x(t))dt+\sigma(x(t))d\W(t)+\delta G(t)\cdot\tilde{F}(x(t))dt+\sqrt{\delta} H(t)\cdot\tilde{\sigma}(x(t))d\W(t), \qquad x(0)\sim \rho,
\end{align}
where $\delta\ll 1$. We further assume that $\rho$ is the steady state distribution of the SDE \eqref{eqn:general_SDE} and it is a smooth function of $x$ which decays to 0 as $\|x\|\rightarrow \infty$. Then the following generalized first FDR holds for state space observable $O(t)=O(x(t))$:
\begin{equation}\label{Generalized_1st_FDT}
\begin{aligned}
\la O(t)\rangle_{\rho_{\delta}}-\la O(t)\ra=-\sum_{i=1}^d&\int_0^{t}\left\la\frac{1}{\rho} [\partial_{x_i}\tilde{F}_i\rho]O(t-s)\right\ra\delta G_i(s)ds\\
&+\sum_{i=1}^d\int_0^t\left\la\left(\partial_{x_i}^2\tilde{\sigma}_i+\frac{2}{\rho}\partial_{x_i}\tilde{\sigma}_i\partial_{x_i}\rho+\frac{1}{\rho}\tilde{\sigma}_i\partial_{x_i}^2\rho\right)O(t-s)\right\ra\delta H_i(s)ds+O(\delta^2).
\end{aligned}
\end{equation}
where $\rho_{\delta}$ is the steady state distribution of the perturbed system \eqref{eqn:sde_p}.
\end{theorem}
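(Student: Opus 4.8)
I would prove Theorem~\ref{thm1:1st-FDT} by first-order perturbation theory on the Fokker--Planck equation associated with \eqref{eqn:general_SDE}--\eqref{eqn:sde_p}, using the stationarity identity $\K^*\rho=0$ together with the $L^2$-duality between the forward and backward Kolmogorov semigroups.

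First I would write down the two forward generators. For the unperturbed equation \eqref{eqn:general_SDE} the Fokker--Planck operator is $\K^*\mu=-\sum_i\partial_{x_i}(F_i\mu)+\frac12\sum_i\partial_{x_i}^2(\sigma_i^2\mu)$, and by assumption $\rho$ is a smooth, decaying solution of $\K^*\rho=0$. For the perturbed equation \eqref{eqn:sde_p} the drift acquires $\delta G_i(t)\tilde F_i$ and the $i$-th noise coefficient becomes $\sigma_i+\sqrt{\delta}\,H_i(t)\tilde\sigma_i$; expanding the resulting time-dependent Fokker--Planck operator in $\delta$ and keeping the first-order part yields $\K^*_\delta(t)=\K^*+\delta\,\Ai^*(t)+o(\delta)$ with
\begin{align*}
\Ai^*(t)\mu=\sum_i\Bigl(-\,G_i(t)\,\partial_{x_i}(\tilde F_i\mu)+H_i(t)\,\partial_{x_i}^2(\tilde\sigma_i\mu)\Bigr).
\end{align*}
Tracking the $\sqrt{\delta}$ scaling of the noise perturbation so as to collect exactly the order-$\delta$ contribution is a small bookkeeping point that needs to be checked. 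Let $p_\delta(\cdot,t)$ denote the density of the perturbed process started from $\rho$, i.e.\ $\partial_t p_\delta=\K^*_\delta(t)p_\delta$ with $p_\delta(\cdot,0)=\rho$.

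Since $e^{s\K^*}\rho=\rho$ for every $s\ge0$, Duhamel's (variation-of-constants) formula collapses to
\begin{align*}
p_\delta(\cdot,t)=\rho+\delta\int_0^t e^{(t-s)\K^*}\bigl[\Ai^*(s)\rho\bigr]\,ds+O(\delta^2),
\end{align*}
so that integrating against $O$ and using $\int O\,(e^{\tau\K^*}\mu)\,dx=\int(e^{\tau\K}O)\,\mu\,dx$ (valid because $\K^*$ is the formal adjoint of the backward generator $\K$ in the flat pairing, and the smoothness and decay of $\rho$ render the integrations by parts free of boundary terms) gives
\begin{align*}
\la O(t)\rangle_{\rho_{\delta}}-\la O(t)\ra=\delta\int_0^t\!\!\int (e^{(t-s)\K}O)(x)\,\bigl[\Ai^*(s)\rho\bigr](x)\,dx\,ds+O(\delta^2).
\end{align*}
Writing $\Ai^*(s)\rho=\rho\cdot\bigl(\Ai^*(s)\rho/\rho\bigr)$ and invoking stationarity together with the Markov (tower) property,
\begin{align*}
\int (e^{(t-s)\K}O)\,\Ai^*(s)\rho\,dx=\Bigl\la\frac{\Ai^*(s)\rho}{\rho}\cdot e^{(t-s)\K}O\Bigr\ra=\Bigl\la\frac{1}{\rho}\bigl[\Ai^*(s)\rho\bigr](x(0))\,O(t-s)\Bigr\ra,
\end{align*}
with $O(t-s)=O(x(t-s))$. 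Inserting the explicit $\Ai^*(s)\rho$ and expanding $\partial_{x_i}^2(\tilde\sigma_i\rho)=(\partial_{x_i}^2\tilde\sigma_i)\rho+2(\partial_{x_i}\tilde\sigma_i)(\partial_{x_i}\rho)+\tilde\sigma_i\partial_{x_i}^2\rho$ reproduces \eqref{Generalized_1st_FDT} term by term, the drift and noise contributions being weighted respectively by $\delta G_i(s)$ and $\delta H_i(s)$.

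I expect the main obstacle to be analytic rather than algebraic: justifying the Duhamel expansion with an honest $O(\delta^2)$ remainder and the duality step for a possibly \emph{unbounded} observable such as $O(t)=X(t)$ on the non-compact phase space $\R^{(N+1)^2+1}$. This requires finite-time control of the perturbed semigroup and integrability against $\rho$ (and against the evolved observable) of the weight functions $\partial_{x_i}\tilde F_i+\tilde F_i\,\partial_{x_i}\!\log\rho$, $\partial_{x_i}^2\tilde\sigma_i$, $\partial_{x_i}\tilde\sigma_i\,\partial_{x_i}\!\log\rho$ and $\tilde\sigma_i\,\partial_{x_i}^2\rho/\rho$. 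Under the stated hypotheses one can close this by a cutoff argument, first establishing the identity for a compactly supported modification of $O$ and then passing to the limit; when the general theorem is specialized to the extended SNS system \eqref{eqn:alpha_beta_simplified}, the smoothness of the transition density (Proposition~\ref{lemma1}) and the exponential ergodicity recalled for \eqref{eqn:alpha_beta0} supply the required estimates.
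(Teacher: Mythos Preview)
Your approach is correct and essentially coincides with what the paper invokes: the paper does not supply its own proof of Theorem~\ref{thm1:1st-FDT} but cites \cite{zhu2021hypoellipticity3,kubo1966fluctuation,agarwal1972fluctuation} and describes the derivation as a ``perturbation expansion of the evolution operator $e^{t\K}$'', which is exactly your Duhamel/Dyson expansion carried out on the adjoint (Fokker--Planck) side and then transferred back via the $L^2$-duality $\int O\,e^{\tau\K^*}\mu=\int(e^{\tau\K}O)\,\mu$. The only cosmetic difference is that you perturb $p_\delta$ and then dualize, whereas the cited operator-form argument perturbs $e^{t\K_\delta}$ directly; both routes produce the same first-order integral and the same identification $\Ai^*(s)\rho/\rho$ of the response function.
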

Linear response formula \eqref{Generalized_1st_FDT} is called the {\em operator-form first FDR} since its derivation \cite{zhu2021hypoellipticity3,kubo1966fluctuation} only uses the perturbation expansion of the evolution operator $e^{t\K}$. We note that it requires to know the exact form of $\rho$ in order to apply this formula since functions of $\rho$, e.g. $[\partial_{x_i}\tilde F_i\rho]/\rho$, exist in the ensemble average $\langle\cdot\rangle_{\rho}$. Applying this result to SDE \eqref{eqn:alpha_beta_simplified} for observable $X(t)$, we can get the linear response relation for Lagrangian particle dispersion. Change symbolically the parameter  $\nu,\kappa_1,\kappa_2,\sigma_l,\gamma_l$ in \eqref{eqn:alpha_beta_simplified}  as $\nu',\kappa_1',\kappa_2',\sigma_l',\gamma_l'$, then we get: 
\begin{prop}\label{cor1:1st-FDT(ver1)}For a fixed quantity $\delta\ll 1$, if the parameters $\nu',\kappa_1',\kappa_2',\sigma_l',\gamma_l'$ satisfy $\sum_l^N\nu'|l|^2\sim O(\delta)$, $\kappa_1',\kappa_1'\sim O(\delta)$ and $\sum_{l\in K}^N(\sigma_l'+\gamma_l')\sim O(\delta)$, where $\sim$ means approximately as, then we have the following linear response relation for the Lagrangian particle $X(t)$:
\begin{equation}\label{eqn:Linear_response}
\begin{aligned}
\la X_i(t)&\rangle_{\rho_{\delta}}-\la X_i(t)\ra=2\sum_{l}^{N}\nu'|l|^2\int_0^{t}\la X_i(s)\ra ds+\sum_{l}^N\int_0^t\nu'|l^2|\left\la \left(\frac{\alpha_l}{\rho}\partial_{\alpha_l}\rho +\frac{\beta_l}{\rho}\partial_{\beta_l}\rho \right)X_i(s)\right\ra ds\\
&+\int_0^t\left\la\left(\frac{\kappa_1'}{\rho}\partial_{X_1}^2\rho+\frac{\kappa_2'}{\rho}\partial_{X_1}^2\rho\right)X_i(s)\right\ra ds+\sum_{l\in K}^N\int_0^t\left\la\left(\frac{\sigma_l'}{\rho}\partial_{\alpha_l}^2\rho+\frac{\gamma_l'}{\rho}\partial_{\beta_l}^2\rho\right)X_i(s)\right\ra ds+O(\delta^2).
\end{aligned}
\end{equation}
\end{prop}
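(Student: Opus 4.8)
This statement should come out as a direct corollary of Theorem \ref{thm1:1st-FDT}: the whole task is to recognize the change of parameters as a perturbation of the type \eqref{eqn:sde_p} and then substitute into \eqref{Generalized_1st_FDT}. The plan is first to recast \eqref{eqn:alpha_beta_simplified} in the form \eqref{eqn:general_SDE}, with state $x=(\{\alpha_l\},\{\beta_l\},X_1,X_2)\in\R^{(N+1)^2+1}$, unperturbed drift $F$ collecting $F_l,G_l$ together with the existing linear and advective terms, and diagonal diffusion $\sigma$ (entries $\sigma_l$ on the $\alpha_l$-slots for $l\in K$, $\gamma_l$ on the $\beta_l$-slots, $\kappa_1,\kappa_2$ on $X_1,X_2$). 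Passing from $\nu,\kappa_1,\kappa_2,\sigma_l,\gamma_l$ to $\nu',\kappa_1',\kappa_2',\sigma_l',\gamma_l'$ is then precisely a perturbation of the form \eqref{eqn:sde_p}: the drift acquires the vector field $\tilde F$ whose $\alpha_l$- and $\beta_l$-components are $-|l|^2\alpha_l$ and $-|l|^2\beta_l$ (all other components vanishing), with the constant $\nu'$ in the role of $\delta G$; and the noise is perturbed along the $\alpha_l,\beta_l$ ($l\in K$), $X_1$, $X_2$ directions by \emph{constant} $\tilde\sigma$'s, the corresponding strengths being (up to a fixed normalization of $\tilde\sigma$) $\sigma_l',\gamma_l',\kappa_1',\kappa_2'$. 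Since all these perturbing fields are autonomous, $G(s)$ and $H(s)$ are constant in $s$ and pull out of the convolutions in \eqref{Generalized_1st_FDT}; one has to keep track of the precise power of the small parameter attached to the noise perturbation so that the strengths land exactly on $\sigma_l',\gamma_l',\kappa_i'$ rather than their squares.

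Next I would substitute this data into \eqref{Generalized_1st_FDT}. For the drift response, the one computation needed is $\partial_{\alpha_l}(\tilde F_{\alpha_l}\rho)=\partial_{\alpha_l}(-|l|^2\alpha_l\rho)=-|l|^2\rho-|l|^2\alpha_l\partial_{\alpha_l}\rho$, so the $\alpha_l$-contribution becomes $\nu'|l|^2\int_0^t\langle X_i(t-s)\rangle_{\rho}\,ds+\nu'|l|^2\int_0^t\langle(\alpha_l/\rho)\partial_{\alpha_l}\rho\,X_i(t-s)\rangle_{\rho}\,ds$, and symmetrically for $\beta_l$; summing over all modes $l$ retained by $P_N$ produces the factor-$2$ term $2\sum_{l}^N\nu'|l|^2\int_0^t\langle X_i\rangle_{\rho}\,ds$ together with $\sum_l^N\nu'|l|^2\int_0^t\langle((\alpha_l/\rho)\partial_{\alpha_l}\rho+(\beta_l/\rho)\partial_{\beta_l}\rho)X_i\rangle_{\rho}\,ds$. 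For the noise response, the perturbing $\tilde\sigma$'s are constants, so the terms $\partial_{x_i}\tilde\sigma_i$ and $\partial_{x_i}^2\tilde\sigma_i$ in \eqref{Generalized_1st_FDT} drop and only $\tilde\sigma_i\partial_{x_i}^2\rho/\rho$ survives, yielding the $\kappa_1'\partial_{X_1}^2\rho/\rho$, $\kappa_2'\partial_{X_2}^2\rho/\rho$, $\sigma_l'\partial_{\alpha_l}^2\rho/\rho$ and $\gamma_l'\partial_{\beta_l}^2\rho/\rho$ contributions. Relabeling the dummy variable $s\mapsto t-s$ in each convolution (legitimate by time-homogeneity and stationarity) then gives \eqref{eqn:Linear_response} verbatim. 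The remainder is $O(\delta^2)$ because, by hypothesis, $\sum_l^N\nu'|l|^2$, $\kappa_1',\kappa_2'$ and $\sum_{l\in K}^N(\sigma_l'+\gamma_l')$ are all $O(\delta)$, hence every perturbation strength is $O(\delta)$ and Theorem \ref{thm1:1st-FDT} is accurate to that order.

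The part that genuinely requires work — the rest being bookkeeping — is checking that the standing hypotheses of Theorem \ref{thm1:1st-FDT} actually hold for the extended system, and that the perturbative expansion may be interchanged with the time integrals. Hypoellipticity (Proposition \ref{lemma1}, following \cite{weinan2001ergodicity}) supplies smoothness of $\rho$ in the $(\alpha,\beta)$-block, and the Lyapunov-function estimates of \cite{weinan2001ergodicity,bedrossian2019almost} give existence, uniqueness and the decay at infinity of the $(\alpha,\beta)$-marginal; but one must be careful that $X(t)$ is a displacement, so the coupled process has no stationary law in the $X$-directions, and $\rho$ together with the averages $\langle X_i(s)\rangle_{\rho}$ and $\langle(\cdot)X_i(s)\rangle_{\rho}$ must be read as expectations of \eqref{eqn:alpha_beta_simplified} launched from the $(\alpha,\beta)$-NESS with $X(0)$ prescribed, the $\partial\rho$-factors acting only on the $(\alpha,\beta)$ block (on a periodic torus those $X$-derivative terms would in fact vanish). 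The second subtlety is that $\tilde F$ has linear growth, so justifying the exchange of the perturbation expansion with $\int_0^t$ and controlling the $O(\delta^2)$ remainder needs the stationary moment bounds from the same Lyapunov structure rather than a bounded-perturbation argument. Once these points are secured, the term-by-term identification above is mechanical.
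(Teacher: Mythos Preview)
Your proposal is correct and takes essentially the same approach as the paper: the paper simply presents this proposition as a direct application of Theorem~\ref{thm1:1st-FDT} to the extended SDE \eqref{eqn:alpha_beta_simplified} with observable $O(t)=X_i(t)$, without spelling out the substitution. Your term-by-term computation (the product rule on $\partial_{\alpha_l}(-|l|^2\alpha_l\rho)$ yielding the factor $2$ from the $\alpha_l$ and $\beta_l$ blocks, the reduction of the noise contribution to $\tilde\sigma_i\partial_{x_i}^2\rho/\rho$ for constant $\tilde\sigma$, and the $s\mapsto t-s$ relabeling) is exactly the bookkeeping the paper leaves implicit, and your caveats about the $X$-marginal not being stationary and the linear growth of $\tilde F$ go beyond what the paper addresses.
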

\noindent
By specifying different choice of the parameter set $\nu',\kappa_1',\kappa_2',\sigma_l',\gamma_l'$, we can obtain the equilibrium and nonequilibrium linear response formula for $X(t)$. Specifically, we have: 
\paragraph{\textbf{Equilibrium linear response}} When $\{\nu',\kappa_1',\kappa_2',\sigma_l',\gamma_l'\}$=$\{\nu,\kappa_1,\kappa_2,\sigma_l,\gamma_l\}$, then the viscosity term, random noise term and the molecular diffusion term in the extended NS system \eqref{eqn:alpha_beta_simplified} are small quantities, therefore \eqref{eqn:alpha_beta_simplified} can be viewed as a perturbed (truncated) Euler equation. Following the turbulence thermalization argument, if $\rho$ is specified to be the equilibrium ensemble \eqref{inv:Euler_micro_T} or \eqref{inv:Euler_cano}, then \eqref{eqn:Linear_response} yields a {\em equilibrium linear response formula} for turbulence. It allows, at least formally, the calculation of the averaged turbulent dispersion $\langle X(t)\rangle_{\rho_{\delta}}$ for fluid systems near the equilibrium. 
The near equilibrium fluid system can be the highly turbulent deterministic NS flow with $0<\nu\ll 1,0\leq \kappa_1\ll 1,0\leq \kappa_2\ll 1,\sigma_l=\gamma_l=0$ and the highly turbulent SNS flow with $0<\nu\ll 1$, $0\leq\kappa_1\ll 1$, $0\leq\kappa_2\ll 1,0<\sigma_l\ll 1,0<\gamma_l\ll 1$. The main computational challenge is the determination of the {\em extended} system stationary probability density $\rho$, which should satisfy the steady state Fokker-Planck equation $\K^*\rho=0$, where $\K^*$ is the adjoint of operator \eqref{op:K_NS_extend} with $\sigma_l=\gamma_l=0$.
\paragraph{\textbf{Nonequilibrium linear response}} When $\{\nu',\kappa_1',\kappa_2',\sigma_l',\gamma_l'\}$=$\{\nu(\delta),\kappa_1(\delta),\kappa_2(\delta),\sigma_l(\delta),\gamma_l(\delta)\}$, where terms such as $\nu(\delta)$ means a value in the small (of the order $O(\delta)$) neighborhood of an arbitrary value $\nu$. Then \eqref{eqn:Linear_response} is a {\em nonequilibrium linear response formula} in the steady state $\rho$, where $\rho$ satisfies $\K^*\rho=0$ with parameter $\{\nu,\kappa_1,\kappa_2,\sigma_l,\gamma_l\}$. This formula can be applied to calculated the averaged turbulent dispersion $\langle X(t)\rangle_{\rho_{\delta}}$ when the fluid system slightly deviate from the NESS. The perturbation can comes from a change of the fluid viscosity, i.e. $v(\delta)\Delta \omega$, or the stochastic force, or the Lagrangian particle molecular diffusivity. Without knowing the NESS distribution $\rho_S$ explicitly, the evaluation of \eqref{eqn:Linear_response} is generally harder than the equilibrium case.    
\subsubsection{Path-integral-form first FDR}
\label{sec:1st_FDR_path}
The above operator-form first FDR can be proved rigorously using standard perturbation analysis \cite{agarwal1972fluctuation,kubo1966fluctuation,zhu2021hypoellipticity3}. However, one may find that it is hard to use because the steady state distribution $\rho$ appears in the response function, which is generally difficult to obtain due to the high-dimensionality of the steady state Fokker-Planck equation $\K^*\rho=0$. 
This problem is more serious when we deal with the turbulent dispersion problem since even for equilibrium systems, one needs to solve for the stationary probability density $\rho$ of the {\em extended} system \eqref{eqn:alpha_beta_simplified} in order to apply linear response formula \eqref{eqn:Linear_response}. In recent years, another type of first FDR was developed \cite{baiesi2009nonequilibrium,baiesi2010nonequilibrium,baiesi2009fluctuations} and provided us a general nonequilibrium response formula which does not rely on the prior knowledge of the steady state distribution $\rho$. The method uses path-integral representation of the functional probability density associated with SDEs, hence can be termed as the {\em path-integral-form first FDR}. 

In this subsection, we will briefly explain the physical meaning of this new formula and then derive heuristically the path-integral-form first FDR for turbulent systems using the aforementioned analogy between the fluid equations and the heat conduction models. We further show that the obtained response formula can be rigorously proved for equilibrium fluid system with canonical equilibrium measure \eqref{inv:Euler_cano} and its extension for nonequilibrium fluid systems is direct. Since many formulas will be given, some of them can be proved while others are just based on physical arguments. For readers' convenience, we state in advance the mathematical rigor of the obtained results. First, interpreting the functional Radon-Nikodym derivative \eqref{Radon-Nikodym_derivative} as the total entropy production is a physical argument, although it makes sense intuitively. Its exact form can be given explicitly {\em only} for stochastic systems with the {\em non-degenerate} white noise. Hence the equilibrium formula \eqref{R(t)_Omega_2p}-\eqref{Mobility_X(t)}, the nonequlibrium formula \eqref{Mobility_X(t)2}-\eqref{R(s)_non-degenerate} and the nonlinear response formula \eqref{nonlinear_response}-\eqref{R(s)_nonlinear_response} are solid results. For degenerate stochastic system, due to the similarity of the heat conduction model with the fluid equation and the fact that the response formula for the former is verified in \cite{maes2020response}, we think formulas such as \eqref{R(s)_fluid_general}, \eqref{Mobility_X(t)_1} and \eqref{R(s)_nonlinear_response1} are physically reasonable. The response formula \eqref{R(s)_fluid_NESS_deter} and \eqref{N(t)_deter} for deterministic turbulence are obtained using a somewhat dubious singular perturbation argument, readers may treat them as conjectures.  
To motivate the discussion, we first note an important observation by Gallavotti \cite{gallavotti1996chaotic} which is also emphasised by Baiesi et al.\cite{baiesi2009nonequilibrium}: The equilibrium system linear response formula is not just the outcome of a pertubative calculation of the evolution operator, but can also be understood as the correlation between the perturbative force and the entropy production induced by it. By analyzing some prototype examples such as the Markov jump process and the overdamped Langevin dynamics, Baiesi et al \cite{baiesi2009nonequilibrium,baiesi2010nonequilibrium,baiesi2009fluctuations,maes2020response} showed that for nonequilibrium systems, aside from the entropy flux contribution $\S(t)$, a nonequilibrium extension of the linear response formula should includes another term which encodes the influence to the observable due to the change of the dynamical activity. This contribution is called as the {\em frenecy}, denoted as $\mathcal{T}(t)$. These two terms constitute the {\em change of the total entropy production} for a system transient from one nonequilibrium state to another nonequilibrium state. In the end, one may express the path-integral probability density functional $P[x^{\delta}(t)]$ for perturbed system as 
\begin{align}\label{Radon-Nikodym_derivative}
    P[x^{\delta}(t)]=
    \exp\left\{\frac{1}{k_B}\int_0^t\delta\E(s)ds\right\}P[x(t)]=
    \exp\left\{\frac{1}{2}\S(t)-\frac{1}{2}\mathcal{T}(t)\right\}P[x(t)]
\end{align}
where $x(t)$ and $x^{\delta}(t)$ are the paths of the underlying perturbed and unperturbed stochastic system. $P[x(t)]$ is the path-integral probability density functional corresponding to the unperturbed dynamics. $\delta\E(s)$ is the change of the entropy production rate. Formula \eqref{Radon-Nikodym_derivative} can be understood as a functional Radon-Nikodym derivative. As a result, a generalized response formula can be obtained using the following path-integral ensemble average formula:
\begin{equation}\label{ensemble_av}
\begin{aligned}
\langle O(x(t))\rangle_{\rho_{\delta}}&=\int \D[x(t)]P[x^{\delta}(t)|x(0)]O(x(t))\\
&=\int\D[x(t)] \frac{P[x^{\delta}(t)|x(0)]}{P[x(t)|x(0)]}P[x(t)|x(0)]O(x(t))\\
&=\int\D[x(t)] \frac{P[x^{\delta}(t)]}{P[x(t)]}P[x(t)]O(x(t))\\
&=\int\D[x(t)] \exp\left\{\frac{1}{k_B}\int_0^t\delta\E(s)ds\right\}P[x(t)]O(x(t))=\left\langle e^{\frac{1}{k_B}\int_0^t\delta\E(s)ds}O(x(t))\right\rangle_{\rho},
\end{aligned}
\end{equation}
which is the pre-announced formula \eqref{intro_entropy_prod_FDR}. In \eqref{ensemble_av}, $\D[x(t)]$ is the functional differential and $x(0)\sim \rho$. Due to the independence of the randomness of $x(0)$ and the path $x(t)$, we used the fact that $P[x(t)|x(0)]=P[x(t)]$, $P[x^{\delta}(t)|x(0)]=P[x^{\delta}(t)]$. If otherwise stated, in this subsection we use the notation $\langle\cdot\rangle_{\mu}$ to represent the average of the form \eqref{ensemble_av}, which is with respect to the {\em dynamical} ensemble of the path $x(t)$ with initial condition $x(0)\sim\mu$. By invoking the perturbation expansion $P[x^{\delta}(t)]/P[x(t)]=1+\frac{1}{2}\S(t)-\frac{1}{2}\mathcal{T}(t)+O(\delta^2)$, where $\frac{1}{2}\S(t)-\frac{1}{2}\mathcal{T}(t)$ is a function of the path $x(t)$ of the magnitude $O(\delta)$, we get the generalized Kubo's linear response formula:
\begin{equation}\label{ensemble_av1}
\begin{aligned}
\langle O(x(t))\rangle_{\rho_{\delta}}&=\int\D[x(t)]P[x(t)|x(0)]O(x(t))+\int\D[x(t)]\left[\frac{1}{2}\S(t)-\frac{1}{2}\mathcal{T}(t)\right]P[x(t)|x(0)]O(x(t))+O(\delta^2)\\
&=\langle O(x(t))\rangle_{\rho}+\left\langle \left[\frac{1}{2}\S(t)-\frac{1}{2}\mathcal{T}(t)\right]O(x(t))\right\rangle_{\rho}+O(\delta^2).
\end{aligned}
\end{equation}

As an application of this methodology, we now review Maes' result (Example IV.2 in \cite{maes2020response}) on the path-integral-form first FDR for a heat conduction model. To this end, we consider the following perturbation to the heat conduction model with Langevin thermostats, i.e. Eqn \eqref{SDE:n_d_heat}:
\begin{equation}\label{SDE:heat_perturbation}
\begin{dcases}
\begin{aligned}
dq_i&=p_idt\\
dp_i&=-\partial_{q_i}H dt-\gamma_ip_idt+\sqrt{2k_BT_i\gamma_i}d\W_i(t)
\end{aligned}
\end{dcases}
\quad
\xRightarrow[]{-\delta\mu p_i}
\quad
\begin{dcases}
\begin{aligned}
dq_i&=p_idt\\
dp_i&=-\partial_{q_i}H dt-\gamma_i'p_idt+\sqrt{2k_BT_i\gamma_i}d\W_i(t),
\end{aligned}
\end{dcases}
\end{equation}
where $\gamma_i'=\gamma_i+\delta\mu$. Here we assume that each particle is in contact with the thermostat of temperature $T_i$, hence $\gamma_i>0$ for all $i\in\G$. For any observable function $O(t)=O(x(t))=O(p(t),q(t))$ (a typical example is the averaged heat flux \eqref{def:local_global_flux}), we have the linear response formula 
\begin{align}\label{heat_linear_response}
    \langle O(t)\rangle_{\rho_{\delta}}-\langle O(t)\rangle_{\rho}=\int_0^t R_{O,p}(t,s)ds=\delta\chi_{O,p}(t),
\end{align}
where $R_{O,p}(t,s)$ is known as the response function and the subscript $O$ and $p$ indicates it is the response of $O(t)$ with respect to the perturbation at the momentum coordinates, i.e. $-\delta\mu p_i$. Its time-integral dividing the perturbation strength $\delta$ yields $\chi_{O,p}(t)$, which is the generalized mobility of the observable $O(t)$ in the NESS. Using the path-integral method, we can represent $R_{O,p}(t,s)$ as 
\begin{align}\label{Mobility}
R_{O,p}(t,s)
=\left.\frac{\partial}{\partial\delta}\langle O(t)\rangle_{\rho_{\delta}}\right\vert_{\delta=0}
=\frac{1}{2}\left\langle\frac{\partial\S(t)}{\partial\delta}O(t)\right\rangle_{\rho}-\frac{1}{2}\left\langle\frac{\partial\mathcal{T}(t)}{\partial\delta}O(t)\right\rangle_{\rho}
=\langle \mathcal{R}(s)O(t)\rangle_{\rho}.
\end{align}
To avoid confusion, hereafter we used notation $\partial/\partial\delta$ to represent the functional derivative with respect to $\delta$. In \eqref{Mobility}, $\S(t)$ and $\mathcal{T}(t)$ are respectively the excess of entropy flux and frenesy contributions to the response quantity $\mathcal{R}(s)$. Specifically, we have
\begin{align}\label{S(t)_T(t)}
\S(t)=-\sum_{i=1}^{|\G|}\frac{\delta\mu}{k_BT_i}\int_0^tp_i^2(s)ds,
\quad
\mathcal{T}(t)=-\sum_{i=1}^{|\G|}
\frac{2\delta\mu}{k_BT_i\gamma_i}\int_0^tp_i^2(s)ds+\sum_i^{|\G|}\frac{2\sqrt{2}\delta\mu}{\sqrt{k_BT_i\gamma_i}}\int_0^tp_i(s)d\W_i(s)+O(\delta^2),
\end{align}
where $|\G|$ is the cardinality of set $\G$ which is the total number of particles of the heat conduction model. This indicates that the whole system has $|\G|$ thermostats. We note that $\mathcal{T}$ can be equivalently represented as:
\begin{align}
\mathcal{T}(t)=\tau(t)+O(\delta^2)=\sum_{i=1}^{|\G|}
\frac{\delta\mu}{k_BT_i\gamma_i}p_i^2(s)+\sum_i^{|\G|}\frac{2\delta\mu}{\sqrt{k_BT_i\gamma_i}}\int_0^tp_i(s)\partial_{q_i}H(s)ds+O(\delta^2).
\end{align}
The physical meaning of the excess entropy flux and frenecy in \eqref{S(t)_T(t)} can be interpreted as follows. For each $i\in\B$, $-\delta\mu\int_0^tp_i^2(s)ds/T_i$ is the heat transmitted to the reservoir divided by the temperature, therefore is the excess of entropy flux created by the $i$-th thermostat. $\S(t)$ is then the total excess of entropy flux for the heat conduction model. On the other hand,
\begin{align}
-\frac{2\delta\mu}{k_BT_i\gamma_i}p_i(s)ds+\frac{2\sqrt{2}\delta\mu}{\sqrt{k_BT_i\gamma_i}}d\W_i(s)+O(\delta^2)
\end{align}
is the infinitesimal change of the heat conduction model that has direct interactions with $p_j(s)$. Hence $\mathcal{T}(t)$ accounts the total excess of dynamical activity (frenecy) of the stochastic model. With $\S(t),\mathcal{T}(t)$ defined as \eqref{S(t)_T(t)}, we can get that the response quantity $\mathcal{R}(s)$ in \eqref{Mobility} has the form:
\begin{align}\label{R(s)}
\mathcal{R}(s)=\sum_{i=1}^{|\G|}\left(\frac{\delta\mu}{k_BT_i\gamma_i}-\frac{\delta\mu}{2k_BT_i}\right)p_i^2(s)-\sum_i^{|\G|}\frac{\sqrt{2}\delta\mu}{\sqrt{k_BT_i\gamma_i}}p_i(s)W_i(s),
\end{align}
where $d\W_i(s)=W_i(s)ds$. As we discussed in Section \ref{sec:Noneqn_viscous}, the 2D SNS equation \eqref{eqn:NS} with stochastic forces (the case with NESS2) can be interpreted as the heat conduction model with Langevin thermostats. This allows us to derive analogically the path-integral-form first FDR for the fluid system \eqref{eqn:alpha_beta0} and its extension \eqref{eqn:alpha_beta_simplified}. To this end, we consider the following perturbation of \eqref{eqn:alpha_beta_simplified}:
\begin{align}\label{FDR_perturbation}
    \begin{dcases}
    d\alpha_l&=F_{l}(\{\alpha_l\},\{\beta_l\})dt-\nu|l|^2\alpha_ldt+\sigma_ld\B_l\\
    d\beta_l&=G_{l}(\{\alpha_l\},\{\beta_l\})dt-\nu|l|^2\beta_ldt+\gamma_ld\W_l\\
    dX_1&=\partial_{X_2}\psi(X,t)dt+\kappa_1d\D_1(t)\\
    dX_2&=-\partial_{X_1}\psi(X,t)dt+\kappa_2d\D_2(t)
    \end{dcases}
    \quad
\xRightarrow[]{+\delta\mu\Delta \omega}
\quad
    \begin{dcases}
    d\alpha_l&=F_{l}(\{\alpha_l\},\{\beta_l\})dt-\nu'|l|^2\alpha_ldt+\sigma_ld\B_l\\
    d\beta_l&=G_{l}(\{\alpha_l\},\{\beta_l\})dt-\nu'|l|^2\beta_ldt+\gamma_ld\W_l\\
    dX_1&=\partial_{X_2}\psi(X,t)dt+\kappa_1d\D_1(t)\\
    dX_2&=-\partial_{X_1}\psi(X,t)dt+\kappa_2d\D_2(t),
    \end{dcases}
\end{align}
where $\nu'=\nu+\delta\mu$. To be noticed that in \eqref{FDR_perturbation}, the dynamics for the characteristic Lagrangian particle $X(t)$ is unchanged, hence in the linear order $O(\delta)$, the definition of $\S(t)$ and $\mathcal{T}(t)$ do not rely on $X(t)$. On the other hand, since the dynamics for $\alpha_l,\beta_l$ are independent of $X(t)$, for observable of the form $f(\{\alpha_l\},\{\beta_l\})$, any ensemble average with respect to the steady state distribution of the joint system \eqref{eqn:alpha_beta_simplified} equals to the ensemble average with respect to the steady state of Eqn \eqref{eqn:alpha_beta0}. In the following context, to distinguish it from the extended system, we use notation $\rho'$ to represent the average of the steady state of SNS equation \eqref{eqn:alpha_beta0}, with $\rho'=\rho_{eq}'$ corresponding to the equilibrium density and $\rho'=\rho_{S}'$ corresponding to the NESS density. Using this notation and imitating the definition of the excess of entropy flux for the heat conduction model, we may define $\S(t)$ for the fluid system as:
\begin{align}\label{entropy_flux}
    \S(t)=-\sum_l\frac{\delta\mu|l|^2}{\langle\alpha_l^2\rangle_{\rho'}}\int_0^t\alpha_l^2(s)ds-\frac{\delta\mu|l|^2}{\langle\beta^2_l\rangle_{\rho'}}\int_0^t\beta_l^2(s)ds.
\end{align}
We will explain later why the thermostat temperature $T_i$ is replaced by $\langle\alpha_l^2\rangle_{\rho'}$ and $\langle\beta_l^2\rangle_{\rho'}$. Similarly, the path-dependent frenesy can be defined as:
\begin{equation}\label{frenesy}
\begin{aligned}
\mathcal{T}(t)=\tau(t)+O(\delta^2)
=-\sum_l\frac{2\delta\mu\nu|l|^4}{\sigma_l^2}\int_0^t&\alpha_l^2(s)ds
+\frac{2\delta\mu|l|^2}{\sigma_l}\int_0^t\alpha_l(s)d\B_l(s)\\
&-\sum_l\frac{2\delta\mu\nu|l|^4}{\gamma_l^2}\int_0^t\beta_l^2(s)ds
+\frac{2\delta\mu|l|^2}{\gamma_l}\int_0^t\beta_l(s)d\W_l(s)+O(\delta^2).
\end{aligned}
\end{equation}
Naturally, we get the following expression for the response quantity $\mathcal{R}(s)$ in \eqref{Mobility}:
\begin{equation}\label{R(s)_fluid_general}
\begin{aligned}
\mathcal{R}(s)=\sum_{l}\bigg(\frac{\delta\mu\nu|l|^4}{\sigma_l^2}-\frac{\delta\mu|l|^2}{2\langle\alpha_l^2\rangle_{\rho'}}\bigg)\alpha_l^2(s)
+
\left(\frac{\delta\mu\nu|l|^4}{\gamma_l^2}-\frac{\delta\mu|l|^2}{2\langle\beta_l^2\rangle_{\rho'}}\right)\beta_l^2(s)
-\frac{\delta\mu|l|^2}{\sigma_l}\alpha_l(s)B_l(s)
-\frac{\delta\mu|l|^2}{\gamma_l}\beta_l(s)W_l(s),
\end{aligned}
\end{equation}
where $d\B_l(s)=B_l(s)ds$ and $d\W_l(s)=W_l(s)ds$. The above discussion is for the 2D SNS equation with NESS2. Based on these results and a singular perturbation analysis, we can derive a {\em tentative} path-integral-form linear response formula for the deterministic turbulence corresponding to NESS1 and NESS3. The trick is to view the deterministic NS equation as a singularly perturbed 2D SNS (in the sense of \cite{matkowsky1977exit}) in the limit of $\sigma_l,\gamma_l\rightarrow 0$. Relevant analysis in this regard is provided in \ref{app:determ_turbulence_FDR}. The final result we obtained is that the deterministic turbulence case (NESS1 and NESS3) has the following response quantity: 
\begin{equation}\label{R(s)_fluid_NESS_deter}
\begin{aligned}
\mathcal{R}(s)&=-\sum_{l}\frac{\delta\mu|l|^2}{2\langle\alpha_l^2\rangle_{\rho'}}\alpha_l^2(s)+\frac{\delta\mu|l|^2}{2\langle\beta_l^2\rangle_{\rho'}}\beta_l^2(s),
\qquad \text{(NESS1), (NESS3)}.
\end{aligned}
\end{equation}
We may interpret the physical meaning of this result as follows: In the discussion of the Gaussian thermostat model, we mentioned that when $\alpha$ equals a constant, then it can be interpreted as a ``bare'' heat conduction model. Without the Langevin thermostat, the system undergoes an thermal dynamical process where the system internal energy (Hamiltonian) dissipates and the local effective temperature is defined as $\langle\alpha_l^2\rangle_{\rho'},\langle\beta_l^2\rangle_{\rho'}$. During this process, the total entropy production rate can be calculated as the summation of the kinetic energy dissipation rate dividing the effective temperature. Thus the {\em change} of the entropy production rate is analogously defined as \eqref{R(s)_fluid_NESS_deter}. The final form we obtained happens to agree with the Gallavotti-Cohen \cite{gallavotti1995dynamical} fluctuation theorem if the relationship between the fluctuation theorem and the FDR is as we discussed in Section \ref{sec:sum_brief}. Although it makes sense intuitively, we have to admit \eqref{R(s)_fluid_NESS_deter} is obtained in a very dubious way since in the singular limit $\sigma_l,\gamma_l\rightarrow0$, the probability measures for the deterministic turbulence are not well-defined mathematical objects (see Table \ref{Tab:table}), and the dynamical features of the system are also different. However, we still retain these results here amid critiques even from ourselves, and wish the validity of \eqref{R(s)_fluid_NESS_deter} can be justified or disapproved via numerical simulations. For numerical studies, we note that if the well-developed deterministic turbulence is assumed to be ergodic, then $\langle\alpha_l^2\rangle_{\rho'}, \langle\beta_l^2\rangle_{\rho'}$ can be obtained using the time average.
\paragraph{\textbf{Equilibrium linear response}} We first consider the equilibrium system. As we discussion in Section \ref{sec:Equi_turbulence}, the canonical equilibrium measure for the 2D Euler equation \eqref{eqn:Euler} is given by \eqref{inv:Euler_cano}. If we rewrite the Hamiltonian $H$ as $\Omega_0$, the Casimir invariants $G_{2p}[\omega]$ as $\Omega_{2p}$, and the ``generalized" thermodynamic coefficients $\{\beta_0, \beta_p\}$, $p\geq 1$ as $\{\nu_p\}$, $p\geq 0$, then in the Fourier mode space, the canonical equilibrium measure with only one invariant can be rewritten as:
\footnote{We note that the canonical equilibrium measure \eqref{inv:Euler_cano} needs to be regularized only when $\Omega_0$, i.e. the Hamiltonian $H$, is included in the exponential part. Hence the summation $\sum_l$ in \eqref{inv_measure_enstrophy} can be interpreted as follows: Whenever $\Omega_0$ is included, $\sum_l=\sum_{l}^N$, where $N$ is maximum Fourier modes of the ultraviolet truncation. Otherwise, $\sum_l$ can also be $\sum_{l}^{\infty}$ and there is no anomaly. If we are mainly concerned with the regularized case and study from a numerical point of view, then one can simply interpret all summations in this subsection as $\sum_l=\sum_{l}^N$.}
\begin{align}\label{inv_measure_enstrophy}
\rho_{eq,p}'\prod_ld\alpha_ld\beta_l\propto\exp\left\{-\sum_l\nu_p|l|^{2p-2}(\alpha_l^2+\beta_l^2)\right\}\prod_ld\alpha_ld\beta_l=\exp\left\{-\nu_p\Omega_{2p}\right\}\prod_ld\alpha_ld\beta_l,
\end{align}
where the subscript $p$ indicates the conserved quantity of the measure. For example, if $\nu_p=\nu_0$, \eqref{inv_measure_enstrophy} gives to the energy-canonical ensemble $\rho_{eq,0}'\propto\exp\{-\nu_{0}\Omega_0\}$. If $\nu_p=\nu_1$, then we get the enstrophy-canonical ensemble $\rho_{eq_,1}'\propto\exp\{-\nu_0\Omega_2\}$. A key fact about the 2D SNS equation
\eqref{eqn:alpha_beta_simplified} is that if $\sigma_l^2=\gamma_l^2=|l|^{4-2p}$ and $\nu=\nu_p$, then \eqref{inv_measure_enstrophy} is also the invariant measure for the stochastic flow corresponding to \eqref{eqn:alpha_beta0}. The proof of this is provided in \ref{app:Enstrophy_measure_proof}. For such a case, one may simply interpret the relationship between $\nu$ and $\sigma_l,\gamma_l$ as the ``fluctuation-dissipation relation'' for 2D SNS equation. Hence the equation \eqref{eqn:alpha_beta0} itself can be understood as an ``overdamped Langevin dynamics'' for equilibrium system with invariant measure
\eqref{inv_measure_enstrophy}. Naturally \eqref{heat_linear_response} yields a linear response relation for the {\em equilibrium system}. To determine the specific form of the response function $R_{O,\Delta\omega}(t)$, we note that $\alpha_l$ and $\beta_l$ are i.i.d Gaussian variables according to \eqref{inv_measure_enstrophy}. Naturally we have $\langle\alpha_l^2\rangle_{\rho_{eq,p}'}=\langle\beta_l^2\rangle_{\rho_{eq,p}'}=1/(2\nu_p|l|^{2p-2})$. With these relations, the response quantity \eqref{R(s)_fluid_general} can be simplified as:
\begin{equation}\label{R(t)_Omega_2p}
\begin{aligned}
\mathcal{R}(s)=\mathcal{R}_{2p}(s)
=\frac{1}{2}\frac{\partial}{\partial\delta}\S(t)-\frac{1}{2}\frac{\partial}{\partial\delta}\tau(t)
=-\delta\mu\sum_l|l|^p(\alpha_l(s)B_l(s)+\beta_l(s)W_l(s)),\qquad p\geq 0.
\end{aligned}
\end{equation}
As a specific example, the enstrophy-canonical measure with $p=1$ is given by:
\begin{equation}\label{eqn_R(t)_enstrophy}
\begin{aligned}
\mathcal{R}(s)
=\mathcal{R}_2(s)
=-\delta\mu\sum_l|l|(\alpha_l(s)B_l(s)+\beta_l(s)W_l(s)).
\end{aligned}
\end{equation}

Now we prove the validity of the linear response formula \eqref{heat_linear_response} with response function \eqref{R(t)_Omega_2p} for the equilibrium system. Since $\sigma_l^2=\gamma_l^2=|l|^{4-2p}>0$, $\kappa_1,\kappa_2>0$, we note that the SDE \eqref{eqn:alpha_beta_simplified} has {\em non-degenerate} white noise.
Then we can use the functional Radon-Nikodym derivative \eqref{Radon-Nikodym_derivative} and the exact path-integral form probability density of a non-degenerate SDE (Stratonovich's formula (2.4.42) in \cite{moss1989noise}) to get the proof of \eqref{R(t)_Omega_2p}. The technical details are provided in \ref{app:Radon-Nikodym_proof}. Choosing $O(t)=X_i(t)$ in \eqref{Mobility}, we get the following {\em equilibrium linear response} formula for the Lagrangian particle $X(t)$, namely 
\begin{align}\label{Mobility_X(t)}
   R_{X,\Delta\omega}(t,s)
   =\langle \mathcal{R}(s)X_i(t)\rangle_{\rho_{eq}},
\end{align}
where $\mathcal{R}(s)$ is given by \eqref{R(t)_Omega_2p}. We note that in Eqn \eqref{Mobility_X(t)}, $\rho_{eq}$ is of the invariant measure of the extended system \eqref{eqn:alpha_beta_simplified}, instead of \eqref{inv_measure_enstrophy}.

%
%
%
\paragraph{\textbf{Nonequilibrium linear response}} When we do not have relations such as $\sigma_l^2=\gamma_l^2=|l|^{4-2p}$, the 2D SNS equation \eqref{eqn:alpha_beta0} yields a nonequilibrium dynamics where the NESS distribution $\rho_S$ is generally unknown, so does the extended SNS equation \eqref{eqn:alpha_beta_simplified}. The nonequilibrium can be further divided into two cases, where one has the {\em degenerate} noise and the other one is imposed with {\em non-degenerate} noise. When we say the noise is degenerate, we mean only a subset $K$ of all Fourier modes has non-zero diffusion coefficient $\sigma_l$ or $\gamma_l$. The proof in \ref{app:Radon-Nikodym_proof} already show that the response formula for the non-degenerate SNS equation is as:
\begin{align}\label{Mobility_X(t)2}
   R_{X,\Delta\omega}(t,s)
   =\langle \mathcal{R}(s)X_i(t)\rangle_{\rho_{S}},
\end{align}
where $\mathcal{R}(s)$ is proven to be 
\begin{equation}\label{R(s)_non-degenerate}
\begin{aligned}
\mathcal{R}(s)=\sum_{l}
-\frac{\delta\mu|l|^2}{\sigma_l}\alpha_l(s)B_l(s)
-\frac{\delta\mu|l|^2}{\gamma_l}\beta_l(s)W_l(s).
\end{aligned}
\end{equation}
On the other hand, for the degenerate case, a mathematically rigorous derivation of the nonequilibrium linear response formula is still lacking. In particular, it is hard to get an explicit expression of the probability density functional $P[x(t)]$. For such systems, since physically we can still define the excess entropy flux $\S(t)$ and frenecy $\mathcal{T}(t)$ as \eqref{entropy_flux} and \eqref{frenesy} respectively, one would expect the linear response formula with response function \eqref{R(s)_fluid_general} is valid. As an example, choosing $O(t)=X_i(t)$ in \eqref{Mobility}, we can get the following {\em nonequilibrium linear response} formula for Lagrangian particle $X(t)$, namely 
\begin{align}\label{Mobility_X(t)_1}
   R_{X,\Delta\omega}(t,s)
   =\langle \mathcal{R}(s)X_i(t)\rangle_{\rho_{S}},
\end{align}
where $\mathcal{R}(s)$ is given by \eqref{R(s)_fluid_general} and the ensemble average therein, i.e. $\langle\cdot\rangle_{\rho'}$, satisfies $\langle\cdot\rangle_{\rho'}=\langle\cdot\rangle_{\rho_S'}$, where $\rho_{S}'$ is the NESS distribution of \eqref{eqn:alpha_beta0}. However we note that in Eqn \eqref{Mobility_X(t)_1}, $\rho_{S}$ is of the invariant measure of the extended system \eqref{eqn:alpha_beta_simplified}. 

Lastly, we explain why for the thermostat temperature $T_i$ in \eqref{S(t)_T(t)} has to be replaced by $\langle\alpha_l^2\rangle_{\rho}$ and $\langle\beta_l^2\rangle_{\rho}$. For the heat conduction model \eqref{SDE:heat_perturbation}, {\em each} particle is connected to a thermostat with temperature $T_i$. Hence in the stationary regime, the kinetic temperature for the $i$-th particle satisfies $\langle p_i^2\rangle_{\rho}=T_i$. In reality, however, it is more common that only the boundary particles are connected with the thermostats, as we have shown in Figure \ref{fig:Heat_transport}. For such cases, the effective temperature for the $i$-th particle will no longer be $T_i$. In fact, as many numerical studies suggest \cite{lepri2003thermal}, the boundary resistance effect exists which leads to a difference between the thermostat's temperature and the contacting boundary particle's effective temperature. Due to the similarity between the nonequilibrium heat conduction model and the fluid equation, we argue that a physically meaningful replacement of the thermostat temperature $T_i$ in \eqref{S(t)_T(t)} would be the averaged kinetic energy $\langle\alpha_l^2\rangle_{\rho}$, $\langle\beta_l^2\rangle_{\rho}$ for the Fourier modes. This is by imitating the microscopic definition of the temperature for molecular heat conduction models \cite{lepri2003thermal}.

\paragraph{\textbf{Nonlinear response}} Another advantage of the path-integral-form first FDR is that it is easy to get the nonlinear response formula corresponding to large external stimuli. For equilibrium or nonequilibrium cases with the non-degenerate noise, using the ensemble average formula \eqref{ensemble_av} and the exact expression of the Radon-Nikodym derivative \eqref{RN-fluid}, we obtain a nonlinear response formula
\begin{align}\label{nonlinear_response}
\langle O(t)\rangle_{\rho_{\delta}}=\langle \mathcal{N}(t)O(t)\rangle_{\rho},
\end{align}
where $\mathcal{N}(t)$ is the change of the total entropy production:
\begin{equation}\label{R(s)_nonlinear_response}
\begin{aligned}
\mathcal{N}(t)= \exp\bigg\{-\int_0^t\sum_l\bigg(\frac{\delta\mu|l|^2}{\sigma_l}&\alpha_l(s)d\B_l(s)
+\frac{\delta\mu|l|^2}{\gamma_l}\beta_l(s)d\W_l(s)\bigg)\\
&-\int_0^tds\sum_l\bigg(\frac{\delta^2\mu^2|l|^4}{2\sigma_l^2}\alpha^2_l(s)
+\frac{\delta^2\mu^2|l|^4}{2\gamma_l^2}\beta^2_l(s)\bigg)\bigg\}.
\end{aligned}
\end{equation}
Its extension to the colored noise case follows directly the argument in \ref{app:Radon-Nikodym_proof}. If the noise is degenerate, we expect $\mathcal{N}(t)$ becomes:
\begin{equation}\label{R(s)_nonlinear_response1}
\begin{aligned}
\mathcal{N}(t)
=\exp\bigg\{\int_0^tds\sum_{l}&\bigg(\frac{\delta\mu\nu|l|^4}{\sigma_l^2}
-\frac{\delta\mu|l|^2}{2\langle\alpha_l^2\rangle_{\rho'}}
-\frac{\delta^2\mu^2|l|^4}{2\sigma_l^2}\bigg)\alpha_l^2(s)
+
\left(\frac{\delta\mu\nu|l|^4}{\gamma_l^2}
-\frac{\delta\mu|l|^2}{2\langle\beta_l^2\rangle_{\rho'}}
-\frac{\delta^2\mu^2|l|^4}{2\gamma_l^2}\right)\beta_l^2(s)\\
&-\int_0^t\sum_l\bigg(\frac{\delta\mu|l|^2}{\sigma_l}\alpha_l(s)d\B_l(s)
+\frac{\delta\mu|l|^2}{\gamma_l}\beta_l(s)d\W_l(s)\bigg)\bigg\}.
\end{aligned}
\end{equation}

\paragraph{\textbf{Remark}} We emphasize that all response formulas derived in this subsection are {\em directly computable} using the Monte-Carlo simulation. Namely, with the explicitly given response function $\mathcal{R}(s)$ or $\N(t)$, by sampling over the NESS $\rho$ and the solving numerically the SDE \eqref{eqn:alpha_beta_simplified}, one can get the ensemble average $\langle \mathcal{R}(s)X_i(t)\rangle_{\rho}$, therefore $\langle X(t)\rangle_{\rho_{\delta}}$. The ergodicity of the stochastic systems \eqref{eqn:alpha_beta0} and \eqref{eqn:alpha_beta_simplified} guarantees the unique NESS state can be achieved using a long-time simulation of \eqref{eqn:alpha_beta0} and \eqref{eqn:alpha_beta_simplified}, and the ensemble average can be replaced by the time average. The calculation formula we are lead to is the previously announced Eqn \eqref{intro_ensemble_average}. On the other hand, using the proof we have shown in \ref{app:Radon-Nikodym_proof}, it is straightforward to obtain  response formulas corresponding to other types of perturbations. Specifically, we note that the path-integral technique works for non-stationary perturbations such as $\delta(t)\Delta \omega$. It also applies to stochastic fluid systems such as the stochastic KdV equation and the 3D SNS equation.

\subsection{Mori-Zwanzig equation and the second FDR for the Lagrangian particle}\label{sec:2nd_FDR}
The first FDRs discussed in the previous section only provide the moment information about Lagrangian particle dispersion. To get an {\em effective} reduced-order model for $X(t)$ in the stationary regimes, one can understand $X(t)$ as a 2D stationary stochastic processes (normally non-Gaussian) and invoke the Kramers-Moyal expansion \cite{moss1989noise,Risken} for the evolution equation of the marginal probability density of $X(t)$:
\begin{align}\label{Kramers-Moyal}
    \partial_tp(X,t)=\sum_{n=1}^{\infty}\sum_{m=0}^n\frac{(-1)^n}{n!}\frac{\partial^n}{\partial^mX_1\partial^{(n-m)}X_2}[C_{X_1^m,X_2^{(n-m)}}(X)p(X,t)]
\end{align}
where $C_{X_1^m,X_2^{(n-m)}}(X)$ is the cumulant of the stochastic process $X(t)$ whose specific from is given by Stratonovich in \cite{moss1989noise}. The response formula obtained in the previous section can help to calculate the moment $C_{X_1^m,X_2^{(n-m)}}(X)$. Eqn \eqref{Kramers-Moyal} can be viewed as the Heisenberg picture description of the dynamics for $X(t)$. An equivalent Sch\"ordinger-picture description in the state space studies directly the evolution of $X(t)$. In particular, we can use the Mori-Zwanzig(MZ) formulation for stochastic systems \cite{zhu2021hypoellipticity3,espanol1995hydrodynamics,hudson2020coarse} to get to an {\em exact} reduced-order evolution equation for $X(t)$, which is known as the generalized Langevin equation (GLE). By approximating the resulting GLE with some proper method, one can get a stochastic process which directly models the movement of the Lagrangian particle \cite{zhu2021hypoellipticity3}. 

In this subsection, we will derive the MZ equation and prove a generalized second FDR for observable $X(t)$. Most of the presented materials are from our previous work \cite{zhu2021hypoellipticity3}. For the completeness of the article, we will briefly review the result obtained in \cite{zhu2021hypoellipticity3} and focus on its application to fluid systems. Interested readers may refer to \cite{zhu2021hypoellipticity3} for more technical details.   
To derive the MZ equation, we consider a general stochastic system \eqref{eqn:general_SDE}. Now we introduce a projection operator $\P$ and its orthogonal $\Q=\I-\P$ whose specific form of will be defined later. By differentiate the well-known Dyson's identity:
\begin{align*}
    e^{t\K}=e^{t\Q\K}+\int_0^te^{s\K}\P\K e^{(t-s)\Q\K}\Q\K ds,
\end{align*}
and applying it any reduced-order observable $O(t)$, we get the following MZ equation for the evolution of the noise-averaged observable $o(t)=\mathbb{E}_{\W}[O(t)|x(0)]$:
\begin{equation}
\frac{\partial}{\partial t}e^{t\K} o(0)
=e^{t\K}\mathcal{PK} o(0)
+e^{t\Q\K}\mathcal{QK} o(0)+\int_0^te^{s\K}\P\K
e^{(t-s)\Q\K}\mathcal{QK} o(0)ds,\label{eqn:EMZ_full}
\end{equation}
where $\K$ is the Kolmogorov backward operator corresponding to \eqref{eqn:general_SDE}. The three terms at the right hand side of \eqref{eqn:EMZ_full} 
are called, respectively, streaming term, fluctuation 
(or noise) term, and memory term.  Applying the projection operator 
$\P$ to \eqref{eqn:EMZ_full}
yields the projected MZ equation
\begin{equation} 
\frac{\partial}{\partial t}\mathcal{P}e^{t\K}o(0)
=\mathcal{P}e^{t\K}\mathcal{PK} o(0)
+\int_0^t\P e^{s\K}\mathcal{PK}
e^{(t-s)\Q\K}\mathcal{QK} o(0)ds.\label{eqn:EMZ_projected}
\end{equation}
Note that the MZ equation \eqref{eqn:EMZ_full} and its projected form \eqref{eqn:EMZ_projected} for stochastic systems 
has the same structure as the classical projected MZ 
equation for deterministic (autonomous) systems 
\cite{zhu2019generalized,zhu2018estimation,zhu2018faber}. 
However, the Liouville operator $\L$ is replaced by 
a Kolmogorov operator $\K$. To specify the projection operator $\P$,
we consider the weighted Hilbert space 
$H=L^2(\R^n,\rho)$ with inner product
\begin{equation}
\langle h,g\rangle_{\rho}=\int_{\R^n} 
h(x)g(x)\rho(x)dx 
\qquad h,g\in H,
\label{ip}
\end{equation}
where $\rho$ is a positive weight 
function in $\R^n$ which is often chosen as the
stationary probability density of the stochastic system. 
The Mori-type projection operator $\P$ is a finite-rank projection operator in $H$, which has the general form:
\begin{align}
\label{Mori_P}
\P h=\sum_{i,j=1}^N G^{-1}_{ij}
\langle o_i(0),h\rangle_{\rho}o_j(0),
\qquad h\in H.
\end{align}
where $G_{ij}=\langle o_i(0),o_j(0)\rangle_{\rho}$
and $o_i(0)=o_i(x(0))$ ($i=1,...,N$) are
$N$ linearly independent functions. It is easy to check that $\P$ and its orthogonal $\Q=\I-\P$ are both symmetric projection operators in $H$, i.e. $\P^*=\P=\P^2$, $\Q^*=\Q=\Q^2$. With $\P$ 
defined as in \eqref{Mori_P}, the memory integral of 
the MZ equation \eqref{eqn:EMZ_full} and its 
projected version \eqref{eqn:EMZ_projected} can be simplified to a convolution term. Therefore these two MZ equations can be rewritten as:
\begin{align}
\frac{d o(t)}{dt} &= \Omega o(t) +
\int_{0}^{t}M(t-s) o(s)ds+f(t),\label{gle_full}\\
\frac{d}{dt}\P{o}(t) &= \Omega\P {o}(t)+ 
\int_{0}^{t}M(t-s) \P {o}(s)ds,\label{gle_projected}
\end{align}
where $o(t) =[o_1(t),\dots,o_N(t)]^T$ and 
\begin{subequations}
\begin{align}
		G_{ij} & = \langle o_{i}(0), o_{j}(0)\rangle_{\rho}
		\quad \text{(Gram matrix)},\label{gram}\\
		\Omega_{ij} &= \sum_{k=1}^NG^{-1}_{jk}
		\langle o_{k}(0), \K o_{i}(0)\rangle_{{\rho}}\quad 
		\text{(streaming matrix)},\label{streaming}\\
		M_{ij}(t) & =\sum_{k=1}^N G^{-1}_{jk}
		\langle o_{k}(0), \K e^{t\Q\K}\Q\K o_{i}(0)\rangle_{\rho}\quad 
		\text{(memory kernel)},\label{memory_kernel}\\
		 f_i(t)& =e^{t\Q\K}\Q\K o_i(0) \quad 
		\text{(fluctuation term)}.\label{f}
	\end{align}
\end{subequations}
\noindent
Equation \eqref{gle_full}-\eqref{gle_projected} are known as the (linear) generalized Langevin equation (GLE) in statistical physics, which give the {\em exact} evolution equation for the reduced-order quantity $o(t)$ and $\P o(t)$. As it is shown in \cite{zhu2021hypoellipticity3}, when we choose $\rho(0)$ in \eqref{eqn:general_SDE} to be $\rho_{eq}$ for the equilibrium system or $\rho(0)$ to be $\rho_{S}$ for the nonequilibrium system, then the reduced-order quantity $o(t)$ yields the white-noise-averaged random path for $O(t)$ hence imitates the dynamics $O(t)$. Additionally, $\P o(t)=C(t)=\langle o(t),o(0)\rangle_{\rho}=\langle O(t),O(0)\rangle_{\rho}$ gives the stationary time autocorrelation function of $O(t)$ for $\rho=\rho_{eq}$ or $\rho=\rho_S$.

The projection operator method provides a systematic way to derive the reduced-order model for observable $O(t)$ from the first principle. For deterministic Hamiltonian systems, the merit of using the MZ framework to derive the GLE is that the classical second fluctuation-dissipation relation (FDR) holds naturally as the result of the skew-symmetry of the Liouville operator $\L$ \cite{snook2006langevin,zhu2018estimation}. For stochastic systems studied in this paper, since the Kolmogorov operator $\K$ is not skew-symmetric, the classical second FDT has to be generalized accordingly. Applying Theorem 2 proved in \cite{zhu2021hypoellipticity3} to the SNS equation, we obtain the following result:
\begin{theorem}\label{thm:2nd-FDT}
(Generalized-2nd-FDR) For extended SNS system \eqref{eqn:alpha_beta_simplified}, if we assume that the steady state distribution $\rho$ decays to $0$ as $\|\alpha_l\|,\|\beta_l\|,\|X_1\|,\|X_2\|\rightarrow\infty$, then the following generalized second FDR holds for any state space observable $o(t)=o(x(t))$ in the GLE \eqref{gle_full}-\eqref{gle_projected}:
\begin{align}\label{general_2nd_FDT}
M_{ij}(t)=\sum_{k=1}^N G^{-1}_{jk}(-\langle f_{k}(0), f_i(t)\rangle_{\rho}+\langle \A o_{k}(0), f_i(t)\rangle_{\rho}),\qquad 1\leq i,j\leq N.
\end{align}
where $\A$ is a second-order differential operator defined as 
\begin{align}\label{W_exact_form}
\A=\sum_{j\geq 1}\X_j^2+\X_j(\ln\rho)\X_j,
\end{align}
and $\X_j$ is the vector field defined in \eqref{eqn:X_k*X_k}.
\end{theorem}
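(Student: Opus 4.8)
The plan is to obtain the stated identity as the specialization of Theorem 2 of \cite{zhu2021hypoellipticity3} to the generator \eqref{op:K_NS_extend}; the only real content is to compute the adjoint $\K^{\dagger}$ of $\K$ in the weighted Hilbert space $H$ with inner product \eqref{ip} and then to unwind the Mori projection. From the definition \eqref{memory_kernel} of the memory kernel and $f_i(t)=e^{t\Q\K}\Q\K o_i(0)$, one has $M_{ij}(t)=\sum_k G^{-1}_{jk}\langle o_k(0),\K f_i(t)\rangle_{\rho}=\sum_k G^{-1}_{jk}\langle \K^{\dagger}o_k(0),f_i(t)\rangle_{\rho}$, so everything hinges on $\K^{\dagger}$. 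Writing $\K=\X_0+\tfrac12\sum_j\X_j^2$ with $\X_0$ as in \eqref{eqn:X_0} and the $\X_j$ as in \eqref{eqn:X_k*X_k}, I would use two structural facts specific to this system: every noise field $\X_j\in\{\sigma_k\partial_{\alpha_k},\gamma_k\partial_{\beta_k},\kappa_1\partial_{X_1},\kappa_2\partial_{X_2}\}$ has constant coefficients (so that $(\X_j^2)^{*}=\X_j^2$ in the unweighted $L^2$), and $\rho$ solves the stationary Fokker--Planck equation $\K^{*}\rho=0$. Integration by parts (legitimate because $\rho$ decays at infinity) gives $\K^{\dagger}h=\tfrac1\rho\K^{*}(h\rho)$ with $\K^{*}=-\X_0-(\Div\X_0)+\tfrac12\sum_j\X_j^2$; expanding $\X_0(h\rho)$ and $\X_j^2(h\rho)=(\X_j^2h)\rho+2(\X_j h)(\X_j\rho)+h\,\X_j^2\rho$ and cancelling the terms that are of order zero in $h$ by means of $\K^{*}\rho=0$, one is left with $\K^{\dagger}=-\X_0+\tfrac12\sum_j\X_j^2+\sum_j\X_j(\ln\rho)\X_j$, which by \eqref{W_exact_form} is exactly $\K^{\dagger}=\A-\K$.

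The second half is purely algebraic within the Mori framework. Decompose $\K o_k(0)=\P\K o_k(0)+\Q\K o_k(0)=\P\K o_k(0)+f_k(0)$. Since $\Ran\Q$ is invariant under $e^{t\Q\K}$ and $\Q\K o_i(0)\in\Ran\Q$, the fluctuation term $f_i(t)$ lies in $\Ran\Q$, so $\P f_i(t)=0$ and hence $\langle\P\K o_k(0),f_i(t)\rangle_{\rho}=\langle\K o_k(0),\P f_i(t)\rangle_{\rho}=0$; this gives $\langle\K o_k(0),f_i(t)\rangle_{\rho}=\langle f_k(0),f_i(t)\rangle_{\rho}$. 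Substituting $\K^{\dagger}=\A-\K$ into $M_{ij}(t)=\sum_k G^{-1}_{jk}\langle\K^{\dagger}o_k(0),f_i(t)\rangle_{\rho}$ then yields $M_{ij}(t)=\sum_k G^{-1}_{jk}\big(\langle\A o_k(0),f_i(t)\rangle_{\rho}-\langle f_k(0),f_i(t)\rangle_{\rho}\big)$, which is \eqref{general_2nd_FDT}.

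The step I expect to be the main obstacle is the rigorous justification of the adjoint identity $\K^{\dagger}h=\tfrac1\rho\K^{*}(h\rho)$ together with the cancellation above: this needs the observables $o_k(0)$, the fluctuation terms $f_i(t)$ and $\rho$ to lie in the right function spaces and the boundary terms in the integrations by parts to vanish, which is precisely the role of the decay hypothesis on $\rho$. For the extended system \eqref{eqn:alpha_beta_simplified} that hypothesis is itself delicate, because the Lagrangian coordinates $X_1,X_2$ are a priori unbounded and $\rho$ need not be integrable in those variables; a fully rigorous treatment would either confine the $X$-dynamics to the torus or invoke the localization/approximation arguments of \cite{zhu2021hypoellipticity3}. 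Everything else is a direct translation of Theorem 2 of \cite{zhu2021hypoellipticity3} to the concrete operator \eqref{op:K_NS_extend}.
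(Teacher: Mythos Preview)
Your proposal is correct and coincides with the paper's treatment: the paper does not give an independent proof but simply invokes Theorem~2 of \cite{zhu2021hypoellipticity3} for the generator \eqref{op:K_NS_extend}, and what you have written is precisely the underlying argument of that theorem (the weighted-adjoint identity $\K^{\dagger}=\A-\K$ obtained from $\K^{*}\rho=0$ together with the orthogonality $\langle\P\K o_k(0),f_i(t)\rangle_\rho=0$) specialized to the present constant-coefficient noise fields. Your caveat about the decay of $\rho$ in the unbounded $X$-variables is also the right place to flag the only genuine analytic assumption.
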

The above result holds regardless the value of parameters $\nu,\sigma_l,\gamma_l$ in 
\eqref{eqn:alpha_beta_simplified}, therefore is valid for both equilibrium systems and nonequilibrium systems. In the following, we will use the unified notation $\rho$ to represent the equilibrium distribution $\rho=\rho_{eq}$ and the nonequilibrium stationary distribution $\rho=\rho_S$. If we choose observable $O(t)$ in \eqref{Mori_P} to be the Lagrangian particle position $X(t)$, then 
\begin{align}\label{Mori_P1}
    \P h=\sum_{i,j=1}^2G_{i,j}^{-1}\langle X_i(0),h\rangle_{\rho}X_j(0),\qquad h\in L^2(\R^{(N+1)^2+1},\rho),
\end{align}
and the Mori-type GLEs for $X(t)$ are given by:
\begin{align}
\frac{d}{dt} x(t)&= \Omega x(t) +
\int_{0}^{t}M(t-s) x(s)ds+f_x(t),\label{gle_full_X}\\
\frac{d}{dt}C(t) &= \Omega C(t)+ 
\int_{0}^{t}M(t-s) C(s)ds,\label{gle_projected_X}
\end{align}
where $x(t)=\mathbb{E}_{\B_l,\W_l,\D_1,\D_2}[X(t)|\{\alpha_l(0)\},\{\beta_l(0)\},X_1(0),X_2(0)]$ with $\{\alpha_l(0)\},\{\beta_l(0)\},X_1(0),X_2(0)\sim \rho$ and $C(t)=\langle X_i(t),X_j(0)\rangle_{\rho}$ is the steady state time autocorrelation function of the Lagrangian particle $X(t)$. According to Theorem \ref{thm:2nd-FDT}, the following generalized second FDR holds for the GLE \eqref{gle_full_X}-\eqref{gle_projected_X}:
\begin{prop}\label{cor:2nd-FDT}
With the Mori-type projection operator \eqref{Mori_P1}, the following generalized second FDR holds for $X(t)$ in GLEs \eqref{gle_full_X}-\eqref{gle_projected_X}:
\begin{align}\label{general_2nd_FDT1}
M_{ij}(t)=\sum_{k=1}^2 G^{-1}_{jk}\left(-\langle f_{k}(0), f_i(t)\rangle_{\rho}+\langle \kappa_{k}^2\partial_{X_k}(
\ln\rho), f_i(t)\rangle_{\rho}\right),\qquad 1\leq i,j\leq 2.
\end{align}
\end{prop}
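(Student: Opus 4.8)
The plan is to derive Proposition~\ref{cor:2nd-FDT} as a direct specialization of Theorem~\ref{thm:2nd-FDT}, so the only genuine work is a one-line evaluation of the operator $\A$ on the coordinate functions. First I would take the underlying stochastic system in Theorem~\ref{thm:2nd-FDT} to be the extended SNS dynamics \eqref{eqn:alpha_beta_simplified}, with Kolmogorov backward operator $\K$ as in \eqref{op:K_NS_extend}, set $N=2$ in the Mori projection \eqref{Mori_P}, and choose the two linearly independent observables $o_1(0)=X_1(0)$, $o_2(0)=X_2(0)$; this is exactly the projection \eqref{Mori_P1}, and the associated Gram matrix, streaming matrix, memory kernel and fluctuation term are \eqref{gram}--\eqref{f} with $f(t)=f_x(t)$, so that \eqref{gle_full_X}--\eqref{gle_projected_X} are the corresponding GLEs. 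Granting the standing hypothesis that $\rho$ is a smooth density decaying to $0$ as $\|\alpha_l\|,\|\beta_l\|,\|X_1\|,\|X_2\|\to\infty$, Theorem~\ref{thm:2nd-FDT} gives
\begin{equation*}
M_{ij}(t)=\sum_{k=1}^2 G^{-1}_{jk}\big(-\langle f_k(0),f_i(t)\rangle_{\rho}+\langle \A o_k(0),f_i(t)\rangle_{\rho}\big),\qquad 1\le i,j\le 2,
\end{equation*}
with $\A=\sum_{j\ge1}\X_j^2+\X_j(\ln\rho)\X_j$ and the noise vector fields $\X_j$ read off from \eqref{eqn:X_k*X_k}.

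The second step is the computation of $\A X_1$ and $\A X_2$. From \eqref{eqn:X_k*X_k}, the family $\{\X_j\}$ consists of the commuting constant-coefficient vector fields $\sigma_k\partial_{\alpha_k}$ and $\gamma_k\partial_{\beta_k}$ ($k\in K$) together with $\kappa_1\partial_{X_1}$ and $\kappa_2\partial_{X_2}$. Since $X_k$ ($k=1,2$) is an affine coordinate function, every second-order piece annihilates it, $\X_j^2 X_k=0$, so $\sum_{j\ge1}\X_j^2 X_k=0$. In the remaining sum $\sum_{j\ge1}\X_j(\ln\rho)\,\X_j X_k$ the factor $\X_j X_k$ is nonzero for exactly one index, namely the one with $\X_j=\kappa_k\partial_{X_k}$, for which $\X_j X_k=\kappa_k$ and $\X_j(\ln\rho)=\kappa_k\partial_{X_k}(\ln\rho)$; hence $\A X_k=\kappa_k^2\,\partial_{X_k}(\ln\rho)$. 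Substituting $\A o_k(0)=\kappa_k^2\partial_{X_k}(\ln\rho)$ into the displayed formula yields \eqref{general_2nd_FDT1}.

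The argument requires essentially no new analysis once Theorem~\ref{thm:2nd-FDT} (itself a transcription of Theorem~2 of \cite{zhu2021hypoellipticity3}) is granted; the symmetry $\P^*=\P=\P^2$, $\Q^*=\Q=\Q^2$ on $H=L^2(\R^{(N+1)^2+1},\rho)$ is what collapses the memory integral into a convolution and produces the form \eqref{gram}--\eqref{f} from which the generalized FDR is read off. The one point I would flag rather than resolve is the hypothesis itself: for a genuinely dispersing Lagrangian particle the marginal of the steady state in $X_1,X_2$ is not normalizable on $\R^2$, so the identity is best interpreted either with $X$ confined to the torus $\T^2$ (using translation invariance of the velocity statistics) or as a formal relation obtained directly from \eqref{memory_kernel} and \eqref{f}; under either reading the algebra above is unchanged. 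Thus the only ``obstacle'' is interpretive, not computational.
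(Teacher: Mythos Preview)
Your proposal is correct and mirrors the paper's own proof: the paper simply notes that $X_i$ is not a function of $\alpha_l,\beta_l$, so $\X_j X_i(0)=0$ for all noise vector fields attached to the Fourier modes, and then says \eqref{general_2nd_FDT1} follows by ``simple calculations''---which is exactly the evaluation of $\A X_k=\kappa_k^2\partial_{X_k}(\ln\rho)$ that you spell out. Your version is in fact slightly more explicit (you also record $\X_j^2 X_k=0$) and you add a reasonable caveat about normalizability of $\rho$ in the $X$-variables that the paper does not address.
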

\begin{proof}Since $X_i$ is not a function of $\alpha_l,\beta_l$, we have $\X_jX_i(0)=0$ for all the vector field $\X_j$ with respect to $\alpha_l,\beta_l$. Then \eqref{general_2nd_FDT1} can be obtained by simple calculations.
\end{proof}
It is worth noticing that is the Lagrangian particle is purely passive, i.e. the molecular diffusion $\kappa_{1}=\kappa_2=0$, then the {\em classical} second FDR is valid for $X(t)$. The GLE \eqref{gle_full_X}-\eqref{gle_projected_X} provide us the path-space information about the stationary turbulent dispersion. Since the generalized second FDR \eqref{general_2nd_FDT1} links the the memory kernel $K(t)$ with the the fluctuation force $f_x(t)$, to get the specific form of the effective reduced-order model \eqref{gle_full_X}, it boils down to the approximation of the MZ memory kernel $K(t)$. This is a problem which is worth independent investigations. Here we only note some recent results \cite{zhu2019generalized,zhu2021effective,zhu2021hypoellipticity3,lei2016data,chu2017mori}.  
\paragraph{\textbf{FDRs for other transport quantities}} The extended SNS equation were studied in Section \ref{sec:1st_FDR} for the study of turbulent dispersion. Generally speaking, because the SNS equation is a {\em Eulerian} interpretation of the turbulent velocity (or vorticity) field, if the observable $O(t)$ under the investigation is a {\em Lagrangian} variable, then similar extended dynamics is needed to study the corresponding transport phenomenon. For the same reason, the SNS equation itself would be sufficient for study of the {\em Eulerian} observables. For instance, the dynamics of the Fourier mode kinetic energy $u^2_k(t)$ is closely related to the eddy viscosity which characterizes the transport of the turbulent energy in different space scales \cite{kraichnan1976eddy}. Using the first FDR \eqref{heat_linear_response} for the observable $O(t)=\alpha_k^2(t),\beta_k^2(t)$ of a 2D SNS equation, we obtain:
\begin{align}
    \langle \alpha_k^2(t)\rangle_{\rho_{\delta}}-\langle \alpha_k^2(t)\rangle_{\rho}=\int_0^t R_{\alpha_k^2,\Delta\omega}(t,s)ds,
    \qquad
     \langle \beta_k^2(t)\rangle_{\rho_{\delta}}-\langle \beta_k^2(t)\rangle_{\rho}=\int_0^t R_{\beta_k^2,\Delta\omega}(t,s)ds,
\end{align}
where 
\begin{align}\label{Mobility_X(t)_3}
   R_{\alpha_k^2,\Delta\omega}(t,s)
   =\langle \mathcal{R}(t)\alpha_k^2(s)\rangle_{\rho'},
   \qquad
   R_{\beta_k^2,\Delta\omega}(t,s)
   =\langle \mathcal{R}(t)\beta_k^2(s)\rangle_{\rho'},
\end{align}
and $\mathcal{R}(t)$ is as \eqref{R(s)_fluid_general}, $\langle\cdot\rangle_{\rho'}$ is the path-space ensemble average with respect to the NESS of \eqref{eqn:alpha_beta0}. The nonlinear response formula can be obtained similarly with the proper defined response function $N(t)$ \eqref{R(s)_nonlinear_response1}, as well as the generalized second FDR of the form \eqref{general_2nd_FDT}. 
\subsection{Comparison with the renormalized perturbation theory}\label{sec:compare}
The formalism introduced in Section \ref{sec:1st_FDR}-\ref{sec:2nd_FDR} can be termed as the {\em nonequilibrium perturbation theory} for turbulence.  It is instructive to compare it with the established {\em renormalized perturbation theory} \cite{mccomb1990physics} for turbulence and assess their differences and connections. Roughly speaking, the renormalized perturbation approach for turbulence is a cluster of methods to study the NS (or SNS) equation using tools such as the renormalized perturbation expansion and renormalization group analysis which are borrowed from statistical field theory \cite{parisi1988statistical} or quantum filed theory \cite{Justin}. One of the first renormalized perturbation theories is the Direct Interaction Approximation (DIA) method developed by Kraichnan \cite{kraichnan1959structure}. \footnote{It was pointed out in \cite{mccomb1990physics} that the DIA method is actually a special type of the renormalized perturbation theory.} Over the years, many researchers contributed to its development and one may refer to McComb's monograph \cite{mccomb1990physics} and the reference therein for in-depth explorations. In brief, the discussion we are going to present indicates that the nonequilibrium perturbation theory for turbulence is {\em rather different} from the renormalized perturbation theory in terms of the methodologies and the key questions the theory aims to answer. In fact, we believe the new approach provides a {\em paradigm shift} on the turbulence study using statistical approaches. Based on the new formalism, many difficulties encountered in the renormalized theory can be simply avoided. In this section, we will adopt the setting and notations used in  Chapter 5-6 \cite{mccomb1990physics} for briefly reviewing the main structure of the renormalized perturbation theory and then make the comparison. To this end, we consider with the SNS equation for the velocity field in torus $\T^d=[-\pi,\pi]^d$:
\begin{align}\label{eqn:3D_NS_v}
    (\partial_t+\nu |k|^2)u_{\alpha}(k,t)=\lambda M_{\alpha\beta\gamma}(k)\sum_ju_{\beta}(j,t)u_{\gamma}(k-j,t)+D_{\alpha\beta}(k)f_{\beta}(k,t),
\end{align}
where $u_{\alpha}(k,t)$ the Fourier mode of the velocity field in $\alpha$-direction with wavenumber $k$. $\lambda$ is a auxiliary constant which is added to facilitate the perturbation analysis. Note that the summation convention for tensors is employed. Eqn \eqref{eqn:3D_NS_v} can represent 2D ($d=2$) or 3D ($d=3$) SNS equation. The common setting for the stochastic force is assumed the Brownian motion with correlation function: 
\begin{align}\label{f_corre}
\langle f_{\alpha}(k,t)f_{\beta}(k,t)\rangle=D_{\alpha\beta}(k)w(|k|,t-t'),
\end{align}
where $w(|k|,t-t')$ can be specified for different modeling purposes. The goal of the renormalized perturbation theory is find a self-consistent way to calculate two statistical quantities: the Green function $G_{\alpha\beta}(k;t,t')$ and the velocity correlation function $Q_{\alpha,
\beta}(k;t,t')$:
\begin{align}\label{eqn:G_Q}
\langle\delta u_{\alpha}(k,t)\rangle=\int_{-\infty}^t G_{\alpha\beta}(k;t,t')\delta f_{\beta}(k,t')dt'
,\qquad 
\langle u_{\alpha}(k,t)u_{\beta}(k',t')\rangle=
\delta_{k+k',0}Q_{\alpha\beta}(k;t,t'),
\end{align}
or their correspondence for isotropic turbulence:
\begin{align}\label{eqn:G_Q_iso}
\langle\delta u_{\alpha}(k,t)\rangle=\int_{-\infty}^t D_{\alpha\beta} G(|k|;t,t')\delta f_{\beta}(k,t')dt'
,\qquad 
\langle u_{\alpha}(k,t)u_{\beta}(k',t')\rangle=
D_{\alpha\beta}\delta_{k+k',0}Q(|k|;t,t').
\end{align}
Here $\delta f_{\beta}(k,t)$ is the perturbation of the stochastic force. From the definitions \eqref{eqn:G_Q}-\eqref{eqn:G_Q_iso}, one can see that $G(|k|;t,t')$ is the linear response function of the fluid velocity with respect to the perturbation of the external force and $Q(|k|;t,t')$ is the time-autocorrelation function of the velocity field. To obtain $G(|k|;t,t')$ and $Q(|k|;t,t')$, the renormalized perturbation theory considers the following perturbation series with respect to the auxiliary parameter $\lambda$:
\begin{align}
    u_{\alpha}(k,t)&=u_{\alpha}^{(0)}(k,t)+\lambda u_{\alpha}^{(0)}(k,t)+ O(\lambda^2)\label{eqn:u_lambda_pert}\\
    G_{\alpha\beta}(k;t,t')&=G_{\alpha\beta}^{(0)}(k;t,t')+\lambda G_{\alpha\beta}^{(1)}(k;t,t')+ O(\lambda^2)\label{eqn:G_lambda_pert}\\
    Q_{\alpha\beta}(k;t,t')&=Q_{\alpha\beta}^{(0)}(k;t,t')+\lambda Q_{\alpha\beta}^{(1)}(k;t,t')+ O(\lambda^2)\label{eqn:Q_lambda_pert}.
\end{align}
In Eqn \eqref{eqn:u_lambda_pert}-\eqref{eqn:Q_lambda_pert}, $u_{\alpha}^{(0)}(k,t)$, $G_{\alpha\beta}^{(0)}(k;t,t')$ and $Q_{\alpha\beta}^{(0)}(k;t,t')$ are the zero-order velocity field, Green function and the velocity correlation function. They come from the {\em linearization} of the NS equation \eqref{eqn:3D_NS_v}. In particular,
\begin{align}\label{eqn:lambda_pert}
    (\partial_t+\nu |k|^2)u_{\alpha}^{(0)}(k,t)&=D_{\alpha\beta}f_{\beta}(k,t)\\
    (\partial_t+\nu |k|^2)G_{\alpha\beta}^{(0)}(k;t,t')&=D_{\alpha\beta}\delta(t-t')\\
    \langle u_{\alpha}^{(0)}(k-j,t),u_{\beta}^{(0)}(l,t')\rangle&=\delta_{k-j+l,0}Q_{\alpha\beta}^{(0)}(k-j;t,t').
\end{align}
By plugging \eqref{eqn:u_lambda_pert} into the nonlinear SNS equation \eqref{eqn:3D_NS_v} and matching terms with the same power of $\lambda$, one can get the exact expressions for the power series \eqref{eqn:G_lambda_pert} and \eqref{eqn:Q_lambda_pert}. This procedure leads to the {\em primitive perturbation expansion} for $G_{\alpha\beta}(k;t,t')$ and $Q_{\alpha\beta}(k,t,t')$. The key feature of this expression is that $G_{\alpha\beta}(k;t,t')$ and $Q_{\alpha\beta}(k,t,t')$ are represented as a complicated summation with building blocks $G_{\alpha\beta}^{(0)}(k;t,t')$ and $Q_{\alpha\beta}^{(0)}(k;t,t')$. This perturbation series has Feynman diagram representation which can be {\em renormalized} using different methods by grouping sub-diagrams of the total summation (see details in Chapter 5-6 \cite{mccomb1990physics}). As a result, the right hand side of the renormalized expansion series has terms such as $G_{\alpha\beta}(k;t,t')$ and $Q_{\alpha\beta}(k;t,t')$ which makes \eqref{eqn:G_lambda_pert}-\eqref{eqn:Q_lambda_pert} self-consistent equations for $G_{\alpha\beta}(k;t,t')$ and $Q_{\alpha\beta}(k;t,t')$. Karichnan's DIA theory is a second-order renormalized perturbation theory (i.e. truncation at the order $O(\lambda^2)$) with the line renormalization which leads to the following coupled evolution equation for $G(|k|;t,t')$ and $Q(|k|;t,t')$ of the isotropic turbulence:
\begin{equation}\label{eqn:DIA}
\begin{aligned}
   (\partial_t+\nu |k|^2)G(|k|;t,t')&=-\sum_j\int_{t'}^tdsL(k,j)G(j;t,s)G(k;s,t')Q(|k-j|;t,s)+\delta(t-s)\\
   (\partial_t+\nu |k|^2)Q(|k|;t,t')&=\sum_jL(k,j)\bigg\{\int_{-{\infty}}^{t'}dsG(|k|;t',s)Q(|j|;t,s)Q(|k-j|;t,s)\\
   &-\int_{-\infty}^tdsG(|j|;t',s)Q(|k-j|;t,s)Q(k;t',s)\bigg\}+\int_{-\infty}^{t'}G(|k|;t',s)w(|k|;t,s)ds,
\end{aligned}
\end{equation}
where $w(|k|;t,s)$ is defined by \eqref{f_corre} and 
\begin{align}
L(k,j)=-2M_{\epsilon\beta\gamma}(k)M_{\beta\delta\epsilon}(j)D_{\gamma\delta}(k-j).
\end{align}
To sum up, the main procedure of the {\em renormalized perturbation theory} can be schematically represented as Figure \ref{fig:perturbation}: For (S)NS with different viscosity $\nu=1/Re$, the theory uses the linearized (S)NS equation to as the starting point and invokes the truncated renormalized series to get the information,  i.e. $G_{\alpha\beta}(k;t,t')$ and $Q_{\alpha\beta}(k;t,t')$, for the nonlinear (S)NS equation. In the process, the auxiliary parameter $\lambda$ is added at the beginning in order to facilitate the perturbation analysis but is taken to be $\lambda=1$ in the end. Hence from an information theoretic point of view, the renormalized perturbation theory uses the linearized (S)NS equation, which can be solved {\em exactly}, to predict the dynamics of the nonlinear (S)NS equation. 

In Section \ref{sec:1st_FDR}, we have shown that the linear response function $G_{\alpha\beta}(k;t,t')$ and the velocity correlation function $Q_{\alpha\beta}(k;t,t')$ can be obtained using the operator-form first FDR or the path-integral-form first FDR.
\footnote{Note that what we derived in Section \ref{sec:1st_FDR} is the response quantity $\mathcal{R}(s)$ or $\N(t)$ with respect to perturbation $+\delta\mu\Delta \omega$, and the correlation function is for the Fourier mode of the vorticity field. Using the functional Radon-Nikodym derivative method provided in \ref{app:Radon-Nikodym_proof}, it is direct to get the response formula corresponding to the noise perturbation $\delta f_{\beta}(k,t)$. On the other hand, using the relationship between the stream function $\psi(x,t)$, the velocity field $u(x,t)$ and the vorticity field $\omega(x,t)$, we can get the velocity time-autocorrelation function $Q_{\alpha\beta}(k;t,t')$ from the vorticity time-autocorrelation function. These results can be readily generalized to the 3D turbulence.} However, the paradigm for the {\em nonequilibrium perturbation theory} is very different. From Figure \ref{fig:perturbation}, one can see that the starting point of the perturbation theory is the NESS of the {\em nonlinear} (S)NS equation. Then we used the response theory to get the information about {\em another} NESS of the nonlinear (S)NS equation which deviates from the first one with added perturbations such as $+\delta\mu\Delta \omega$. From an information theoretic point of view, the nonequilibrium perturbation theory uses the nonlinear (S)NS equation, which {\em cannot} be solved exactly but an approximated solution can be obtained via numerical simulations, to predict the dynamics of another nonlinear (S)NS equation. We admit that the renormalized perturbation theory tries to answer some more difficult, more fundamental questions such as the formation of the Kolmogorov's 5/3 energy spectrum for turbulent flows (see Chapter 8 in \cite{mccomb1990physics}). Such questions cannot be answered using the nonequilibrium perturbation theory framework. Instead, one can answer practical questions such as: given the dynamics of a turbulent field with a certain Reynolds number $Re$, what are the Lagrangian particle diffusion rate and the eddy energy diffusion rate for fluid with Reynolds number $Re+\delta Re$.

The theoretical convergence for these two perturbation theories are also different. Specifically, since for highly turbulent flow with small viscosity $\nu=1/Re$, the nonlinear terms are actually dominant in the dynamical process. One has to use the renormalized perturbation series, or renormalization groups to construct the dynamical models if the starting point of the analysis is the linearized NS equation. From a theoretical point view, the accuracy and convergence of such perturbation analysis cannot be guaranteed even after the renormalization since after all the auxiliary parameter $\lambda=1$. However, the convergence for nonequilibrium perturbation theory {\em can} be guaranteed since the nonlinear response result \eqref{R(s)_nonlinear_response}-\eqref{R(s)_nonlinear_response1} is {\em exact}, and the linear response approximation \eqref{R(s)_fluid_general} is valid since normally we choose $\delta\mu\ll 1$. In addition, renormalization is not needed. From the field theory point of view, one may view the path-integral approach as a functional method to derive the nonequilibrium perturbation theory, while the renormalized perturbation series used a method that is similar to the canonical quantization.

Lastly, we point out that the nonequilibrium perturbation theory provides a simple, flexible framework to study the turbulent transport for different fluid equations. From the discussion of Section \ref{sec:1st_FDR}, one can see that it generally applies to non-isotropic turbulence and different turbulent transport processes can be handled in a unified framework. The nonequilibrium perturbation theory can also be combined with the renormalized perturbation theory. Namely, we may use the renormalized perturbation theory first to get a detailed evaluation for quantities such as the $G_{\alpha\beta}(k;t,t')$ and $Q_{\alpha\beta}(k;t,t')$. Based on which, the nonequilibrium perturbation theory is then applied to get linear response results  $G_{\alpha\beta}^{\delta}(k;t,t')$ and $Q_{\alpha\beta}^{\delta}(k;t,t')$ corresponding to a perturbed turbulence in the NESS. 
\begin{figure}[t]
\centerline{
\includegraphics[height=3.5cm]{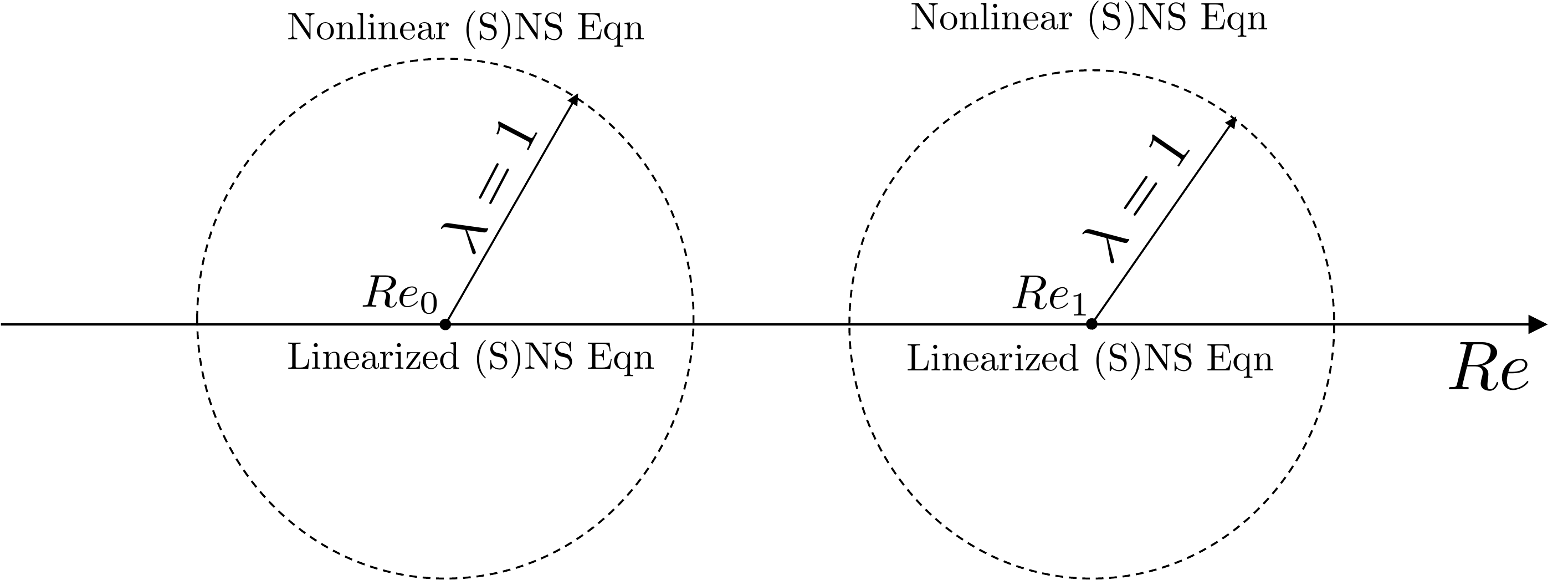}
}
\vspace{0.5cm}
\centerline{
\includegraphics[height=3.5cm]{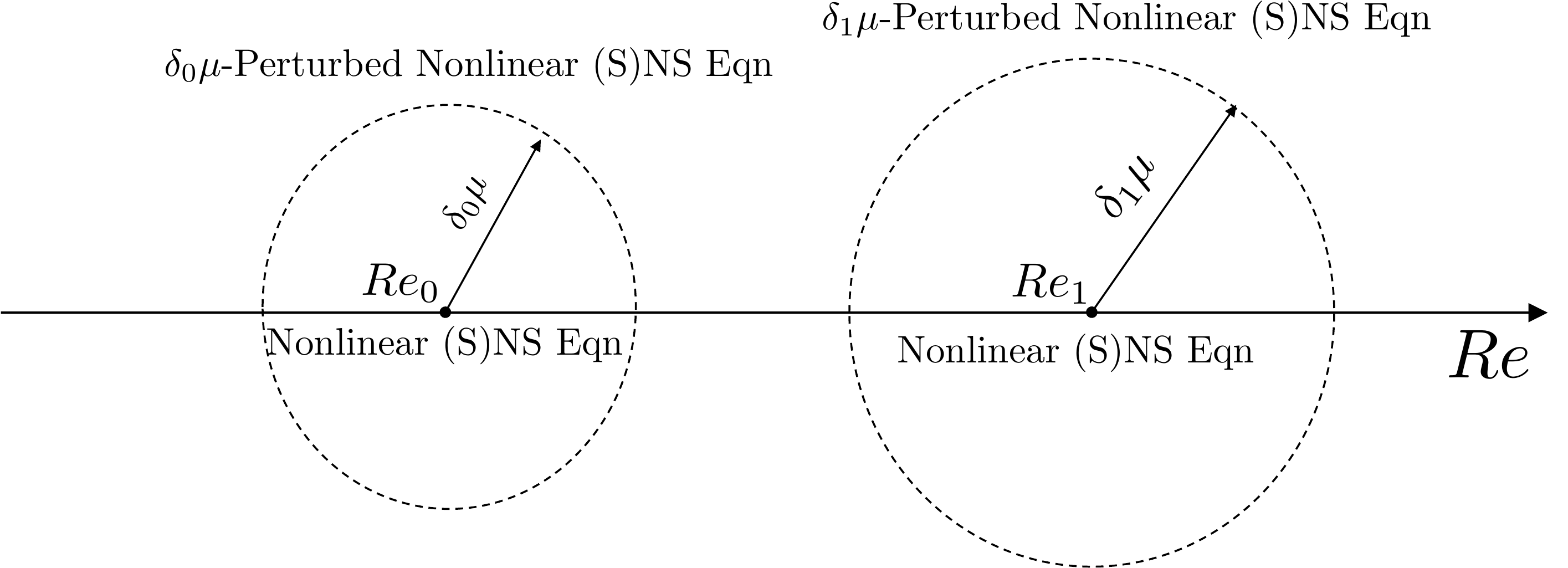}
}
\caption{Schematic comparison of the renormalized perturbation theory (Top) and the nonequilibrium perturbation theory (Bottom).}
\label{fig:perturbation} 
\end{figure}

\section{Conclusions}\label{sec:summary}
In this paper, we combined Kraichnan's turbulence thermalization assumption with Ruelle's recent work on the heat transport analogy of turbulence, and proposed a nonequilibrium statistical mechanics framework to study turbulent transport. We demonstrated that the viscous fluid equation in the frequency domain can be understood as the nonequilibrium heat conduction model where the Fourier modes are regarded as the ``particles''. As an application of this framework, we studied in detail the turbulent dispersion problem for the 2D stochastic Navier-Stokes equation and proved several generalized fluctuation-dissipation relations which are valid in the nonequilibrium. In particular, we obtained a nonequilibrium Kubo formula and a nonlinear response formula via the path-integral technique, which can be used directly to predict the turbulent dispersion in a perturbed nonequilibrium steady state. Using the Mori-Zwanzig framework, we proved a generalized second fluctuation-dissipation relation and built a closed reduced-order model for the Lagrangian particle with the resulting generalized Langevin equation. Further applications of the proposed framework to other fluid systems and other turbulent transport phenomena can be expected. 

\vspace{0.3cm}
\noindent 
{\bf Acknowledgements} 
We would like to thank Prof. Fei Lu for stimulating discussions. We also thank Prof. Xiantao Li and Dr. Lorenzo Caprini for bringing to my attention the recent progress of path-integral-form first FDR, which helps to form Section \ref{sec:1st_FDR_path} of this paper.  

\appendix
\section{Invariant measure for the 2D SNS equation}
\label{app:Enstrophy_measure_proof}
This result was proved by Albeverio et al \cite{albeverio1990global} for a special case (enstrophy-canonical measure) in a slightly different setting. For the generalized case considered in this paper, the proof can be briefly summarized as follow: Since $\Omega_{p}$, $p\geq 0$ are the invariants of the 2D Euler equation \eqref{eqn:Euler}, naturally \eqref{inv_measure_enstrophy} is an invariant measure for the Euler flow corresponding to Eqn \eqref{eqn:Euler} and 
\begin{align}\label{eqn:oper_Iden}
    \partial_t\rho_{eq,p}'=\L^*\rho_{eq,p}'=\sum_l\frac{\partial}{\partial\alpha_l}(F_l\rho_{eq,p}')+\frac{\partial}{\partial\beta_l}(G_l\rho_{eq,p}')=0,
\end{align}
where $\L^*$ is the adjoint of the Liouville operator corresponding to the Euler flow. On the other hand, the Kolmogorov forward operator $\K^*$ corresponding to the 2D SNS \eqref{eqn:alpha_beta_simplified} is given by
\begin{align}\label{K_SNS}
\K^*=\sum_l\nu|l|^2\frac{\partial}{\partial\alpha_l}(\alpha_l\cdot)+\nu|l|^2\frac{\partial}{\partial\beta_l}(\beta_l\cdot)
+\frac{\partial}{\partial\alpha_l}(F_l\cdot)+\frac{\partial}{\partial\beta_l}(G_l\cdot)+\frac{1}{2}\sigma_l^2\frac{\partial^2}{\partial\alpha_l^2}+\frac{1}{2}\gamma_l^2\frac{\partial^2}{\partial\beta_l^2}.
\end{align}
When $\sigma_l^2=\gamma_l^2=|l|^{4-2p}$, $\nu=\nu_p$, we can rewrite $\K^*\rho_{eq,p}'$ as
\begin{equation*}
\begin{aligned}
\K^*\rho_{eq,p}'&=\sum_l\nu_{p}|l|^2\frac{\partial}{\partial\alpha_l}(\alpha_l\rho_{eq,p}')+\nu_p|l|^2\frac{\partial}{\partial\beta_l}(\beta_l\rho_{eq,p}')\\
&+\frac{\partial}{\partial\alpha_l}(F_l\rho_{eq,p}')+\frac{\partial}{\partial\beta_l}(G_l\rho_{eq,p}')+\frac{1}{2}\sum_l|l|^{4-2p}\left(\frac{\partial^2\rho_{eq,p}'}{\partial\alpha_l^2}+\frac{\partial^2\rho_{eq,p}'}{\partial\beta_l^2}\right)\\
&=\L^*\rho_{eq,p}'+\sum_{l}\K_{l}^*\rho_{eq,p}'
\propto\sum_{l}\K_{l}^*\exp\left\{-\nu_p\Omega_{2p}\right\}
=\sum_{l}\K_{l}^*\exp\left\{-\nu_p\sum_l|l|^{2p-2}(\alpha_l^2+\beta_l^2)\right\},
\end{aligned}
\end{equation*}
where 
\begin{equation*}
\begin{aligned}
\K_{l}^*=\nu_p|l|^2\frac{\partial}{\partial\alpha_l}(\alpha_l\cdot)+\nu_p|l|^2\frac{\partial}{\partial\beta_l}(\beta_l\cdot)+\frac{1}{2}|l|^{4-2p}\left(\frac{\partial^2}{\partial\alpha_l^2}+\frac{\partial^2}{\partial\beta_l^2}\right).
\end{aligned}
\end{equation*}
Then we have 
\begin{equation*}
\begin{aligned}
\K^*\rho_{eq,p}'=\sum_{l}\K^*_l\rho_{eq,p}'&=\sum_{l}\nu_p|l|^2\frac{\partial}{\partial\alpha_l}(\alpha_l\rho_{eq,p}')-\sum_l\nu_p|l|^{4-2p+2p-2}\frac{\partial}{\partial\alpha_l}(\alpha_l\rho_{eq,p}')
\\
&+\sum_lv_p|l|^2\frac{\partial}{\partial\beta_l}(\beta_l\rho_{eq,p}')-\sum_lv_p|l|^{4-2p+2p-2}\frac{\partial}{\partial\beta_l}(\beta_l\rho_{eq,p}')\\
&=0.
\end{aligned}
\end{equation*}
Therefore \eqref{inv_measure_enstrophy} is also the invariant measure for the 2D SNS equation \eqref{eqn:alpha_beta_simplified}. In fact, as pointed out by Albeverio et al. \cite{albeverio1990global}, the 2D SNS equation in \textbf{\eqref{FDR_perturbation}} can be viewed as the 2D Euler equation \eqref{eqn:Euler} plus a infinite-dimensional Orstein-Ulenbeck (OU) process. This OU process has generator $\sum_{l}\K_l$ and shares the invariant measure with the original Euler dynamics, therefore \eqref{inv_measure_enstrophy} indeed is the invariant measure of the 2D SNS equation. The enstrophy-type invariant measure $\rho_{eq,1}'$ is often used in the modeling of equilibrium state of 2D Euler flow \cite{kraichnan1980two}. To be noticed that it implies the enstrophy equipartition for each $\Omega_{2,k}$, where $\Omega_{2,k}$ is the enstrophy in $k$-th Fourier mode. 

%
%
\section{The exact response formula for the 2D SNS equation}\label{app:Radon-Nikodym_proof}
In this section, we prove the path-integral-form response formulas for the 2D SNS equation driven by the non-degenerate noise. For a $(N+1)^2+1$-dimensional stochastic system \eqref{eqn:alpha_beta_simplified}, note that if the imposed Gaussian white noises $\sigma_ld\B_l,\gamma_ld\W_l,\kappa_1d\D_1,\kappa_2d\D_2$ are non-degenerate ($\sigma_l,\gamma_l,\kappa_1,\kappa_2>0$) and independent with each other. Then applying Stratonovich's formula (2.4.42) in \cite{moss1989noise}, we can get the explicit expression of the path-integral-form functional probability density $P[x(t)]$\footnote{Since for our case $h(\tilde x_n)=h$ is a constant matrix (see \eqref{matrices_stra}), the late term in the exponential part of the Stratonovich's formula (2.4.42) \cite{moss1989noise} vanishes.}:
\begin{align}\label{p(x(t))_stratonovich}
    P[x(t)]\propto\prod_{n=1}^{(N+1)^2+1}\text{det}^{-1}\|h(\tilde x_n)\|\cdot\exp\left\{-\frac{1}{2}\int_0^t\sum_{n=1}^{(N+1)^2+1}
    b_{ii}^{-1}(\dot{x}_i-g_i)^2+\frac{\partial}{\partial x_i}g_ids\right\},
\end{align}
where $b_{ii}^{-1}$ is the diagonal element of the inverse of matrix $b$ (given below). Each term in \eqref{p(x(t))_stratonovich} is given by
\begin{equation}
\begin{aligned}\label{matrices_stra}
h(\tilde x_n)=h=
\begin{bmatrix}
[\sigma_l] &  & &\\
& [\gamma_l]  & &\\
&  & \kappa_1&\\
&  & &\kappa_2
\end{bmatrix},
\quad
b=
\begin{bmatrix}
[\sigma_l^2] &  & &\\
& [\gamma_l^2]  & &\\
&  & \kappa_1^2&\\
&  & &\kappa_2^2
\end{bmatrix}
\quad
x=
\begin{bmatrix}
[\alpha_l]\\
[\beta_l]\\
X_1\\
X_2
\end{bmatrix}
\quad
g=
\begin{bmatrix}
[F_l-\nu|l|^2\alpha_l]\\
[G_l-\nu|l|^2\beta_l]\\
\partial_{X_2}\psi(X,t)\\
-\partial_{X_1}\psi(X,t)
\end{bmatrix},
\end{aligned}
\end{equation}
where $[\cdot]$ means block matrix. Note that $h$ and $b$ are constant, diagonal matrices, with \eqref{p(x(t))_stratonovich}-\eqref{matrices_stra}, for 2D SNS equation, we can write its functional probability density as: 
\begin{equation}\label{p(x(t))_stratonovich_exp}
\begin{aligned}
    P[x(t)]&\propto\exp\bigg\{-\frac{1}{2}\int_0^tds
    \sum_{l}\frac{1}{\sigma_l^2}[\dot{\alpha}_l-(F_l-\nu|l|^2\alpha_l)]^2
    +\sum_{l}\frac{1}{\gamma_l^2}[\dot{\beta}_l-(G_l-\nu|l|^2\beta_l)]^2\\
    &
    +\frac{1}{\kappa_1^2}(\dot{X}_1-\partial_{X_2}\psi(X,s))^2
    +\frac{1}{\kappa_2^2}(\dot{X}_2+\partial_{X_1}\psi(X,s))^2
    +\sum_l\partial_{\alpha_l}(F_l-\nu|l|^2\alpha_l)
    +\sum_l\partial_{\beta_l}(G_l-\nu|l|^2\beta_l)
  \bigg\}.
\end{aligned}
\end{equation}
In \eqref{p(x(t))_stratonovich_exp}, we eliminated the following term since $\partial_{X_1}\partial_{X_2}\psi(X,t)-\partial_{X_2}\partial_{X_1}\psi(X,t)=0$. Using the exact same method to derive the functional probability density for $P[x^{\delta}(t)]$, where the perturbed system is given by \eqref{FDR_perturbation}, and combining the Radon-Nikodym derivative \eqref{Radon-Nikodym_derivative}, we can get that 
\begin{equation}\label{RN-fluid}
\begin{aligned}
    \frac{P[x^{\delta}(t)]}{P[x(t)]}
    &\propto\exp\left\{\frac{1}{2}S(t)-\frac{1}{2}\mathcal{T}(t)\right\}\\
    &\propto\exp\bigg\{-\frac{1}{2}\int_0^tds\sum_l\left(\frac{2\delta\mu|l|^2}{\sigma_l}\alpha_lB_l
    +\frac{2\delta\mu|l|^2}{\gamma_l}\beta_lW_l
    +\frac{\delta^2\mu^2|l|^4}{\sigma_l^2}\alpha^2_l
    +\frac{\delta^2\mu^2|l|^4}{\beta_l^2}\gamma^2_l
    -2\delta\mu|l|^2
    \right)\bigg\}\\
    &\propto
    \exp\bigg\{-\int_0^tds\sum_l\left(\frac{\delta\mu|l|^2}{\sigma_l}\alpha_lB_l
    +\frac{\delta\mu|l|^2}{\gamma_l}\beta_lW_l
    +\frac{\delta^2\mu^2|l|^4}{2\sigma_l^2}\alpha^2_l
    +\frac{\delta^2\mu^2|l|^4}{2\beta_l^2}\gamma^2_l
    \right)\bigg\},
\end{aligned}
\end{equation}
where in the second step, we used the fact that $\dot{\alpha_l}-(F_l-\nu|l|^2\alpha_l)=\sigma_ld\B_l=\sigma_lB_lds$ and $\dot{\beta_l}-(G_l-\nu|l|^2\beta_l)=\gamma_ld\W_l=\gamma_lW_lds$. In the third step, since the last term in the exponential is a constant, hence was merged into the normalization constant. With \eqref{RN-fluid} and the path-space ensemble average formula \eqref{ensemble_av1}, it is easy to get the linear response formula \eqref{R(t)_Omega_2p}. Similarly, using formula \eqref{ensemble_av}, we obtain the nonlinear response relation \eqref{R(s)_nonlinear_response}.

We also consider a generalized case which is more comparable with the renormalized perturbation theory since over there the noise can be colored. Specifically, we allow the imposed Gaussian noise $\sigma_ld\B_l$, $\gamma_ld\W_l$, $\kappa_1d\D_1,\kappa_2d\D_2$ to be independent with each other but can be colored in time with correlation function:
\begin{align*}
    \langle B_l(t),B_l(t')\rangle&=B'_l(t,t'),\qquad \text{where}\qquad d\B_l(t)=B_l(t)dt\\
    \langle W_l(t),W_l(t')\rangle&=W'_l(t,t'),\qquad \text{where}\qquad d\W_l(t)=W_l(t)dt\\
    \langle D_1(t),D_1(t')\rangle&=D'_1(t,t'),\qquad \text{where}\qquad d\D_1(t)=D_1(t)dt\\
    \langle D_l(t),D_l(t')\rangle&=D'_2(t,t'),\qquad \text{where} \qquad
    d\D_2(t)=D_2(t)dt.
\end{align*}
Then using Stratonovich's formula (2.4.37) in \cite{moss1989noise}, we can express the functional probability $P[x(t)]$ as
\begin{equation*}
\begin{aligned}
    P[x(t)]\propto\exp\bigg\{&-\frac{1}{2}\int_0^t\int_0^tdsds'
    \sum_{l}\frac{B^{'-1}_l(s,s')}{\sigma_l^2}[\dot{\alpha}_l(s)-(F_l(s)-\nu|l|^2\alpha_l(s))][\dot{\alpha}_l(s')-(F_l(s')-\nu|l|^2\alpha_l(s'))]\\
    &+\sum_{l}\frac{W^{'-1}_l(s,s')}{\gamma_l^2}[\dot{\beta}_l(s)-(G_l(s)-\nu|l|^2\beta_l(s))]
    [\dot{\beta}_l(s')-(G_l(s')-\nu|l|^2\beta_l(s'))]
    \\
    &
    +\frac{D^{'-1}_1(s,s')}{\kappa_1^2}(\dot{X}_1(s)-\partial_{X_2}\psi(X,s))
    (\dot{X}_1(s')-\partial_{X_2}\psi(X,s'))\\
    &+\frac{D^{'-1}_2(s,s')}{\kappa_2^2}(\dot{X}_2(s)+\partial_{X_1}\psi(X,s))
    (\dot{X}_2(s')+\partial_{X_1}\psi(X,s'))\\
    &-\frac{1}{2}\int_0^tds
    \sum_l\partial_{\alpha_l}(F_l-\nu|l|^2\alpha_l)
    +\sum_l\partial_{\beta_l}(G_l-\nu|l|^2\beta_l)
  \bigg\},
\end{aligned}
\end{equation*}
where the functions such as $B^{'-1}_l(s,s')$ are the ``inverse kernel'' defined by
\begin{equation}
\begin{aligned}
\int_0^T B^{'-1}_l(s,s')u(s')ds'=v(s) \quad \Leftrightarrow \quad 
\int_0^T B'_l(s,s')v(s')dt'=u(s) \\
\int_0^T W^{'-1}_l(s,s')u(s')ds'=v(s) \quad \Leftrightarrow \quad 
\int_0^T W'_l(s,s')v(s')ds'=u(s) \\
\int_0^T D^{'-1}_1(s,s')u(s')dt'=v(s) \quad \Leftrightarrow \quad 
\int_0^T D'_1(s,s')v(s')ds'=u(s) \\
\int_0^T D^{'-1}_2(s,s')u(t')dt'=v(s) \quad \Leftrightarrow \quad 
\int_0^T D'_2(s,s')v(s')dt'=u(s),
\end{aligned}
\end{equation}
where $u(s)$ is an arbitrary function. Using the Radon-Nikodym derivative \eqref{Radon-Nikodym_derivative}, we obtain 
\begin{equation}\label{RN-fluid_correlation}
\begin{aligned}
    \frac{P[x^{\delta}(t)]}{P[x(t)]}
    &\propto\exp\left\{\frac{1}{2}S(t)-\frac{1}{2}\mathcal{T}(t)\right\}\\
    &\propto\exp\bigg\{-\frac{1}{2}\int_0^t\int_0^tdsds'
    \sum_l\frac{\delta\mu|l|^2B^{'-1}(s,s')}{\sigma_l}[\alpha_l(s)B_l(s')+\alpha_l(s')B_l(s)]\\
    &
    +\sum_l\frac{\delta\mu|l|^2W^{'-1}(s,s')}{\gamma_l}[\beta_l(s)W_l(s')+\beta_l(s')W_l(s)]\\
    &
    +\sum_l\frac{\delta^2\mu^2|l|^4B^{'-1}(s,s')}{\sigma_l^2}[\alpha_l(s)\alpha_l(s')]
    +\sum_l\frac{\delta^2\mu^2|l|^4W^{'-1}(s,s')}{\gamma_l^2}[\beta_l(s)\beta_l(s')]
    \bigg\}.
\end{aligned}
\end{equation}
Then using the functional derivative, we obtain
\begin{equation}
\begin{aligned}
\mathcal{R}(s)=\mathcal{R}(s,t)=\int_0^t ds'&\sum_l\frac{\delta\mu|l|^2B^{'-1}(s,s')}{\sigma_l}[\alpha_l(s)B_l(s')+\alpha_l(s')B_l(s)]\\
    &+\sum_l\frac{\delta\mu|l|^2W^{'-1}(s,s')}{\gamma_l}[\beta_l(s)W_l(s')+\beta_l(s')W_l(s)],
\end{aligned}
\end{equation}
and the nonlinear response function $\N(t)$ is just the right hand side of \eqref{RN-fluid_correlation}.
\section{The tentative response formula for deterministic turbulence}
\label{app:determ_turbulence_FDR}
We consider the following perturbation of the 2D SNS equation:
\begin{align}\label{FDR_perturbation_deter}
    \begin{dcases}
    d\alpha_l&=F_{l}dt-\nu|l|^2\alpha_ldt+\eta f_l+\sigma_ld\B_l\\
    d\beta_l&=G_{l}dt-\nu|l|^2\beta_ldt+\theta g_l+\gamma_ld\W_l\\
    dX_1&=\partial_{X_2}\psi(X,t)dt+\kappa_1d\D_1(t)\\
    dX_2&=-\partial_{X_1}\psi(X,t)dt+\kappa_2d\D_2(t)
    \end{dcases}
    \quad
\xRightarrow[]{-\delta\mu\Delta \omega}
\quad
    \begin{dcases}
    d\alpha_l&=F_{l}dt-\nu'|l|^2\alpha_ldt+\eta f_l+\sigma_ld\B_l\\
    d\beta_l&=G_{l}dt-\nu'|l|^2\beta_ldt+\theta g_l+\gamma_ld\W_l\\
    dX_1&=\partial_{X_2}\psi(X,t)dt+\kappa_1d\D_1(t)\\
    dX_2&=-\partial_{X_1}\psi(X,t)dt+\kappa_2d\D_2(t)
    \end{dcases}.
\end{align}
Note that in \eqref{FDR_perturbation_deter}, we added deterministic forces $\eta f_l$ and $\theta g_l$ in the Fourier modes in order to discuss the case of NESS3. To get the linear response formula for deterministic turbulence, we note that the deterministic NS equation for NESS1 and NESS3 is the limit of 
\eqref{FDR_perturbation_deter} as $\sigma_l,\gamma_l\rightarrow 0$. This limits is singular in the sense of \cite{matkowsky1977exit} because the stationary Fokker-Planck equation (Eqn (2.16)-(2.17) in \cite{matkowsky1977exit}) will degenerate from a second-order elliptic equation to a first-order hyperbolic equation. To get the linear response formula for such cases, we will first consider a stochastic system perturbation \eqref{FDR_perturbation_deter} using the aforementioned path-integral technique, then by taking the limit $\sigma_l,\gamma_l\rightarrow 0$ and getting rid of irregular terms brought by the singular perturbation, we arrive at the tentative linear response formula for deterministic systems. The whole procedure will be similar to the {\em reverse} of finding the third root of a singular perturbation problem for polynomial $\epsilon x^3-x^2+1=0$ as $\epsilon\rightarrow0$ \cite{bender2013advanced}. 

Using the physical meaning of $S(t)$ and $\mathcal{T}(t)$, we write heuristically the response quantity $\mathcal{R}(s)$ for perturbation \eqref{FDR_perturbation_deter}:
\begin{equation}\label{R(s)_fluid_general_d}
\begin{aligned}
\mathcal{R}(s)=\sum_{l}\bigg(\frac{\delta\mu\nu|l|^4}{\sigma_l^2}-\frac{\delta\mu|l|^2}{2\langle\alpha_l^2\rangle_{\rho'}}\bigg)\alpha_l^2(s)
+
\left(\frac{\delta\mu\nu|l|^4}{\gamma_l^2}-\frac{\delta\mu|l|^2}{2\langle\beta_l^2\rangle_{\rho'}}\right)\beta_l^2(s)
-\frac{\delta\mu|l|^2}{\sigma_l}\alpha_l(s)B_l(s)
-\frac{\delta\mu|l|^2}{\gamma_l}\beta_l(s)W_l(s),
\end{aligned}
\end{equation}
one may find it the same as \eqref{R(s)_fluid_general}. According to the analysis of Matkowsky et al. \cite{matkowsky1977exit}, singular perturbation of the stationary Fokker-Planck equation has the following leading-order term (Formula (3.11) in \cite{matkowsky1977exit} with zero boundary condition $f(x)=0$):
\begin{align}\label{pdf}
    \rho_{\epsilon}\sim c_0-c_0e^{\zeta(x)/\epsilon^2},
\end{align}
where $\rho_{\epsilon}$ is the solution of stationary Fokker-Planck equation which corresponds to steady state distribution of the singularly perturbed system, while $c_0$ is the solution of the unperturbed hyperbolic equation which corresponds to steady state distribution of the deterministic  system. By eliminating from \eqref{pdf} the exponentially divergent leading term, we can obtain $c_0$. Assuming that similar result holds in the functional probability density setting, since the Radon-Nikodym derivative \eqref{Radon-Nikodym_derivative} implies 
\begin{align}\label{Radon-Nikodym_derivative1}
    P[x^{\delta}(t)]= \exp\left\{\frac{1}{2}\S(t)-\frac{1}{2}\mathcal{T}(t)\right\}P[x(t)]=\exp\left\{\int_0^t\mathcal{R}(s)ds\right\}P[x(t)]+O(\delta^2).
\end{align}
We identify that the leading order divergent terms brought by the singular perturbation are
\begin{align*}
\exp\left\{\sum_l\frac{\delta\mu\nu|l|^4}{\sigma_l^2}\int_0^t\alpha_l^2(s)ds\right\}, \qquad
\exp\left\{\sum_l\frac{\delta\mu\nu|l|^4}{\gamma_l^2}\int_0^t\beta_l^2(s)ds\right\}.
\end{align*}
Eliminating them from \eqref{R(s)_fluid_general_d} and then taking the limit $\sigma_l,\gamma_l\rightarrow 0$, we obtain linear response quantity \eqref{R(s)_fluid_NESS_deter} for the deterministic turbulence. Note that if constants $\eta=\theta=0$, we get the NESS1 case. If $\eta\neq0$ or $\theta\neq 0$, then we have the NESS3 case. Similarly, a tentative nonlinear response formula for deterministic turbulence would be $\langle O(t)\rangle_{\rho_{\delta}}=\langle\N(t)O(t)\rangle_{\rho}$, where $\N(t)$ is the change of the total entropy production given by:
\begin{align}\label{N(t)_deter}
\N(t)=\exp\left\{-\int_0^tds\sum_l\frac{\delta\mu|l|^2}{2\langle\alpha_l^2\rangle_{\rho'}}\alpha_l^2(s)+
\frac{\delta\mu|l|^2}{2\langle\beta_l^2\rangle_{\rho'}}\beta_l^2(s)
\right\}.
\end{align}

\bibliographystyle{plain}
\bibliography{MZ_Turbulence_dispersion}

\begin{thebibliography}{10}

\bibitem{agarwal1972fluctuation}
G.~S. Agarwal.
\newblock Fluctuation-dissipation theorems for systems in non-thermal
  equilibrium and applications.
\newblock {\em Zeitschrift f{\"u}r Physik A Hadrons and nuclei}, 252(1):25--38,
  1972.

\bibitem{albeverio1990global}
S.~Albeverio and A.~B. Cruzeiro.
\newblock Global flows with invariant ({G}ibbs) measures for {E}uler and
  {N}avier-stokes two dimensional fluids.
\newblock {\em Commun. Math. Phys}, 129(3):431--444, 1990.

\bibitem{arnold1999topological}
V.~I. Arnold and B.~A. Khesin.
\newblock {\em Topological methods in hydrodynamics}, volume 125.
\newblock Springer Science \& Business Media, 1999.

\bibitem{baiesi2009fluctuations}
M.~Baiesi, C.~Maes, and B.~Wynants.
\newblock Fluctuations and response of nonequilibrium states.
\newblock {\em Phys. Rev. Lett}, 103(1):010602, 2009.

\bibitem{baiesi2009nonequilibrium}
M.~Baiesi, C.~Maes, and B.~Wynants.
\newblock Nonequilibrium linear response for markov dynamics, {I}: jump
  processes and overdamped diffusions.
\newblock {\em J. Stat. Phys}, 137(5):1094--1116, 2009.

\bibitem{baiesi2010nonequilibrium}
M.~Baiesi, C.~Maes, and B.~Wynants.
\newblock Nonequilibrium linear response for markov dynamics, {II}: Inertial
  dynamics.
\newblock {\em J. Stat. Phys}, 139(3):492--505, 2010.

\bibitem{bedrossian2019almost}
J.~Bedrossian, A.~Blumenthal, and S.~P. Smith.
\newblock Almost-sure exponential mixing of passive scalars by the stochastic
  {N}avier-{S}tokes equations.
\newblock {\em arXiv preprint arXiv:1905.03869}, 2019.

\bibitem{bender2013advanced}
C.~M. Bender and S.~A. Orszag.
\newblock {\em Advanced mathematical methods for scientists and engineers I:
  {A}symptotic methods and perturbation theory}.
\newblock Springer Science \& Business Media, 2013.

\bibitem{bouchet2010invariant}
F.~Bouchet and M.~Corvellec.
\newblock Invariant measures of the 2{D} {E}uler and {V}lasov equations.
\newblock {\em J. Stat. Mech. Theory Exp.}, 2010(08):P08021, 2010.

\bibitem{caprini2021generalized}
L.~Caprini.
\newblock Generalized fluctuation-dissipation relations holding in
  non-equilibrium dynamics.
\newblock {\em arXiv preprint arXiv:2102.03829}, 2021.

\bibitem{dal2019linear}
S.~Dal Cengio, D.~Levis, and I.~Pagonabarraga.
\newblock Linear response theory and green-kubo relations for active matter.
\newblock {\em Phys. Rev. Lett.}, 123(23):238003, 2019.

\bibitem{chavanis1998classification}
P.~H. Chavanis and J.~Sommeria.
\newblock Classification of robust isolated vortices in two-dimensional
  hydrodynamics.
\newblock {\em J. Fluid. Mech}, 356:259--296, 1998.

\bibitem{chu2017mori}
W.~Chu and X.~Li.
\newblock The {M}ori-{Z}wanzig formalism for the derivation of a fluctuating
  heat conduction model from molecular dynamics.
\newblock {\em arXiv:1709.05928}, 2017.

\bibitem{cichowlas2005effective}
C.~Cichowlas, P.~Bona{\"\i}ti, F.~Debbasch, and M.~Brachet.
\newblock Effective dissipation and turbulence in spectrally truncated {E}uler
  flows.
\newblock {\em Phys. Rev. Lett.}, 95(26):264502, 2005.

\bibitem{crooks1999entropy}
G.~E. Crooks.
\newblock Entropy production fluctuation theorem and the nonequilibrium work
  relation for free energy differences.
\newblock {\em Phys. Rev. E}, 60(3):2721, 1999.

\bibitem{cuneo2018non}
N.~Cuneo, J-P Eckmann, M.~Hairer, and L~Rey-Bellet.
\newblock Non-equilibrium steady states for networks of oscillators.
\newblock {\em Electron. J. Probab.}, 23, 2018.

\bibitem{weinan2001ergodicity}
W.~E and J.~C. Mattingly.
\newblock Ergodicity for the {N}avier-{S}tokes equation with degenerate random
  forcing: finite-dimensional approximation.
\newblock {\em Comm. Pure Appl. Math}, 54(11):1386--1402, 2001.

\bibitem{weinan2001gibbsian}
W.~E, J.~C. Mattingly, and Y.~Sinai.
\newblock Gibbsian dynamics and ergodicity for the stochastically forced
  {N}avier--{S}tokes equation.
\newblock {\em Commun. Math. Phys}, 224(1):83--106, 2001.

\bibitem{espanol1995hydrodynamics}
P.~Espanol.
\newblock Hydrodynamics from dissipative particle dynamics.
\newblock {\em Phys. Rev. E}, 52(2):1734, 1995.

\bibitem{falcioni1990correlation}
M.~Falcioni, S.~Isola, and A.~Vulpiani.
\newblock Correlation functions and relaxation properties in chaotic dynamics
  and statistical mechanics.
\newblock {\em Phys. Lett. A}, 144(6-7):341--346, 1990.

\bibitem{gallavotti1996chaotic}
G.~Gallavotti.
\newblock Chaotic hypothesis: Onsager reciprocity and fluctuation-dissipation
  theorem.
\newblock {\em J. Stat. Phys}, 84(5):899--925, 1996.

\bibitem{gallavotti1995dynamical}
G.~Gallavotti and E.~G.~D. Cohen.
\newblock Dynamical ensembles in stationary states.
\newblock {\em J. Stat. Phys}, 80(5):931--970, 1995.

\bibitem{gritsun2008climate}
A.~Gritsun, G.~Branstator, and A.~Majda.
\newblock Climate response of linear and quadratic functionals using the
  fluctuation--dissipation theorem.
\newblock {\em J. Atmos. Sci}, 65(9):2824--2841, 2008.

\bibitem{hairer2006ergodicity}
M.~Hairer and J.~C. Mattingly.
\newblock Ergodicity of the 2{D} {N}avier-{S}tokes equations with degenerate
  stochastic forcing.
\newblock {\em Annals of Mathematics}, pages 993--1032, 2006.

\bibitem{hairer2008spectral}
M.~Hairer and J.~C. Mattingly.
\newblock Spectral gaps in {W}asserstein distances and the 2{D} stochastic
  {N}avier--{S}tokes equations.
\newblock {\em Annals of Probability}, 36(6):2050--2091, 2008.

\bibitem{hudson2020coarse}
T.~Hudson and X.~Li.
\newblock Coarse-graining of overdamped {L}angevin dynamics via the
  {M}ori--{Z}wanzig formalism.
\newblock {\em Multiscale Modeling \& Simulation}, 18(2):1113--1135, 2020.

\bibitem{kraichnan1959structure}
R.~H. Kraichnan.
\newblock The structure of isotropic turbulence at very high reynolds numbers.
\newblock {\em J. Fluid. Mechs}, 5(4):497--543, 1959.

\bibitem{kraichnan1970diffusion}
R.~H. Kraichnan.
\newblock Diffusion by a random velocity field.
\newblock {\em Phys. Fluids}, 13(1):22--31, 1970.

\bibitem{kraichnan1975statistical}
R.~H. Kraichnan.
\newblock Statistical dynamics of two-dimensional flow.
\newblock {\em J. Fluid. Mech}, 67(1):155--175, 1975.

\bibitem{kraichnan1976eddy}
R.~H. Kraichnan.
\newblock Eddy viscosity in two and three dimensions.
\newblock {\em J. Atmos. Sci.}, 33(8):1521--1536, 1976.

\bibitem{kraichnan1977eulerian}
R.~H. Kraichnan.
\newblock Eulerian and {L}agrangian renormalization in turbulence theory.
\newblock {\em J. Fluid. Mech}, 83(2):349--374, 1977.

\bibitem{kraichnan1980two}
R.~H. Kraichnan and D.~Montgomery.
\newblock Two-dimensional turbulence.
\newblock {\em Rep. Prog. Phys}, 43(5):547, 1980.

\bibitem{krstulovic2009cascades}
G.~Krstulovic, P.~D. Mininni, M.~E. Brachet, and A.~Pouquet.
\newblock Cascades, thermalization, and eddy viscosity in helical {G}alerkin
  truncated {E}uler flows.
\newblock {\em Phys. Rev. E}, 79(5):056304, 2009.

\bibitem{kubo1966fluctuation}
R.~Kubo.
\newblock The fluctuation-dissipation theorem.
\newblock {\em Rep. Prog. Phys}, 29(1):255, 1966.

\bibitem{lei2016data}
H.~Lei, N.A. Baker, and X.~Li.
\newblock Data-driven parameterization of the generalized {L}angevin equation.
\newblock {\em Proc. Natl. Acad. Sci.}, 113(50):14183--14188, 2016.

\bibitem{lepri2003thermal}
S.~Lepri, R.~Livi, and A.~Politi.
\newblock Thermal conduction in classical low-dimensional lattices.
\newblock {\em Physics reports}, 377(1):1--80, 2003.

\bibitem{maes2020response}
C.~Maes.
\newblock Response theory: a trajectory-based approach.
\newblock {\em arXiv preprint arXiv:2005.10503}, 2020.

\bibitem{majda2005information}
A.~Majda, A.~V. Rafail, and M.~J. Grote.
\newblock {\em Information theory and stochastics for multiscale nonlinear
  systems}, volume~25.
\newblock American Mathematical Soc., 2005.

\bibitem{majda2002statistical}
A.~Majda and I.~Tomofeyev.
\newblock Statistical mechanics for truncations of the {B}urgers-{H}opf
  equation: a model for intrinsic stochastic behavior with scaling.
\newblock {\em Milan Journal of Mathematics}, 70(1):39--96, 2002.

\bibitem{marconi2008fluctuation}
U.~M.~B. Marconi, A.~Puglisi, L.~Rondoni, and A.~Vulpiani.
\newblock Fluctuation--dissipation: response theory in statistical physics.
\newblock {\em Physics reports}, 461(4-6):111--195, 2008.

\bibitem{matkowsky1977exit}
B.~J. Matkowsky and Z.~Schuss.
\newblock The exit problem for randomly perturbed dynamical systems.
\newblock {\em SIAM J. Appl. Math}, 33(2):365--382, 1977.

\bibitem{mattingly2002ergodicity}
J.~C. Mattingly, A.~M. Stuart, and D.~J. Higham.
\newblock Ergodicity for {SDE}s and approximations: locally {L}ipschitz vector
  fields and degenerate noise.
\newblock {\em Stochastic processes and their applications}, 101(2):185--232,
  2002.

\bibitem{maxey1987gravitational}
M.~R. Maxey.
\newblock The gravitational settling of aerosol particles in homogeneous
  turbulence and random flow fields.
\newblock {\em J. Fluid. Mech}, 174:441--465, 1987.

\bibitem{mccomb1990physics}
W.~D. McComb.
\newblock The physics of fluid turbulence.
\newblock {\em Oxford}, 1990.

\bibitem{monin2013statistical}
A.~S. Monin and A.~M. Yaglom.
\newblock {\em Statistical fluid mechanics, volume II: mechanics of
  turbulence}, volume~2.
\newblock Courier Corporation, 2013.

\bibitem{morrison1998hamiltonian}
P.~J. Morrison.
\newblock Hamiltonian description of the ideal fluid.
\newblock {\em Rev. Mod. Phys}, 70(2):467, 1998.

\bibitem{morriss2013statistical}
G.~P. Morriss and D.~J. Evans.
\newblock {\em Statistical Mechanics of Nonequilbrium Liquids}.
\newblock ANU Press, 2013.

\bibitem{moss1989noise}
F.~Moss and P.~V. E.~McClintock (Eds.).
\newblock {\em Noise in nonlinear dynamical systems}, volume~1.
\newblock Cambridge University Press, 1989.

\bibitem{parisi1988statistical}
G.~Parisi.
\newblock {\em Statistical field theory}.
\newblock Addison-Wesley, 1988.

\bibitem{pavliotis2014stochastic}
G.~A. Pavliotis.
\newblock {\em Stochastic processes and applications: diffusion processes, the
  {F}okker-{P}lanck and Langevin equations}, volume~60.
\newblock Springer, 2014.

\bibitem{pope2001turbulent}
S.~B. Pope.
\newblock {\em Turbulent flows}.
\newblock Cambridge University Press, 2000.

\bibitem{rey2002exponential}
L.~Rey-Bellet and L.~E. Thomas.
\newblock Exponential convergence to non-equilibrium stationary states in
  classical statistical mechanics.
\newblock {\em Comm. Math. Phys}, 225(2):305--329, 2002.

\bibitem{Risken}
H.~Risken.
\newblock {\em The {F}okker-{P}lanck equation: methods of solution and
  applications}.
\newblock Springer-Verlag, second edition, 1989.
\newblock Mathematics in science and engineering, vol. 60.

\bibitem{ruelle1978measure}
D.~Ruelle.
\newblock What are the measure describing turbulence?
\newblock {\em Prog. Theor. Phys. Suppl}, 64:339--345, 1978.

\bibitem{ruelle1999smooth}
D.~Ruelle.
\newblock Smooth dynamics and new theoretical ideas in nonequilibrium
  statistical mechanics.
\newblock {\em J. Stat. Phys}, 95(1):393--468, 1999.

\bibitem{ruelle2003there}
D.~Ruelle.
\newblock Is there a unified theory of nonequilibrium statistical mechanics?
\newblock In {\em International Conference on Theoretical Physics}, pages
  489--495. Springer, 2003.

\bibitem{ruelle2012hydrodynamic}
D.~Ruelle.
\newblock Hydrodynamic turbulence as a problem in nonequilibrium statistical
  mechanics.
\newblock {\em Proc. Natl. Acad. Sci.}, 109(50):20344--20346, 2012.

\bibitem{ruelle2014non}
D.~Ruelle.
\newblock Non-equilibrium statistical mechanics of turbulence.
\newblock {\em J. Stat. Phys}, 157(2):205--218, 2014.

\bibitem{snook2006langevin}
I.~Snook.
\newblock {\em The Langevin and generalised Langevin approach to the dynamics
  of atomic, polymeric and colloidal systems}.
\newblock Elsevier, 2006.

\bibitem{young2002srb}
L.~S. Young.
\newblock What are {SRB} measures, and which dynamical systems have them?
\newblock {\em J. Stat. Phys}, 108(5):733--754, 2002.

\bibitem{zakharov2012kolmogorov}
V.~E. Zakharov, V.~S. L'vov, and G.~Falkovich.
\newblock {\em Kolmogorov spectra of turbulence I: {W}ave turbulence}.
\newblock Springer Science \& Business Media, 2012.

\bibitem{zhu2018estimation}
Y.~Zhu, J.~M. Dominy, and D.~Venturi.
\newblock On the estimation of the {M}ori-{Z}wanzig memory integral.
\newblock {\em J. Math. Phys}, 59(10):103501, 2018.

\bibitem{zhu2021effective}
Y.~Zhu and H.~Lei.
\newblock Effective {M}ori--{Z}wanzig equation for the reduced-order modeling
  of stochastic systems.
\newblock {\em arXiv preprint arXiv:2102.01377}, 2021.

\bibitem{zhu2021hypoellipticity3}
Y.~Zhu, H.~Lei, and C.~Kim.
\newblock Generalized second fluctuation-dissipation theorem in the
  nonequilibrium steady state: Theory and applications.
\newblock {\em arXiv preprint arXiv:2104.05222}, 2021.

\bibitem{zhu2018faber}
Y.~Zhu and D.~Venturi.
\newblock Faber approximation of the {M}ori-{Z}wanzig equation.
\newblock {\em J. Comp. Phys.}, (372):694--718, 2018.

\bibitem{zhu2019generalized}
Y.~Zhu and D.~Venturi.
\newblock Generalized {L}angevin equations for systems with local interactions.
\newblock {\em J. Stat. Phys.}, 2020.

\bibitem{zhu2020hypoellipticity}
Y.~Zhu and D.~Venturi.
\newblock Hypoellipticity and the {M}ori--{Z}wanzig formulation of stochastic
  differential equations.
\newblock {\em arXiv preprint arXiv:2001.04565}, 2020.

\bibitem{Justin}
J.~Zinn-Justin.
\newblock {\em Quantum field theory and critical phenomena}.
\newblock Oxford Univ. Press, fourth edition, 2002.

\end{thebibliography}
\end{document}